%% file: main.tex
\documentclass[11pt]{article}
\usepackage{booktabs}
\usepackage{subcaption}

\input{preamble}
\input{macros}

\input{title}

\begin{document}

\maketitle

\input{sec0_intro}
\input{sec1}
\input{sec2}

\input{sec3}

\input{sec4}
\input{sec5_discussion}

\input{Acknowledgments}

\newpage
\appendix
\input{appendixA}
\input{appendixB}
\input{appendixC}

\bibliographystyle{JHEP}
\bibliography{ref}

\end{document}

%% file: preamble.tex
\usepackage{subcaption}

\usepackage{jheppub}

\usepackage{latexsym,cancel,amssymb}
\usepackage{graphicx}
\usepackage{epstopdf}   
\usepackage{epsfig}   
\usepackage{slashed, xcolor}
\usepackage{multirow}
\usepackage{verbatim}
\usepackage{booktabs}
\usepackage{hyperref}

\usepackage{amsfonts}
\usepackage{mathrsfs}
\usepackage{amsmath}
\usepackage{float}
\usepackage{framed}
\usepackage[medium]{titlesec}
\usepackage{bm}
\usepackage[normalem]{ulem}
\usepackage{extarrows}
\usepackage{isodateo}
\usepackage{indentfirst}
\usepackage{blindtext}
\usepackage[export]{adjustbox}
\usepackage{wrapfig}
\usepackage{mathtools}
\usepackage{enumerate}
\usepackage{mdframed} 
\usepackage{physics}
\usepackage{tikz}
\usetikzlibrary{intersections,calc}
\usepackage{amsthm}

%% file: macros.tex
\input{preamble}
\newcommand{\beq}{\begin{equation}}
\newcommand{\eeq}{\end{equation}}
\newcommand{\bea}{\begin{eqnarray}}
\newcommand{\eea}{\end{eqnarray}}

\renewcommand{\d}{{\mathrm{d}}}

\newcommand{\besub}{\begin{subequations}}
\newcommand{\eesub}{\end{subequations}}

\newcommand{\nnn}{\nonumber \\ }

\newcommand{\rmi}{\mathrm{i}}	

\newcommand{\h}{\bar{h}}
\newcommand{\LB}{\hat{L}_B}
\newcommand{\F}{{\cal F}}

\definecolor{darkerblue}{rgb}{0.2,0.2,0.5}
\definecolor{seagreen}{rgb}{0.180392,0.545098,0.341176}
\definecolor{smagenta}{rgb}{0.5,0.145098,0.341176}
\definecolor{deepblue}{rgb}{0,0,1}

\newtheorem{lemma}{Lemma}
\newtheorem{theorem}{Theorem}

\newtheorem{corollary}{Corollary}

%% file: title.tex
\title{Particle productions during collisions of highly boosted bubble walls}

\author[a,b]{Haipeng An}
\author[a]{Hongyi Jiang}
\affiliation[a]{Department of Physics, Tsinghua University, Beijing 100084, China}
\affiliation[b]{Center for High Energy Physics, Tsinghua University, Beijing 100084, China}

\emailAdd{anhp@mail.tsinghua.edu.cn}
\emailAdd{jianghy23@mails.tsinghua.edu.cn}

\abstract{

We investigate the production of particles much heavier than the characteristic scale of a cosmological first-order phase transition through collisions of highly boosted bubble walls. Using the scalar order-parameter field, we derive the ultraviolet behavior of its Fourier-space profile for both elastic and inelastic collisions. In the regime $\chi\equiv\omega^2-\mathbf{k}^2\gg M_h^2$, we find the universal result
$\tilde{\phi}(\chi) = -2V^\prime(2v_\phi)\chi^{-2}+O(\chi^{-3}),$
implying that the spectral density scales as $F(\chi)\propto [V^\prime(2v_\phi)]^2\chi^{-4}$. Thus, heavy-particle production is localized near the instant of collision and, at leading order, depends on the scalar potential only through $V^\prime(2v_\phi)$. We verify this behavior using high-precision numerical solutions of the trapping equation, carefully suppressing spectral leakage from the finite integration domain, and obtain agreement over a broad ultraviolet range. We then derive analytical production rates for general heavy-particle thresholds and for fermion pairs, together with their cosmological number density and yield. Finally, we extend the analysis to $(3+1)$ dimensions and incorporate the finite bubble radius, finding an order-one suppression relative to the parallel-wall approximation. Our results revise the ultraviolet scaling used in previous treatments and have direct implications for superheavy dark-matter production and baryogenesis from bubble collisions.

}

%% file: sec0_intro.tex
\newpage

\section{Introduction}

A cosmological first-order phase transition (FOPT) is among the most dramatic phenomena that may have occurred in the early Universe. The nucleation, expansion, and subsequent collision of vacuum bubbles constitute a highly dynamical, far-from-equilibrium process. Bubble collisions can generate a stochastic gravitational-wave background~\cite{Witten:1984rs}, while the accompanying departure from thermal equilibrium can provide the conditions required for successful baryogenesis or leptogenesis~\cite{Sakharov:1967dj}. Such collisions may also produce superheavy dark matter in the early Universe~\cite{An:2022toi,Freese:2023fcr,Cheng:2026npt}.

In scenarios involving runaway bubbles, the Lorentz factor $\gamma_w$ of the bubble walls at the time of collision can be much larger than unity. For phase transitions driven by the decay of a metastable vacuum, the boosted energy scale $m_\phi\gamma_w$ may even approach the Planck scale~\cite{Freese:2023fcr,Casey:2024jep,An:2026ghw,An:2026hiq}, where $m_\phi$ denotes the characteristic mass scale of the phase-transition sector. The enormous energy concentrated in these highly localized collisions can therefore produce particles far heavier than the intrinsic scale of the transition. Such particles are compelling candidates for superheavy dark matter~\cite{Falkowski:2012fb,Iso:2017uuu,Hambye:2018qjv,Baratella:2018pxi,Baldes:2021aph,Giudice:2024tcp,Ai:2024ikj,Shakya:2025qpi,Fairbairn:2026mck}. Alternatively, they may be heavy right-handed neutrinos whose subsequent out-of-equilibrium decays generate a lepton asymmetry, which is later converted into the baryon asymmetry of the Universe through electroweak sphaleron processes~\cite{Konstandin:2011ds,Katz:2016adq,Cataldi:2024pgt,Datta:2024tne,Shakya:2025qpi}.

The dynamics of the phase transition can be described in terms of the scalar order-parameter field $\phi$. As discussed in detail in Sec.~\ref{subsec:PP}, the production rate of particles with masses greater than $m$ is proportional to
\begin{equation}
    \int_{\omega^2-\mathbf{k}^2>m^2}
    \frac{\d^3\mathbf{k}\,\d\omega}{(2\pi)^4}\,
    \left|\tilde{\phi}(\mathbf{k},\omega)\right|^2
    \operatorname{Im}
    \left[
        \tilde{\Gamma}^{(2)}
        \left(\omega^2-\mathbf{k}^2\right)
    \right],
\label{eq:prod_rate}
\end{equation}
where $\tilde{\Gamma}^{(2)}$ is the Fourier transform of the two-point effective action. This quantity can typically be evaluated perturbatively using standard Feynman-diagram techniques. By contrast, the Fourier-transformed field configuration $\tilde{\phi}(\mathbf{k},\omega)$ encodes the nonperturbative bubble dynamics and generally cannot be obtained through perturbation theory. In particular, its high-energy, or equivalently high-frequency, tail controls the production of particles whose masses greatly exceed the characteristic energy scale of the phase transition. In this work, we determine this high-energy spectrum systematically using analytical approximations supplemented by high-precision numerical simulations.

Before the collision, each bubble wall propagates ultrarelativistically, with $\gamma_w\gg1$. In this regime, the local profile of an isolated bubble wall is well approximated by a function of the spacetime interval
\begin{align}
    s=\sqrt{t^2-\mathbf{x}^2}.
\end{align}
The corresponding local Lorentz invariance implies that an isolated bubble cannot efficiently produce particles whose masses greatly exceed the characteristic scale that determines the wall thickness in its rest frame. Equivalently, the Fourier transform of the single-bubble configuration has negligible support in the high-invariant-mass regime
\begin{align}
    \chi\coloneqq\omega^2-\mathbf{k}^2\gg M_h^2.
\end{align}

After the bubble walls collide, the field evolution falls into two qualitatively distinct regimes:
\begin{itemize}
    \item \textbf{Inelastic collisions:}
    The vacuum-energy difference is sufficiently large that, after the collision, the field in each energetic shell oscillates around the true vacuum.

    \item \textbf{Elastic collisions:}
    The vacuum-energy difference is relatively small, and the field oscillates around the false vacuum after the collision.
\end{itemize}
The post-collision profiles in these two regimes are shown in the third panels of the upper and lower rows of Fig.~\ref{fig:profile1}, respectively. For comparison, the second panels show the corresponding field configurations at the instant when the two bubble walls first come into contact.

In both regimes, the collision generates substantial high-frequency contributions to $\tilde{\phi}(\mathbf{k},\omega)$ at invariant masses
\begin{align}
    \omega^2-\mathbf{k}^2\gg m_\phi^2.
\end{align}
The highly boosted walls produce sharp, narrow peaks in the field profile, as shown in the middle panels of Fig.~\ref{fig:profile1}, whose widths are much smaller than the Compton wavelength $m_\phi^{-1}$. Similar short-distance structures persist in the post-collision oscillations. Accurately determining the high-frequency behavior of $\tilde{\phi}(\mathbf{k},\omega)$ therefore requires a detailed understanding of the field dynamics within these localized transition regions.

In this paper, we present a systematic analysis of the spectrum of $\tilde{\phi}$. We show that the spectral density
\begin{equation}
    F(\chi)
    =
    \int
    \frac{\d^3\mathbf{k}\,\d\omega}{(2\pi)^4}\,
    \left|\tilde{\phi}(\mathbf{k},\omega)\right|^2
    \delta\left(\omega^2-\mathbf{k}^2-\chi\right)
\label{eq:spectral_density}
\end{equation}
scales as $\chi^{-4}$ in the high-energy regime $\chi\gg m_\phi^2$ for both elastic and inelastic collisions. This scaling differs from previous results in the literature~\cite{Konstandin:2011ds,Mansour:2023fwj} and significantly modifies the predictions of baryogenesis and dark-matter scenarios based on collisions of highly boosted bubble walls.

While this work was being completed, Ref.~\cite{Ghoshal:2026pew} appeared, presenting a Feynman-like parton-model study of particle production in bubble collisions. In that work, the process is interpreted as collisions between particles associated with the two bubble walls, and heavy-particle production is evaluated perturbatively using approximate wall configurations as initial states. By contrast, our framework yields the analytical production rates given in Eqs.~\eqref{eq:NA_heavy}, \eqref{eq:NA_general_approx}, and \eqref{eq:NA_final}. These expressions apply, in principle, to a broad class of potentials when the mass of the produced particle is much larger than the characteristic energy scale of the phase-transition sector. In particular, we show that the leading heavy-particle production rate depends only on $V^\prime(2v_\phi)$, where $V^\prime\equiv\partial V/\partial\phi$ and $v_\phi$ is the vacuum expectation value of $\phi$ in the true vacuum. This result has a simple physical interpretation: superheavy particles are produced predominantly at the instant when the bubble walls collide, at which point the field temporarily reaches values near $\phi\simeq2v_\phi$. We further solve the trapping equation~\eqref{eq:trapping} numerically and directly compute the spectrum $\tilde{\phi}$, finding good agreement with our analytical predictions. Our results are also qualitatively consistent with the results of Ref.~\cite{Ghoshal:2026pew}.

The remainder of this paper is organized as follows. In Sec.~\ref{sec:analytical}, we introduce the phase-transition model and analyze bubble collisions in $(1+1)$ dimensions, deriving explicitly the $\chi^{-2}$ scaling of $\tilde{\phi}$ in the regime $\chi\gg m_\phi^2$. In Sec.~\ref{sec:numerical}, we formulate the $(1+1)$-dimensional lattice framework used to simulate two-bubble collisions, and show that the corresponding numerical results agree with the analytical result in Sec.~\ref{sec:analytical}. In Sec.~\ref{sec:production}, we calculate the production rates for different particle physics models. In Sec.~\ref{sec:particle}, we generalize our analysis to $(3+1)$-dimensional Minkowski spacetime and evaluate the associated geometric correction factors. Finally, in Sec.~\ref{sec:discussion}, we compare our formalism with previous treatments in the literature and discuss the implications of our results.

\begin{figure}[htbp]
    \centering
    \includegraphics[width=0.98\linewidth]{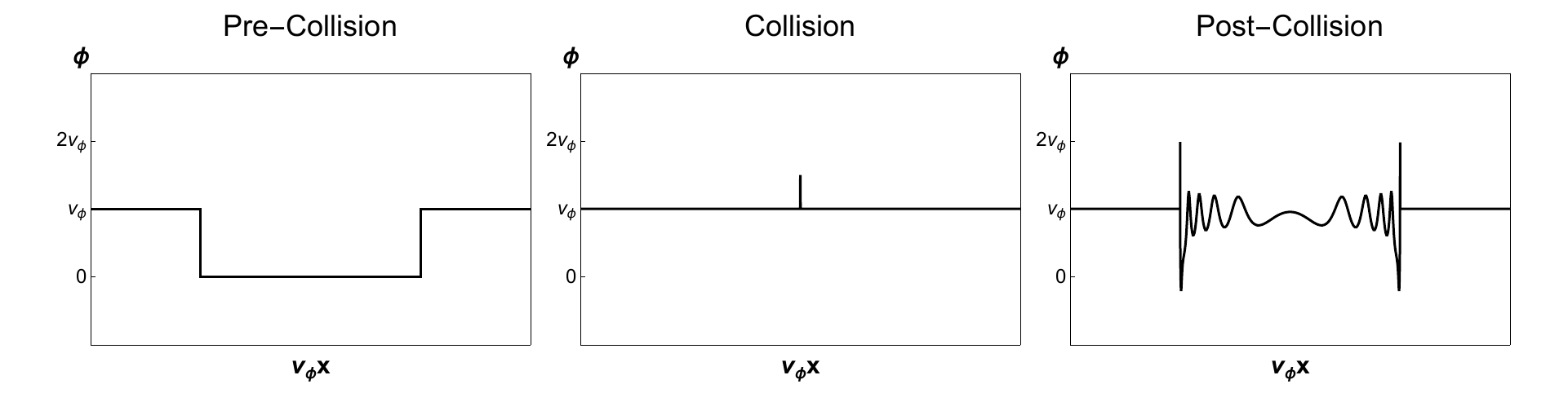}
    \includegraphics[width=0.98\linewidth]{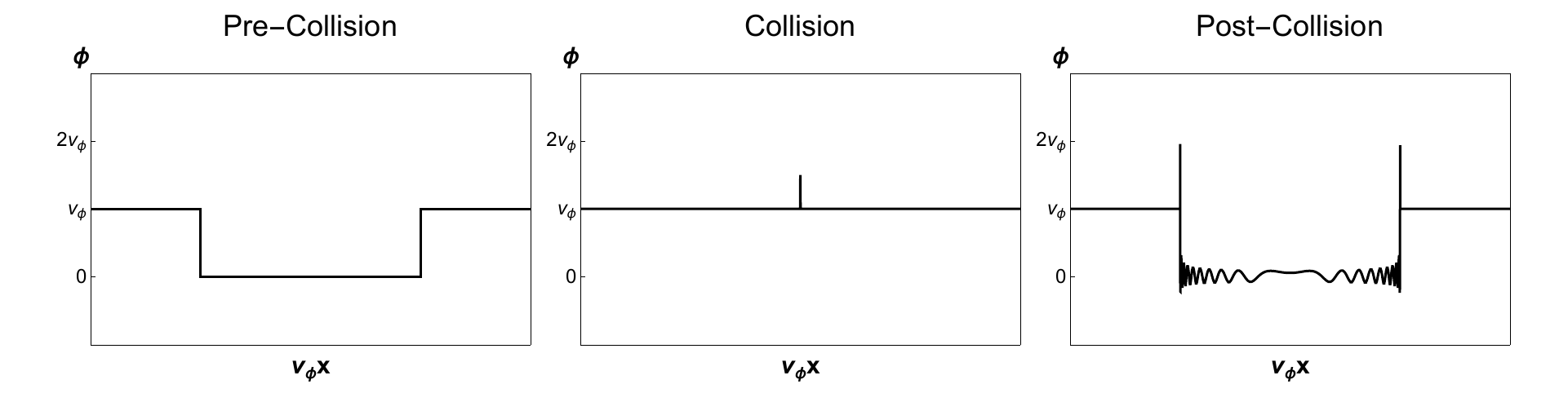}
    \caption{Field configurations during inelastic (upper row) and elastic (lower row) bubble-wall collisions. The panels illustrate the evolution from the approaching walls, through their initial contact, to the post-collision field dynamics.}
    \label{fig:profile1}
\end{figure}

%% file: sec1.tex
\section{Analytical Analysis of Particle Production Rate During Bubble Wall Collisions}
\label{sec:analytical}

\subsection{Generic Field Configuration for Two-Bubble Collisions in \texorpdfstring{$1+1$}{}D}

We consider a scalar field $\phi$ with a quartic potential $V(\phi)$ in $1+1$D Minkowski spacetime. The Lagrangian density is given by
\begin{equation}
\label{eq:Lag}
    \mathcal{L} = \frac{1}{2} (\partial_t \phi)^2 - \frac{1}{2} (\partial_x \phi)^2 - V(\phi) \, . 
\end{equation}
Since we are interested in theories exhibiting FOPT, we assume that the potential possesses two minima, located at $\phi = 0$ and $\phi = v_\phi$, separated by a potential barrier. At the classical level, an overall scaling factor in the potential can always be absorbed by a suitable redefinition of the field and coordinates. Following Ref.~\cite{Jinno:2019bxw}, the generic form of the polynomial potential with up to quartic interactions can be written as
\begin{equation}
\label{eq:potential}
    V(\phi) = a v_\phi^2 \phi^2 - (2 a + 4) v_\phi \phi^3 + (a+3) \phi^4 \, .
\end{equation}
The analytical analysis in this section is independent of the detailed form of the potential. Eq.~\eqref{eq:potential} is used in the numerical study in Sec.~\ref{sec:numerical}.

The evolution of highly boosted bubble walls can be described by the trapping equation~\cite{Jinno:2019bxw}. Inside the future light cone, the equation of motion derived from the Lagrangian~\eqref{eq:Lag} simplifies to
\begin{equation}
\label{eq:trapping}
    \partial_s^2 \phi + \frac{1}{s} \partial_s \phi = - V^\prime(\phi) \, , 
\end{equation}
where $s = \sqrt{t^2 - x^2}$ denotes the proper time of the bubble wall. The initial conditions are specified as $\phi(0) = 2 v_\phi$ and $\partial_s \phi(0) = 0$.

Let the solution to Eq.~\eqref{eq:trapping} be parametrized as $\phi(s) = v_\phi + v_\phi h(s)$, with $h(0)=1$ and $h^\prime(0)=0$ (as shown in Fig.~\ref{fig:profile1}). The full spacetime configuration of $\phi$ for the collisions of two bubbles is then expressed as
\begin{equation}
\label{eq:phitx}
    \phi(t,x) = v_\phi \left[ 1 - \theta(t^2-x^2) \theta(-t) + h(s) \theta(t^2 - x^2) \theta(t) \right] \, . 
\end{equation}

\subsection{Particle Production During Bubble Collisions in \texorpdfstring{$1+1$}{}D}
\label{subsec:PP}

The surface number density of particles produced by bubble collisions is given by~\cite{Watkins:1991zt,Konstandin:2011ds,Falkowski:2012fb,Mansour:2023fwj}:
\begin{equation}
\label{eq:particle_production}
    \frac{N}{A} = \frac{1}{2 \pi^2} \int \d \chi \, \mathrm{Im}\left[\tilde{\Gamma}^{(2)}(\chi)\right] f(\chi) \, , 
\end{equation}
where 
\begin{equation}
\label{eq:f}
    f(\chi) = \int \delta(\omega^2 - k^2 - \chi) |\tilde{\phi}(\omega,k)|^2 \, \mathrm{d}k \, \mathrm{d}\omega \, , 
\end{equation}
and $\tilde\phi(\omega, k)$ is the two-dimensional Fourier transform of $\phi(t,x)$.

Let $l_w$ denote the wall width in its rest frame, meaning that the wall thickness immediately prior to collision is $l_w/\gamma_w$. Consequently, we expect the Fourier transform $\tilde\phi$ to decay rapidly in the high-momentum/high-frequency region where $\omega, k > \gamma_w/l_w$. Introducing physical cutoffs into the momentum integrals in Eq.~\eqref{eq:f}, we obtain
\begin{align}
\label{eq:fchi}
    f(\chi) &= \int \theta(\gamma_w^2 / l_w^2 - \omega^2) \theta(\gamma_w^2 / l_w^2 - k^2) \delta(\omega^2 - k^2 - \chi) |\tilde{\phi}(\omega,k)|^2 \, \mathrm{d}k \, \mathrm{d}\omega \nonumber \\ 
    &= \int \theta(\gamma_w^2 / l_w^2 - \chi - k^2) \left|\tilde{\phi}\left(\sqrt{\chi + k^2},k\right)\right|^2 \, \frac{\mathrm{d}k}{\sqrt{k^2 + \chi}} \, ,
\end{align}
where we have exploited the reflection symmetry $|\tilde{\phi}(\omega,k)| = |\tilde{\phi}(-\omega,-k)|$.

Since the field configuration $\phi(t,x)$ in Eq.~\eqref{eq:phitx} is explicitly Lorentz invariant under proper Lorentz transformations, its Fourier transform $\tilde\phi$ depends on $\omega$ and $k$ solely through the invariant combination $\chi = \omega^2 - k^2$. Thus, $f(\chi)$ simplifies to
\begin{align}
\label{eq:f0}
    f(\chi) &= |\tilde{\phi}(\chi)|^2 \int \theta(\gamma_w^2 / l_w^2 - \chi - k^2) \, \frac{\mathrm{d}k}{\sqrt{k^2 + \chi}} \nonumber \\ 
    &= |\tilde{\phi}(\chi)|^2 \log\left[\frac{2 \gamma_w^2 / l_w^2 - \chi + 2 (\gamma_w / l_w) \sqrt{\gamma_w^2 / l_w^2 - \chi}}{\chi}\right] \theta(\gamma_w^2 / l_w^2 - \chi) \, . 
\end{align}
Because the two-point function $\tilde{\Gamma}^{(2)}(\chi)$ has support only on the physical upper light cone where $\omega^2 - k^2 > 0$, we restrict our analysis of $\tilde\phi(\chi)$ to the regime $\chi > 0$.

\subsection{Spectrum of \texorpdfstring{$\phi$}{}} 
\label{subsection:spectrum_phi}

To evaluate $f(\chi)$, we must compute the Fourier transform of the field configuration $\phi(t,x)$. We decompose the field as
\begin{equation}
    \phi(t,x) = v_\phi + \phi_1(t,x) + \phi_2(t,x) \, ,
\end{equation}
where
\begin{align}
    \phi_1(t,x) &= - v_\phi \theta(t^2-x^2) \theta(-t) \, , \\ 
    \phi_2(t,x) &= v_\phi h(s) \theta(t^2-x^2) \theta(t) \, .
\end{align}
The Fourier transform is then given by
\begin{equation}
    \tilde{\phi}(\omega,k) = \int \mathrm{d}^2 x \, \phi(t,x) \mathrm{e}^{\mathrm{i} \omega t - \mathrm{i} k x} 
    = (2 \pi)^2 v_\phi \delta(\omega) \delta(k) + \tilde{\phi}_1(\omega,k) + \tilde{\phi}_2(\omega,k) \, .
\end{equation}
The term $\tilde\phi_1(\omega,k)$ can be integrated directly to yield
\begin{equation}
\label{eq:phi1}
    \tilde{\phi}_1(\omega,k) = \frac{2 v_\phi}{(\omega - \mathrm{i} \epsilon)^2 - k^2} \, ,
\end{equation}
which coincides with the advanced propagator of a massless scalar field, reflecting the highly boosted initial state.

Similarly, we obtain the explicit expression for $\tilde\phi_2(\omega,k)$:
\begin{equation}
\label{eq:phi2}
    \tilde{\phi}_2(\omega,k) = v_\phi \int_0^\infty 2 s \, \mathrm{d}s \, h(s) \mathrm{K}_0 \left[ -\mathrm{i} s \sqrt{(\omega + \mathrm{i} \epsilon)^2 - k^2} \right] \, , 
\end{equation}
where $\mathrm{K}_0$ is the modified Bessel function of the second kind. The branch of the square root is chosen such that $\mathrm{Im} \left[ \sqrt{(\omega + \mathrm{i} \epsilon)^2 - k^2}\right] > 0$, ensuring the retarded nature of the propagation.

The trapping equation~\eqref{eq:trapping} implies that the profile function $h(s)$ satisfies
\begin{equation}
    \partial_s^2 h(s) + \frac{1}{s} \partial_s h(s) = - v_\phi^{-1} V^\prime(v_\phi + v_\phi h(s)) \, . 
\end{equation}
This indicates that $h(s)$ behaves as a damped oscillator, with friction inversely proportional to the proper time $s$, rendering the system asymptotically stable. Consequently, in the $s \to \infty$ limit, $V^\prime(v_\phi + v_\phi h(s))$ approaches $V^\prime(v_\infty) = 0$, where $v_\infty$ represents the vacuum expectation value after collision ($v_\infty = v_T$ for the inelastic case and $v_F$ for the elastic case). For the potential in Eq.~\eqref{eq:potential}, we have $v_T = v_\phi$ and $v_F = 0$.

This asymptotic stability allows us to evaluate the $s$-integral in Eq.~\eqref{eq:phi2} using a systematic asymptotic expansion. At leading order, since $V^\prime(v_\infty) = 0$, a linear expansion gives $V^\prime(\phi) \approx V^{\prime\prime}(v_\infty) (v_\phi + v_\phi h(s) - v_\infty)$. Defining the oscillation profile $\h(s)$ via
\begin{equation}
\label{eq:h0}
    v_\phi \h(s) \equiv v_\phi + v_\phi h(s) - v_\infty \, ,
\end{equation}
we can describe both elastic and inelastic collisions within a unified framework, where $v_\phi \h(s)$ parametrizes the late-time oscillations of $\phi$ around $v_\infty$. In the vicinity of $v_\infty$, the evolution of $\h$ is governed by
\begin{equation}
    \partial_s^2 \h(s) + \frac{1}{s}\partial_s \h(s) + M_h^2 \h(s) = 0 \, ,
\end{equation}
where the effective mass is $M_h^2 \equiv V^{\prime\prime}(v_\infty)$. For the potential~\eqref{eq:potential}, we have
\begin{equation}
    M_h^2 = 2 a v_\phi^2 - 12(a+2) v_\phi v_\infty + 12(a+3)v_\infty^2 \, .
\end{equation}
At large $s$, the amplitude of the envelope decays as $s^{-1/2}$, yielding the asymptotic behavior $\h(s) \sim s^{-1/2} \cos(M_h s + \varphi)$. We can decompose $V^\prime(\phi(s))$ as
\begin{equation}
\label{eq:Vp}
    V^\prime(\phi(s)) = V^{\prime\prime}(v_\infty) v_\phi \h(s) + v_\phi \Delta(\h(s)) \, ,
\end{equation}
where $v_\phi \Delta(\h(s))$ collects the nonlinear terms. For the potential~\eqref{eq:potential}, this is given by
\begin{equation}
    \Delta(\h) = 3 C_1 \h^2 + 4 C_2 \h^3 \, ,
\end{equation}
with the coefficients
\begin{align}
    C_1 &= -(2a+4) v_\phi^2 + 4(a+3) v_\phi v_\infty \, , \nonumber \\
    C_2 &= (3+a) v_\phi^2 \, .
\end{align}
Incorporating these nonlinear interactions, the full equation of motion for $\h$ becomes
\begin{equation}
\label{eq:h02}
    \left(\partial_s^2 + \frac{1}{s} \partial_s + M_h^2 \right) \h(s) = - \Delta(\h(s)) \, .
\end{equation}

Now, let us return to the evaluation of $\tilde\phi$. Note that the modified Bessel function $\mathrm{K}_0$ is an eigenfunction of the Bessel operator,
\begin{equation}
    \hat{L}_{\rm B} \equiv \partial_s^2 + \frac{1}{s}\partial_s \, ,
\end{equation}
satisfying
\begin{equation}
    \hat{L}_{\rm B} \mathrm{K}_0\left[ -\mathrm{i} s \sqrt{(\omega+\mathrm{i}\epsilon)^2-k^2} \right] = \left[-(\omega+\mathrm{i}\epsilon)^2+k^2\right] \mathrm{K}_0\left[ -\mathrm{i} s \sqrt{(\omega+\mathrm{i}\epsilon)^2-k^2} \right] \, .
\end{equation}
Consequently, we can rewrite $\tilde{\phi}_2(\omega,k)$ in Eq.~\eqref{eq:phi2} as
\begin{equation}
    \tilde{\phi}_2(\omega,k) = \int_0^\infty 2 s \, \mathrm{d}s \, v_\phi h(s) \frac{(\hat{L}_{\rm B} + M_h^2)}{M_h^2 - (\omega + \mathrm{i} \epsilon)^2 + k^2} \mathrm{K}_0 \left[ -\mathrm{i} s \sqrt{(\omega + \mathrm{i} \epsilon)^2 - k^2} \right] \, .
\end{equation}
Integrating by parts twice yields
\begin{align}
\label{eq:phi3}
    \tilde{\phi}_2(\omega,k) &= \int_0^\infty 2 s \, \mathrm{d}s \, v_\phi \left[\frac{(\hat{L}_{\rm B} + M_h^2)h(s)}{M_h^2 - (\omega + \mathrm{i} \epsilon)^2 + k^2}\right] \mathrm{K}_0 \left[ -\mathrm{i} s \sqrt{(\omega + \mathrm{i} \epsilon)^2 - k^2} \right] \nonumber \\
    &\quad + \frac{2 v_\phi \left[ s \partial_s h(s) \mathrm{K}_0(-\mathrm{i}s\sqrt{(\omega+\mathrm{i}\epsilon)^2 - k^2}) - s h(s) \partial_s \mathrm{K}_0(-\mathrm{i} s \sqrt{(\omega+\mathrm{i}\epsilon)^2 - k^2})\right]_0^\infty}{M_h^2 - (\omega+\mathrm{i}\epsilon)^2 + k^2} \, .
\end{align}
Using the small-argument limit $\partial_s \mathrm{K}_0(-\mathrm{i}s\sqrt{(\omega+\mathrm{i}\epsilon)^2 - k^2}) \to -s^{-1}$ as $s \to 0$, along with the boundary conditions $h(0)=1$ and $h^{\prime}(0)=0$, the boundary term at $s=0$ simplifies to $2 v_\phi$. 

From the definition of $\h(s)$ in Eq.~\eqref{eq:h0} and its equation of motion~\eqref{eq:h02}, we find
\begin{equation}
    v_\phi(\hat{L}_{\rm B} + M_h^2) h(s) = M_h^2(v_\infty - v_\phi) - v_\phi \Delta(\h(s)) \, .
\end{equation}
Using the identity $\int_0^\infty s \, \mathrm{d}s \, \mathrm{K}_0(-\mathrm{i} s \sqrt{\chi}) = -\chi^{-1}$ (which holds when $\chi$ has a positive imaginary part), we arrive at
\begin{align}
    \tilde\phi_2(\omega,k) &= \frac{2 v_\phi}{M_h^2 - (\omega+\mathrm{i}\epsilon)^2 + k^2} + \frac{2 M_h^2(v_\phi - v_\infty)}{\left[M_h^2-(\omega+\mathrm{i}\epsilon)^2 + k^2\right]\left[(\omega+\mathrm{i}\epsilon)^2-k^2\right]} \nonumber \\
    &\quad - \int_0^\infty 2s \, \mathrm{d}s \, \frac{v_\phi \Delta(\h(s))}{M_h^2 - (\omega+\mathrm{i}\epsilon)^2 +k^2} \mathrm{K}_0\left[ -\mathrm{i} s\sqrt{(\omega+\mathrm{i}\epsilon)^2-k^2} \right]\, .
\end{align}
Combining this with $\tilde\phi_1$ from Eq.~\eqref{eq:phi1}, we obtain the full spectrum:
\begin{align}
\label{eq:phiT}
    \tilde\phi(\omega,k) &= (2\pi)^2 v_\phi \delta(\omega) \delta(k) + v_\phi \frac{8 \mathrm{i} \epsilon \omega}{\left[(\omega-\mathrm{i}\epsilon)^2 - k^2 \right]\left[(\omega+\mathrm{i}\epsilon)^2 - k^2\right]}
    \nonumber \\
    &\quad - \frac{2 M_h^2 (2 v_\phi - v_\infty)}{\left[(\omega+\mathrm{i}\epsilon)^2 - k^2 - M_h^2\right]\left[(\omega+\mathrm{i}\epsilon)^2 - k^2\right]} + \frac{v_\phi I_0(\omega,k)}{(\omega+\mathrm{i}\epsilon)^2 - k^2 - M_h^2} \, ,
\end{align}
where
\begin{equation}
\label{eq:I}
    I_0(\omega,k) = \int_0^\infty 2 s \, \mathrm{d}s \, \Delta(\h(s)) \mathrm{K}_0\left[ -\mathrm{i} s \sqrt{(\omega+\mathrm{i}\epsilon)^2-k^2} \right] \, .
\end{equation}
Eq.~\eqref{eq:phiT} is valid for both elastic and inelastic collisions, reducing to the expression in Ref.~\cite{Falkowski:2012fb} in the limit $\Delta = 0$ and $\gamma_w \to \infty$.

The nonlinear interactions contribute to $\tilde\phi$ through the integral $I_0(\omega, k)$. Since $\Delta(\h)$ starts at order $\h^2$, it is an oscillating function of $s$ whose amplitude decays at least as $s^{-1}$. The modified Bessel function $\mathrm{K}_0$ decays asymptotically as $s^{-1/2}$. Consequently, the $s$-integral is convergent, except at resonances where the oscillation frequency of $\mathrm{K}_0(-\mathrm{i} s \sqrt{\chi})$ matches the frequencies present in $\Delta(\h(s))$. Because $\h(s)$ oscillates with a fundamental frequency $M_h$ as $s \to \infty$, resonances are expected at $\sqrt{\chi} = n M_h$ for integer $n$. For $n \geq 4$, the decay of $\Delta(\h(s)) \sim \mathcal{O}(s^{-n/2})$ is rapid enough to render the $s$-integral absolutely convergent. Thus, the physical resonances are dominated by the low-order terms ($\h^2$ and $\h^3$) in the polynomial expansion of $\Delta(\h)$. The $\h^2$ term drives resonances at $n=0$ and $n=2$ (producing poles at $\chi = 0$ and $\chi = 4 M_h^2$), while the $\h^3$ term drives resonances at $n=1$ and $n=3$ (producing poles at $\chi = M_h^2$ and $\chi = 9 M_h^2$). In an expanding universe, these poles are regularized by the Hubble parameter, and in $3+1$-dimensional spacetime, they are regularized by the finite bubble radius.

\subsection{Particle Production Rate in the Ultra-High Frequency Regime (\texorpdfstring{$\chi \gg M_h^2$}{})} 
\label{subsection:highfreq}

From Eq.~\eqref{eq:phiT}, we observe that when nonlinear interactions are neglected, $\tilde\phi$ scales as $\chi^{-2}$ in the deep ultraviolet (UV) regime ($\chi \gg M_h^2$). To assess the impact of nonlinearities, we examine the behavior of $I_0(\omega,k)$ in the same limit. Because the spectrum is free of poles in this regime, we can safely drop the $\mathrm{i}\epsilon$ regulator in the Bessel function argument, meaning $I_0$ depends solely on $\chi$:
\begin{equation}
    I_0(\chi) = \int_0^\infty 2 s \, \mathrm{d}s \, \Delta(\h(s)) \mathrm{K}_0(-\mathrm{i} s \sqrt{\chi}) \, .
\end{equation}

By utilizing the relation $\hat{L}_{\rm B} \mathrm{K}_0(-\mathrm{i} s \sqrt{\chi}) = -\chi \mathrm{K}_0(-\mathrm{i} s \sqrt{\chi})$ and performing integration by parts, we find
\begin{equation}
\label{eq:I_parts}
    I_0(\chi) = - \frac{2\Delta(\h(0))}{\chi} - \frac{1}{\chi} I_1(\chi) \, ,
\end{equation}
where
\begin{equation}
    I_1(\chi) = \int_0^\infty 2 s \, \mathrm{d}s \, \mathrm{K}_0(-\mathrm{i} s \sqrt{\chi}) \hat{L}_{\rm B} \Delta(\h(s)) \, .
\end{equation}
Since $\h(s)$ oscillates with frequency $M_h$ and an envelope of $\mathcal{O}(s^{-1/2})$, the action of the Bessel operator $\hat{L}_{\rm B}$ on $\Delta(\h(s))$ yields terms scaling as $s^{-2}$, $s^{-1} M_h$, or $M_h^2$. The first two cases introduce higher inverse powers of $s$, rendering the integrand absolutely integrable. By the Riemann-Lebesgue lemma, their contributions to $I_1$ are suppressed by negative powers of $\chi$ at high frequencies. The third case introduces a factor of $M_h^2$ without changing the large-$s$ scaling of the integrand. Therefore, by simple power counting, we expect $I_1(\chi)$ to scale identically to $I_0(\chi)$ as $\chi \to \infty$. This implies that a consistent power-counting scheme requires both $I_0(\chi)$ and $I_1(\chi)$ to scale as $\chi^{-1}$ in the large-$\chi$ limit (see Appendix~\ref{appendixB} for a mathematically rigorous proof of this scaling).

By repeatedly applying integration by parts, we generate the systematic asymptotic expansion:
\begin{equation}
    I_n(\chi) = - \frac{2 \hat{L}_{\rm B}^n \Delta(\h(0))}{\chi} - \frac{1}{\chi} I_{n+1}(\chi) \, ,
\end{equation}
which leads to the formal series for $I_0(\chi)$:
\begin{equation}
\label{eq:Iexpand}
    I_0(\chi) = - \frac{2\Delta(\h(0))}{\chi} + \frac{2\hat{L}_{\rm B}\Delta(\h(0))}{\chi^2} - \frac{2\hat{L}_{\rm B}^2\Delta(\h(0))}{\chi^3} + \frac{2\hat{L}_{\rm B}^3\Delta(\h(0))}{\chi^4} - \cdots
\end{equation}

Combining this result with the non-interacting contributions in Eq.~\eqref{eq:phiT}, the UV limit of the spectrum is dominated by:
\begin{equation}
    \tilde\phi(\chi) \approx \frac{1}{\chi^2} \left[ 2 M_h^2(v_\infty - 2 v_\phi) - 2\Delta(\h(0)) \right] \, .
\end{equation}
Using the relations in Eqs.~\eqref{eq:Vp} and \eqref{eq:h0}, this simplifies beautifully to:
\begin{equation}\label{eq:UVlim}
    \tilde\phi(\chi) = - \frac{2 V^\prime(\phi(0))}{\chi^2} + \mathcal{O}(\chi^{-3}) \, ,
\end{equation}
where $\phi(0) = 2 v_\phi$ is the initial field value at the moment of collision. This result is independent of the specific choice of the potential $V(\phi)$, requiring only that the potential admits a valid polynomial expansion.

Finally, substituting this back into Eq.~\eqref{eq:f0}, we obtain the high-frequency particle production spectrum in the regime $M_h^2 \ll \chi \ll \gamma_w^2 / l_w^2$:
\begin{equation}
    f(\chi) \to \frac{4 \left(V^\prime(\phi(0))\right)^2}{\chi^4} \theta(\gamma_w^2 / l_w^2 - \chi) \log{\left[ \frac{2 \gamma_w^2 / l_w^2 - \chi + 2 (\gamma_w / l_w) \sqrt{\gamma_w^2 / l_w^2 - \chi}}{\chi} \right]} \, . 
\end{equation}
This robust result holds universally for both elastic and inelastic bubble wall collisions.

While the derivation of the asymptotic expansion~\eqref{eq:Iexpand} presented in this section is heuristic because the intermediate integrals $I_n(\chi)$ are not absolutely integrable, a mathematically rigorous proof based on the Riemann-Lebesgue lemma is provided in Appendix~\ref{appendixB}.

%% file: sec2.tex
\section{Numerical Simulation of Bubble Collisions}
\label{sec:numerical}

In this section, we numerically solve the trapping equation,
Eq.~\eqref{eq:h02}, for the quartic potential in
Eq.~\eqref{eq:potential}. We then substitute the resulting field
profile into Eq.~\eqref{eq:I} and evaluate the Fourier-space profile
$\widetilde{\phi}(\chi)$ numerically. Several subtleties must be
addressed to obtain physically meaningful and numerically reliable
results. These include the truncation of the $s$-integration domain,
spectral leakage, the resolution of rapidly oscillating modes, and the
numerical treatment of resonant and infrared structures.

\subsection{Truncation of the \texorpdfstring{$s$}{s}-integral and the
\texorpdfstring{$\mathrm{i}\epsilon$}{i-epsilon} prescription}
\label{subsec:iepsilon}

A numerical calculation cannot evolve the field profile over the full
spacetime domain. We therefore solve for $\bar h(s)$ on a finite
interval,
\begin{equation}
    0\leq s\leq L,
\end{equation}
and truncate the integral in Eq.~\eqref{eq:I} at $s=L$. A sharp
truncation introduces an artificial discontinuity at the endpoint and
can contaminate the Fourier spectrum, particularly in the ultraviolet
(UV).

To regulate this endpoint contribution, we restore the
$\mathrm{i}\epsilon$ prescription. In the present context, this is
equivalent to multiplying the integrand by a smooth exponential window
that suppresses the contribution from the artificial boundary at
$s=L$. An arbitrary or naively
chosen window function can generate substantial spectral leakage. Such
leakage can dominate the true UV tail and lead to an incorrect
asymptotic scaling of $\widetilde{\phi}(\chi)$.


Fourier-type observables in simulations of cosmological phase
transitions are necessarily evaluated over a finite spatial or
temporal domain. Truncating an integral at a finite endpoint is
equivalent to multiplying the underlying source by a window function.
A sharp rectangular window produces spurious UV oscillations and
power-law tails, a standard phenomenon known as
\textit{spectral leakage}
\cite{harris1978use,press2007numerical}. If the physical spectrum
decreases more rapidly than the leakage term, the latter eventually
dominates and obscures the true asymptotic behavior.

In the present problem, the analytical result derived in
Sec.~\ref{sec:analytical} predicts
\begin{equation}
    \widetilde{\phi}(\chi)\propto\chi^{-2}
    \qquad
    (\chi\gg M_h^2).
    \label{eq:physical_UV_scaling}
\end{equation}
A hard cutoff at $s=L$, however, produces a spurious contribution
proportional to $\chi^{-1}$, which eventually dominates the physical
$\chi^{-2}$ tail.

This effect can be illustrated by a one-dimensional example. Consider
\begin{equation}
    f(x)=\frac{1}{2a}e^{-a|x|},
    \qquad
    \widetilde f(k)=\frac{1}{k^2+a^2}.
    \label{eq:leakage_example_f}
\end{equation}
Introduce the rectangular window
\begin{equation}
    W_L(x)=
    \begin{cases}
        1, & |x|<L,\\
        0, & |x|>L,
    \end{cases}
\end{equation}
whose Fourier transform is
\begin{equation}
    \widetilde W_L(k)=\frac{2\sin(kL)}{k}.
\end{equation}
The transform of the truncated function is the convolution
\begin{equation}
    \widetilde{fW_L}(k)
    =
    \int\frac{\dd p}{2\pi}\,
    \widetilde f(k-p)\widetilde W_L(p).
    \label{eq:convolution}
\end{equation}
Evaluating the integral gives
\begin{equation}
    \widetilde{fW_L}(k)
    =
    \frac{1}{k^2+a^2}
    \left[
        1-e^{-aL}
        \left(
            \cos(kL)-\frac{k}{a}\sin(kL)
        \right)
    \right].
    \label{eq:fW}
\end{equation}
When
\begin{equation}
    \frac{|k|}{a}e^{-aL}\gtrsim1,
\end{equation}
the second term in Eq.~\eqref{eq:fW} yields a contribution that scales
as $k^{-1}$ rather than the physical $k^{-2}$ behavior. This is a
simple manifestation of spectral leakage caused by the artificial
boundary.

The same mechanism operates in the present calculation. The physical
$\chi^{-2}$ behavior originates from the nonanalyticity of the
collision profile at $s=0$, specifically the discontinuity in its
derivative. By contrast, a hard cutoff introduces an additional,
unphysical discontinuity at $s=L$, producing a $\chi^{-1}$ tail.
Since particle production is associated with the collision of the two
bubble walls, as illustrated in Fig.~\ref{fig:profile1}, any dominant
contribution associated with the arbitrary numerical endpoint must be
discarded as unphysical.

The $\mathrm{i}\epsilon$ prescription replaces the sharp cutoff by an
exponentially decaying window. When the damping exponent at the
boundary is large, as ensured by Eq.~\eqref{eq:N3}, the artificial
endpoint contribution is exponentially suppressed and the physical
UV scaling can be recovered.

\subsection{Finite grid resolution and UV computational cost}
\label{subsec:UVcost}

To resolve oscillations with characteristic frequency $\sqrt{\chi}$,
the grid spacing $d$ must be sufficiently small. We impose the
resolution criterion
\begin{equation}
    \frac{\pi}{d}=N_1\sqrt{\chi},
    \label{eq:N1}
\end{equation}
where $N_1\gg1$ is a numerical safety factor.

Under the $\mathrm{i}\epsilon$ prescription, the frequency is shifted
according to
\begin{equation}
    \sqrt{\chi}
    =\sqrt{\omega^2-k^2}
    \;\longrightarrow\;
    \sqrt{(\omega+\mathrm{i}\epsilon)^2-k^2}
    \simeq
    \sqrt{\chi}
    +\mathrm{i}\frac{\omega\epsilon}{\sqrt{\chi}},
    \label{eq:complex_frequency}
\end{equation}
where terms of order $\epsilon^2$ have been neglected. To prevent the
regulator from significantly modifying the physical spectrum, the
imaginary part must remain small compared with the real part. We
parameterize this requirement as
\begin{equation}
    \omega\epsilon=\frac{\chi}{N_2},
    \qquad N_2\gg1.
    \label{eq:N2}
\end{equation}

At the same time, the endpoint contribution at $s=L$ must be
exponentially suppressed. We therefore require
\begin{equation}
    \left(\frac{\omega\epsilon}{\sqrt{\chi}}\right)L=N_3,
    \qquad N_3\gg1.
    \label{eq:N3}
\end{equation}
Combining Eqs.~\eqref{eq:N1}--\eqref{eq:N3}, the required number of
grid intervals is
\begin{equation}
    N=\frac{L}{d}
    =\frac{N_1N_2N_3}{\pi}.
    \label{eq:Ntotal}
\end{equation}
Thus, in the deep UV and at fixed relative precision, the required
grid size is approximately independent of the physical scale $\chi$.
Increasing $\sqrt{\chi}$ requires a smaller grid spacing, but the
corresponding integration interval decreases by the same factor.

\subsection{Resonant and infrared regimes}
\label{subsec:resonant_IR}

As discussed in Sec.~\ref{subsection:spectrum_phi},
$\widetilde{\phi}(\chi)$ contains physical resonances near
\begin{equation}
    \chi\simeq M_h^2,\qquad
    4M_h^2,\qquad
    9M_h^2.
\end{equation}
These narrow structures are difficult to resolve using a finite
$\mathrm{i}\epsilon$ regulator. The regulator damps the long-range
coherent oscillations responsible for the resonances and therefore
artificially broadens and suppresses the corresponding peaks. A
different numerical treatment is required to reconstruct their widths
and amplitudes reliably.

In the infrared (IR), $\chi<M_h^2$, the grid must resolve the intrinsic
mass scale $M_h$, rather than only the external scale $\sqrt{\chi}$.
The resolution criterion becomes
\begin{equation}
    \frac{\pi}{d}=N_1M_h.
    \label{eq:IRresolution}
\end{equation}
Using the same regulator conditions as above, the required number of
grid intervals scales as
\begin{equation}
    N=\frac{L}{d}
    =\frac{N_1N_2N_3}{\pi}
      \frac{M_h}{\sqrt{\chi}}.
    \label{eq:NIR}
\end{equation}
The computational cost therefore grows as $\chi^{-1/2}$ in the deep
IR.

A quantitatively reliable treatment of the resonant and IR regions
requires a hybrid method capable of isolating and analytically
resumming the long-distance oscillatory contributions. Developing
such a method would require additional mathematical machinery and is
beyond the scope of the present work. We therefore restrict the
numerical analysis below to the UV regime,
\begin{equation}
    \chi\gg M_h^2,
\end{equation}
and defer the detailed IR and resonant spectra to future work.

\subsection{Numerical evaluation of
\texorpdfstring{$\widetilde{\phi}(\chi)$}{phi(chi)} in the UV}
\label{subsec:numerical_UV}

\subsubsection{Simulation setup}
\label{subsubsec:simulation_setup}

For all UV simulations, we use
\begin{equation}    N=2^{24}\simeq1.68\times10^7
\end{equation}
grid intervals. In evaluating $\widetilde{\phi}$, we choose
\begin{equation}
    \epsilon=\frac{100}{L}.
    \label{eq:epsilon_choice}
\end{equation}
Equation~\eqref{eq:N3} then gives
\begin{equation}
    N_3=100\frac{\omega}{\sqrt{\chi}}\geq100,
\end{equation}
since $\omega/\sqrt{\chi}\geq1$. The contribution from the endpoint
$s=L$ is therefore exponentially suppressed.

The mass $M_h$ provides the intrinsic unit of the problem. We perform
three simulations with grid spacings
\begin{equation}
    d=
    2.5\times10^{-4}M_h^{-1},\qquad
    2.5\times10^{-5}M_h^{-1},\qquad
    5.0\times10^{-6}M_h^{-1}.
    \label{eq:grid_spacings}
\end{equation}
Because $L=Nd$, the corresponding integration lengths are
\begin{equation}
    L\simeq
    4.19\times10^3M_h^{-1},\qquad
    4.19\times10^2M_h^{-1},\qquad
    83.9M_h^{-1}.
    \label{eq:integration_lengths}
\end{equation}
Requiring the total numerical uncertainty to remain below approximately
$1\%$, these runs cover the respective ranges
\begin{align}
    6M_h &<\sqrt{\chi}<40M_h, \nonumber\\
    24M_h&<\sqrt{\chi}<400M_h, \nonumber\\
    120M_h&<\sqrt{\chi}<2000M_h.
    \label{eq:chi_ranges}
\end{align}
The overlap among these intervals provides a direct convergence test.


\subsection{Numerical results}
\label{subsec:UV_results}

\begin{table}[htbp]
    \centering
    \caption{Ranges of $\sqrt{\chi}/v_\phi$ covered by the three
    numerical grids in the inelastic and elastic collision regimes.}
    \label{tab:UV_ranges}

    \textbf{Inelastic collisions}\\[0.4em]
    \resizebox{\textwidth}{!}{%
    \begin{tabular}{c|ccc}
        \toprule
        & $6M_h<\sqrt{\chi}<40M_h$
        & $24M_h<\sqrt{\chi}<400M_h$
        & $120M_h<\sqrt{\chi}<2000M_h$\\
        \midrule
        $a=1.0$
        & $2.2\times10^{1}<\sqrt{\chi}/v_\phi<1.5\times10^{2}$
        & $9.0\times10^{1}<\sqrt{\chi}/v_\phi<1.5\times10^{3}$
        & $4.5\times10^{2}<\sqrt{\chi}/v_\phi<7.5\times10^{3}$\\
        $a=3.0$
        & $2.6\times10^{1}<\sqrt{\chi}/v_\phi<1.7\times10^{2}$
        & $1.0\times10^{2}<\sqrt{\chi}/v_\phi<1.7\times10^{3}$
        & $5.1\times10^{2}<\sqrt{\chi}/v_\phi<8.5\times10^{3}$\\
        $a=5.0$
        & $2.8\times10^{1}<\sqrt{\chi}/v_\phi<1.9\times10^{2}$
        & $1.1\times10^{2}<\sqrt{\chi}/v_\phi<1.9\times10^{3}$
        & $5.6\times10^{2}<\sqrt{\chi}/v_\phi<9.4\times10^{3}$\\
        $a=7.0$
        & $3.1\times10^{1}<\sqrt{\chi}/v_\phi<2.0\times10^{2}$
        & $1.2\times10^{2}<\sqrt{\chi}/v_\phi<2.0\times10^{3}$
        & $6.1\times10^{2}<\sqrt{\chi}/v_\phi<1.0\times10^{4}$\\
        \bottomrule
    \end{tabular}}

    \vspace{0.8em}
    \textbf{Elastic collisions}\\[0.4em]
    \resizebox{\textwidth}{!}{%
    \begin{tabular}{c|ccc}
        \toprule
        & $6M_h<\sqrt{\chi}<40M_h$
        & $24M_h<\sqrt{\chi}<400M_h$
        & $120M_h<\sqrt{\chi}<2000M_h$\\
        \midrule
        $a=13.0$
        & $3.1\times10^{1}<\sqrt{\chi}/v_\phi<2.0\times10^{2}$
        & $1.2\times10^{2}<\sqrt{\chi}/v_\phi<2.0\times10^{3}$
        & $6.1\times10^{2}<\sqrt{\chi}/v_\phi<1.0\times10^{4}$\\
        $a=17.0$
        & $3.5\times10^{1}<\sqrt{\chi}/v_\phi<2.3\times10^{2}$
        & $1.4\times10^{2}<\sqrt{\chi}/v_\phi<2.3\times10^{3}$
        & $7.0\times10^{2}<\sqrt{\chi}/v_\phi<1.2\times10^{4}$\\
        $a=21.0$
        & $3.9\times10^{1}<\sqrt{\chi}/v_\phi<2.6\times10^{2}$
        & $1.6\times10^{2}<\sqrt{\chi}/v_\phi<2.6\times10^{3}$
        & $7.8\times10^{2}<\sqrt{\chi}/v_\phi<1.3\times10^{4}$\\
        $a=25.0$
        & $4.2\times10^{1}<\sqrt{\chi}/v_\phi<2.8\times10^{2}$
        & $1.7\times10^{2}<\sqrt{\chi}/v_\phi<2.8\times10^{3}$
        & $8.5\times10^{2}<\sqrt{\chi}/v_\phi<1.4\times10^{4}$\\
        \bottomrule
    \end{tabular}}
\end{table}

\begin{figure}[htbp]
    \centering
    \begin{subfigure}{0.96\textwidth}
        \centering
        \includegraphics[width=\textwidth]{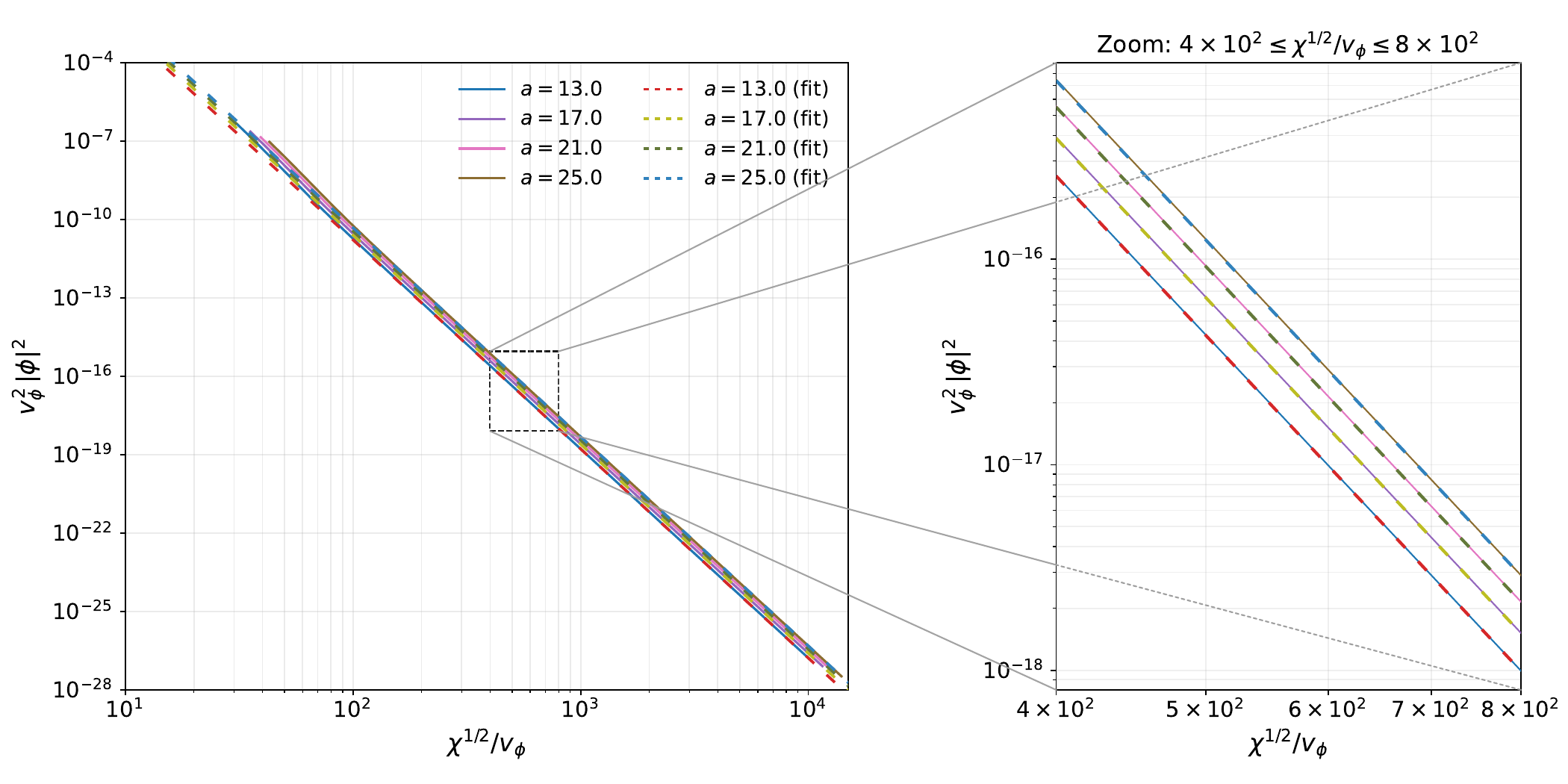}
        \caption{Inelastic collisions.}
        \label{fig:UV_inelastic}
    \end{subfigure}

    \vspace{0.5em}

    \begin{subfigure}{0.96\textwidth}
        \centering
        \includegraphics[width=\textwidth]{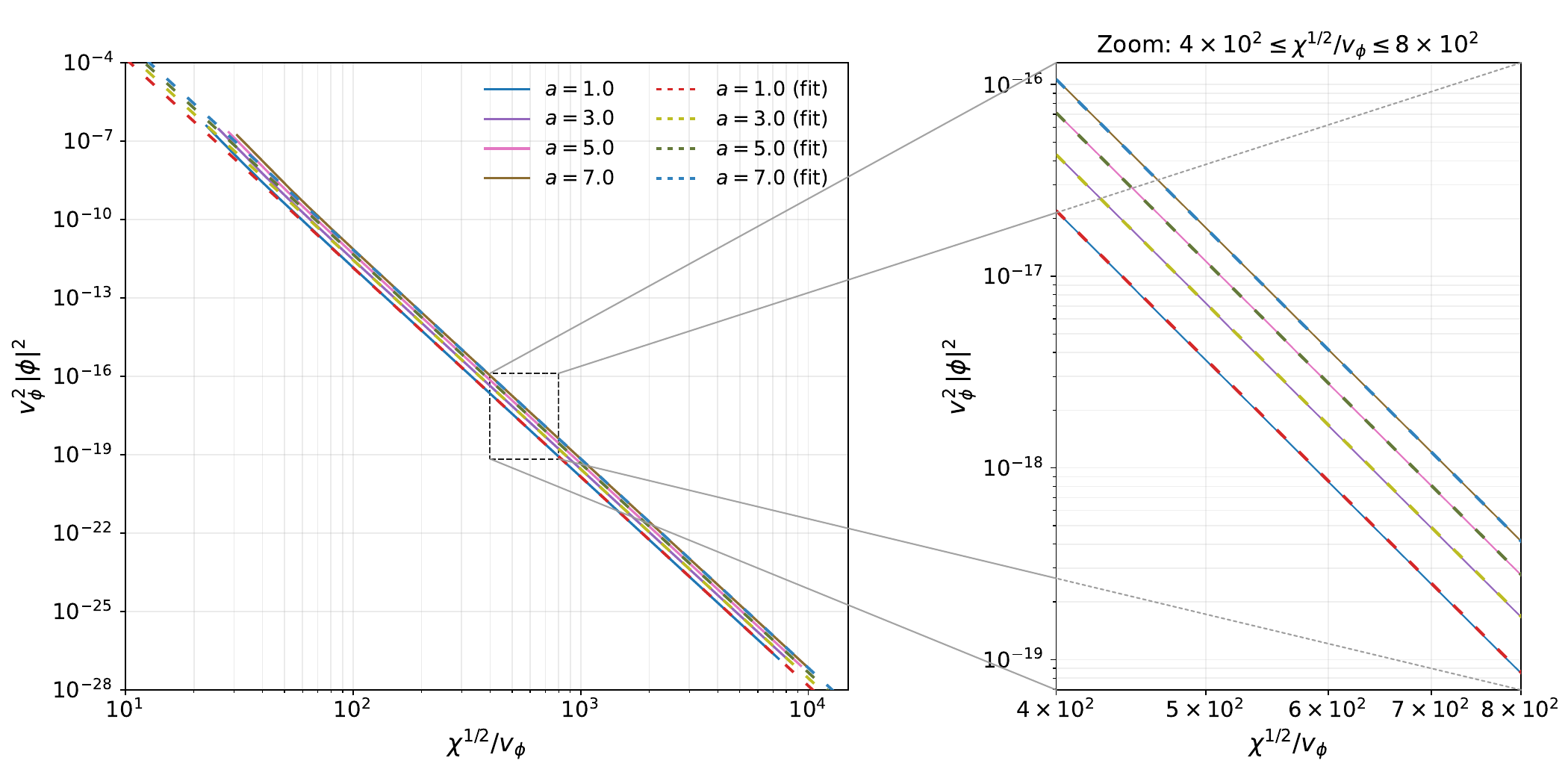}
        \caption{Elastic collisions.}
        \label{fig:UV_elastic}
    \end{subfigure}

    \caption{UV behavior of $\widetilde{\phi}(\chi)$ for
    inelastic (upper panel) and elastic (lower panel) bubble
    collisions. Solid curves show the numerical results for different
    values of the potential parameter $a$, while dashed curves show
    the analytical prediction in Eq.~\eqref{eq:UVlim}. In both cases,
    the numerical results reproduce the expected
    $\widetilde{\phi}\propto\chi^{-2}$ scaling.}
    \label{fig:UV}
\end{figure}

Figure~\ref{fig:UV} presents the numerical results for
$\widetilde{\phi}(\chi)$ in the inelastic and elastic collision
regimes. The solid curves show the numerical results for several
values of the potential parameter $a$ in
Eq.~\eqref{eq:potential}, while the dashed curves show the
corresponding analytical UV predictions from Eq.~\eqref{eq:UVlim}.

For both collision regimes, the numerical spectra agree well with the
analytical prediction when $\chi\gg M_h^2$ and exhibit the expected
scaling
\begin{equation}
    \widetilde{\phi}(\chi)\propto\chi^{-2}.
\end{equation}
The results obtained from the three numerical grids overlap smoothly
in their common domains. This agreement provides a nontrivial check of
the discretization, the treatment of the endpoint regulator, and the
absence of appreciable spectral leakage in the displayed range.

For completeness, Table~\ref{tab:UV_ranges} lists the corresponding
ranges of $\sqrt{\chi}/v_\phi$ for each value of $a$. The agreement
across the three independently simulated intervals further verifies
the $\chi^{-2}$ UV behavior for both elastic and inelastic bubble
collisions.

%% file: sec3.tex
\section{Heavy-Particle Production (\texorpdfstring{$m\gg M_h$}{m >> Mh})}
\label{sec:production}

When the mass scale of the produced particles is much larger than
$M_h$, particle production is controlled by the high-frequency
behavior of $f(\chi)$ derived in Eq.~\eqref{eq:UVlim}. Substituting
this asymptotic expression into Eq.~\eqref{eq:particle_production}
gives
\begin{align}
    \frac{N}{A}
    =
    \frac{4\bigl[V'(2v_\phi)\bigr]^2}{2\pi^2}
    \int_0^{\gamma_w^2/l_w^2}\!\dd\chi\,
    \frac{\operatorname{Im}\widetilde{\Gamma}^{(2)}(\chi)}
    {\chi^2(\chi-M_h^2)^2}
    \log\left[
        \frac{
            2\gamma_w^2/l_w^2-\chi
            +2(\gamma_w/l_w)
             \sqrt{\gamma_w^2/l_w^2-\chi}
        }{\chi}
    \right].
    \label{eq:NA_heavy}
\end{align}
The corresponding energy deposited per unit collision area is
\begin{equation}
    \frac{E}{A}
    =
    \int_0^{\gamma_w^2/l_w^2}\!\dd\chi\,
    \sqrt{\chi}\,
    \frac{\dd [N(\chi)/A]}{\dd\chi}.
    \label{eq:EA_heavy}
\end{equation}

The particle-number and energy integrals are dominated by their lower
kinematic limits provided that
$\operatorname{Im}\widetilde{\Gamma}^{(2)}(\chi)$ grows more slowly
than $\chi^3$ and $\chi^{5/2}$, respectively. For renormalizable
interactions, its asymptotic growth is at most linear in $\chi$.
Consequently, the observables considered below are insensitive to the
extreme-UV region, apart from the logarithmic dependence on the
kinematic cutoff.

\subsection{General threshold parametrization}
\label{subsec:general_threshold}

We parameterize the absorptive part of the two-point function as
\begin{equation}
    \operatorname{Im}\widetilde{\Gamma}^{(2)}(\chi)
    =
    \Gamma_0\,
    \chi^{\alpha_\Gamma}
    \left(1-\frac{M^2}{\chi}\right)^{\beta_\Gamma}
    \Theta(\chi-M^2),
    \label{eq:ImGamma_parametrization}
\end{equation}
where $\Gamma_0$ contains the relevant couplings, $M$ is the threshold
invariant mass, and $\alpha_\Gamma$ and $\beta_\Gamma$ are determined
by the interaction and the spins of the final-state particles.

For $M>M_h$, define
\begin{equation}
    \Lambda\equiv\frac{\gamma_w}{Ml_w}\gg1
\end{equation}
and rescale $\chi\rightarrow M^2\chi$. Equation~\eqref{eq:NA_heavy}
then becomes
\begin{align}
    \frac{N}{A}
    =
    \frac{4\bigl[V'(2v_\phi)\bigr]^2\Gamma_0}
         {2\pi^2M^{6-2\alpha_\Gamma}}
    \int_1^{\Lambda^2}\!\dd\chi\,
    \frac{
        \left(1-\chi^{-1}\right)^{\beta_\Gamma}
    }{
        \chi^{4-\alpha_\Gamma}
        \left[
            1-(M_h/M)^2\chi^{-1}
        \right]^2
    }
    \log\left[
        \frac{
            2\Lambda^2-\chi
            +2\Lambda\sqrt{\Lambda^2-\chi}
        }{\chi}
    \right].
    \label{eq:NA_dimensionless}
\end{align}

For $\Lambda\gg1$, the integral is dominated by $\chi=\mathcal O(1)$.
At leading logarithmic order,
\begin{equation}
    \log\left[
        \frac{
            2\Lambda^2-\chi
            +2\Lambda\sqrt{\Lambda^2-\chi}
        }{\chi}
    \right]
    =
    2\log(2\Lambda)-\log\chi
    +\mathcal O(\Lambda^{-2}),
    \label{eq:largeLambda_log}
\end{equation}
and, for $M\gg M_h$, the Higgs-mass term in the denominator may be
neglected. This gives
\begin{align}
    \frac{N}{A}
    \simeq{}&
    \frac{4}{\pi^2}
    \frac{
        \Gamma(1+\beta_\Gamma)
        \Gamma(3-\alpha_\Gamma)
    }{
        \Gamma(4-\alpha_\Gamma+\beta_\Gamma)
    }
    \frac{\Gamma_0v_\phi^2}{M^{2-2\alpha_\Gamma}}
    \left[
        \frac{V'(2v_\phi)}{M^2v_\phi}
    \right]^2
    \log(2\Lambda).
    \label{eq:NA_general_approx}
\end{align}
This expression is valid for $\alpha_\Gamma<3$, consistent with the
condition for threshold domination of the number integral.

\subsection{Fermion-pair production}
\label{subsec:fermion_production}

Consider a fermion coupled to $\phi$ through
\begin{equation}
    \mathcal L_{\rm int}=-y_\psi\phi\bar\psi\psi.
    \label{eq:Yukawa_interaction}
\end{equation}
For the process
\begin{equation}
    \phi^\ast\rightarrow\bar\psi\psi,
\end{equation}
the threshold lies at
\begin{equation}
    M=2m_\psi,
\end{equation}
and the spectral function is characterized by
\begin{equation}
    \alpha_\Gamma=1,
    \qquad
    \beta_\Gamma=\frac32.
\end{equation}

Equation~\eqref{eq:NA_heavy} then gives
\begin{align}
    \frac{N_\psi}{A}
    \simeq
    \frac{\bigl[V'(2v_\phi)\bigr]^2}{\pi^2}
    \int_0^{\gamma_w^2/l_w^2}\!\dd\chi\,
    \frac{\operatorname{Im}
    \widetilde{\Gamma}^{(2)}(\chi)}
    {\chi^2(\chi-M_h^2)^2}
    \log\left[
        \frac{
            2\gamma_w^2/l_w^2-\chi
            +2(\gamma_w/l_w)
             \sqrt{\gamma_w^2/l_w^2-\chi}
        }{\chi}
    \right].
    \label{eq:NoverA_general}
\end{align}

It is convenient to define
\begin{equation}
    \Lambda_\psi\equiv
    \frac{\gamma_w}{2m_\psi l_w}
    =\frac{E_{\max}}{m_\psi},
    \label{eq:Lambda_psi}
\end{equation}
where the last equality defines the maximum single-particle energy,
up to the convention used for $E_{\max}$. After rescaling the
integration variable by $4m_\psi^2$, we obtain
\begin{align}
    \frac{N_\psi}{A}
    \simeq{}&
    \frac{
        \bigl[V'(2v_\phi)\bigr]^2y_\psi^2
    }{
        8\pi^3(4m_\psi^2)^2
    }
    \int_1^{\Lambda_\psi^2}\!\dd\chi\,
    \frac{
        \left(1-\chi^{-1}\right)^{3/2}
    }{
        \chi^3
        \left[
            1-\dfrac{M_h^2}{4m_\psi^2\chi}
        \right]^2
    }
    \log\left[
        \frac{
            2\Lambda_\psi^2-\chi
            +2\Lambda_\psi
             \sqrt{\Lambda_\psi^2-\chi}
        }{\chi}
    \right].
    \label{eq:NA_integral}
\end{align}

In the limit
\begin{equation}
    \Lambda_\psi\gg1,
    \qquad
    m_\psi\gg M_h,
\end{equation}
the logarithm may be evaluated at leading order and the upper limit
extended to infinity:
\begin{align}
    \frac{N_\psi}{A}
    \simeq{}&
    \frac{
        \bigl[V'(2v_\phi)\bigr]^2y_\psi^2
    }{
        4\pi^3(4m_\psi^2)^2
    }
    \log(2\Lambda_\psi)
    \int_1^\infty\!\dd\chi\,
    \frac{\left(1-\chi^{-1}\right)^{3/2}}{\chi^3}.
    \label{eq:NoverA_leading_log}
\end{align}
Using
\begin{equation}
    \int_1^\infty\!\dd\chi\,
    \frac{\left(1-\chi^{-1}\right)^{3/2}}{\chi^3}
    =
    \frac{4}{35},
    \label{eq:fermion_integral}
\end{equation}
we find
\begin{equation}
    \frac{N_\psi}{A}
    \simeq
    \frac{
        \bigl[V'(2v_\phi)\bigr]^2
    }{
        35\pi^3(4m_\psi^2)^2
    }
    y_\psi^2
    \log(2\Lambda_\psi),
    \label{eq:NA_final}
\end{equation}
which is consistent with the results of Ref.~\cite{Ghoshal:2026pew} at the $y_\psi^2$ level up to an $O(1)$ factor. 

Although the maximum kinematically accessible energy can be much
larger than $m_\psi$, the integral is threshold dominated. The
produced fermions therefore typically have energies of order
\begin{equation}
    E_\psi=\mathcal O(m_\psi).
\end{equation}

\subsection{Cosmological number density and yield}
\label{subsec:fermion_yield}

To convert the number produced per unit collision area into a physical
number density, we use the geometric collision-area density
\begin{equation}
    \frac{A}{V}\simeq\frac{3}{R_*}.
    \label{eq:area_density}
\end{equation}

Following Ref.~\cite{Cataldi:2024pgt}, the characteristic bubble
separation is
\begin{align}
    R_*
    =
    \frac{(8\pi)^{1/3}}{\beta_{\rm PT}}
    =
    \frac{3\sqrt{5}}{\pi^{7/6}}\,
    v_\phi^{-2}
    \left[
        \frac{\pi^2\alpha_{\rm PT}}
        {30(1+\alpha_{\rm PT})c_V}
    \right]^{1/2}
    \frac{M_{\rm Pl}}{\beta_{\rm PT}/H}.
    \label{eq:Rstar}
\end{align}

Here $\alpha_{\rm PT}$ denotes the phase-transition strength and
$\beta_{\rm PT}$ its inverse duration. This notation distinguishes
$\alpha_{\rm PT}$ from the spectral exponent $\alpha_\Gamma$
introduced in Eq.~\eqref{eq:ImGamma_parametrization}.

Substituting Eq.~\eqref{eq:Rstar} into
Eq.~\eqref{eq:NA_final} gives
\begin{align}
    n_\psi
    ={}&
    \frac{
        \pi^{1/6}\bigl[V'(2v_\phi)\bigr]^2
    }{
        35\pi^2\sqrt{5}\,
        v_\phi^2(4m_\psi^2)^2
    }
    y_\psi^2
    \frac{\beta_{\rm PT}}{H}
    \left[
        \frac{
            30(1+\alpha_{\rm PT})c_V
        }{
            \pi^2\alpha_{\rm PT}
        }
    \right]^{1/2}
    \frac{v_\phi^4}{M_{\rm Pl}}
    \log\left(\frac{E_{\max}}{m_\psi}\right).
    \label{eq:npsi_cosmological}
\end{align}

For the potential
\begin{equation}
    V(\phi)
    =
    \frac{\lambda}{4}\phi^2(\phi-v_\phi)^2
    +c_Vv_\phi\phi^2(2\phi-3v_\phi),
    \label{eq:quartic_potential_production}
\end{equation}
one has
\begin{equation}
    V'(2v_\phi)
    =
    3(\lambda+4c_V)v_\phi^3.
    \label{eq:Vprime}
\end{equation}
The fermion number density therefore becomes
\begin{align}
    n_\psi
    ={}&
    \frac{
        9\pi^{1/6}(\lambda+4c_V)^2
    }{
        560\pi^2\sqrt{5}
    }
    y_\psi^2
    \left(\frac{v_\phi}{m_\psi}\right)^4
    \frac{\beta_{\rm PT}}{H}
    \left[
        \frac{
            30(1+\alpha_{\rm PT})c_V
        }{
            \pi^2\alpha_{\rm PT}
        }
    \right]^{1/2}
    \frac{v_\phi^4}{M_{\rm Pl}}
    \log\left(\frac{E_{\max}}{m_\psi}\right).
    \label{eq:npsi_lambda}
\end{align}

The scalar masses in the true and false vacua are
\begin{equation}
    M_t^2=
    \frac{\lambda+12c_V}{2}v_\phi^2,
    \qquad
    M_f^2=
    \frac{\lambda-12c_V}{2}v_\phi^2.
    \label{eq:false_true_masses}
\end{equation}
These relations imply
\begin{equation}
    (\lambda+4c_V)v_\phi^2
    =
    \frac{2}{3}\left(M_f^2+2M_t^2\right).
    \label{eq:mass_identity}
\end{equation}
Equation~\eqref{eq:npsi_lambda} may thus be written as
\begin{align}
    n_\psi
    ={}&
    \frac{9\pi^{1/6}}
         {140\pi^2\sqrt{5}}\,
    y_\psi^2
    \left(
        \frac{M_f^2+2M_t^2}{3m_\psi^2}
    \right)^2
    \frac{\beta_{\rm PT}}{H}
    \left[
        \frac{
            30(1+\alpha_{\rm PT})c_V
        }{
            \pi^2\alpha_{\rm PT}
        }
    \right]^{1/2}
    \frac{v_\phi^4}{M_{\rm Pl}}
    \log\left(\frac{E_{\max}}{m_\psi}\right).
    \label{eq:npsi_masses}
\end{align}

Dividing by the entropy density after reheating gives the fermion
yield,
\begin{align}
    Y_\psi
    \equiv\frac{n_\psi}{s}
    ={}&
    \frac{81\pi^{1/6}}
         {56\pi^4\sqrt{5}}\,
    y_\psi^2
    \left(
        \frac{M_f^2+2M_t^2}{3m_\psi^2}
    \right)^2
    \frac{\beta_{\rm PT}}{H}
    \left[
        \frac{
            \pi^2\alpha_{\rm PT}
        }{
            30(1+\alpha_{\rm PT})g_*c_V
        }
    \right]^{1/4}
    \frac{v_\phi}{M_{\rm Pl}}
    \log\left(\frac{E_{\max}}{m_\psi}\right).
    \label{eq:Ypsi}
\end{align}
In obtaining Eq.~\eqref{eq:Ypsi}, we assume prompt conversion of the
released vacuum energy into a thermal bath and take
$g_{*s}\simeq g_*$. If thermalization is delayed, or if only a
fraction of the released energy enters the thermal bath, the entropy
density and any intervening cosmological dilution must be treated
separately.

%% file: sec4.tex
\section{Particle Production in \texorpdfstring{$(3+1)$}{3+1} Dimensions}
\label{sec:particle}

\subsection{General Case}

In $(3+1)$ dimensions, the Fourier transform of the scalar field is defined as
\begin{align}
    \tilde{\phi}(\omega,{\bf k})
    =
    \int \d t\,\d^3{\bf x}\,
    e^{\rmi\omega t-\rmi{\bf k}\cdot{\bf x}}
    \phi(t,{\bf x}).
\end{align}
In the high-energy limit, the equation of motion gives
\begin{align}
    (\omega^2-{\bf k}^2)^2\tilde{\phi}(\omega,{\bf k})
    &=
    \int \d t\,\d^3{\bf x}\,
    e^{\rmi\omega t-\rmi{\bf k}\cdot{\bf x}}
    (\partial^2)^2\phi(t,{\bf x})
    \nnn
    &=
    -\int \d t\,\d^3{\bf x}\,
    e^{\rmi\omega t-\rmi{\bf k}\cdot{\bf x}}
    \partial^2V^\prime(\phi).
\end{align}

For highly relativistic bubbles, the bubble walls can be approximated by step functions. We decompose the field configuration as
\begin{align}
    \phi
    =
    \sum_i\phi_{{\rm d},i}\chi_{B_i}+\phi_{\rm c}
    \equiv
    \phi_{\rm d}+\phi_{\rm c},
\end{align}
where $\phi_{{\rm d},i}$ and $\phi_{\rm c}$ are smooth functions, and
$\chi_{B_i}$ is the characteristic function of the bubble domain $B_i$.
The derivative of the potential can then be decomposed as
\begin{align}
    V^\prime(\phi)
    =
    \sum_{a\in\{0,1\}^n}
    V^\prime\left(
        \phi_{\rm c}+\sum_i a_i\phi_{{\rm d},i}
    \right)
    \prod_i\chi_{B_i^{a_i}},
\end{align}
where $B_i^1=B_i$ and $B_i^0=B_i^c$, with $B_i^c$ denoting the complement
of $B_i$.

In the high-energy limit, the leading contribution arises from the
singular terms in $\partial^2V^\prime(\phi)$ generated by derivatives of
the characteristic functions. Since $\phi$ satisfies the equation of
motion away from the bubble nucleation points, the singular part of
$\partial^2\phi$ must vanish. For each $i$, this requires
\begin{align}
    \left[
        2(\partial_\mu\phi_{{\rm d},i})
        (\partial^\mu\chi_{B_i})
        +
        \phi_{{\rm d},i}\partial^2\chi_{B_i}
    \right]_{\rm sing}
    =
    0.
\end{align}
The leading collision-induced contribution is therefore
\begin{align}
    \bigl(\partial^2V^\prime(\phi)\bigr)_{\rm coll}
    \eqqcolon
    \sum_{j\neq k}
    (\partial_\mu\chi_{B_j})
    (\partial^\mu\chi_{B_k})
    \bar{V}_{j,k}^\prime(\phi),
\end{align}
as derived in Appendix~\ref{appendixC}.

Assuming that particle production is dominated by bubble-wall collisions,
the high-energy spectrum is given by
\begin{align}
    (\omega^2-{\bf k}^2)^2\tilde{\phi}(\omega,{\bf k})
    &=
    -\int \d t\,\d^3{\bf x}\,
    \sum_{j\neq k}
    (\partial_\mu\chi_{B_j})
    (\partial^\mu\chi_{B_k})
    \bar{V}_{j,k}^\prime(\phi)
    e^{\rmi\omega t-\rmi{\bf k}\cdot{\bf x}}
    \nnn
    &=
    -2\sum_{j<k}
    \int_{\partial B_j\cap\partial B_k}
    \d S\,
    n_{j,\mu}n_k^\mu
    \bar{V}_{j,k}^\prime(\phi)
    e^{\rmi\omega t-\rmi{\bf k}\cdot{\bf x}},
\end{align}
where $n_{j,\mu}$ and $n_{k,\mu}$ are the normal vectors to
$\partial B_j$ and $\partial B_k$, respectively, with their directions
fixed by the orientations of the corresponding surfaces. The source of
high-energy particle production is thus localized at the intersection of
the bubble walls, and its strength depends on the field configuration at
that intersection. In the following subsection, we analyze the collision
of two bubbles, whose walls intersect along a circle in three-dimensional
space.

\subsection{A Simple Example: Two-Bubble Collision}

Consider a system containing exactly two bubbles. We place their centers
on the $x$ axis and choose the time of first contact to be $t=0$. The
bubble domains are then
\begin{align}
    B_1
    &=
    \left\{
        (x-r_1)^2+y^2+z^2<(t+r_1)^2
    \right\},
    \\
    B_2
    &=
    \left\{
        (x+r_2)^2+y^2+z^2<(t+r_2)^2
    \right\},
\end{align}
as illustrated in Fig.~\ref{fig:collision}. Their characteristic
functions satisfy
\begin{align}
    \partial_\mu\chi_{B_1}
    &=
    (1,-{\bf u})\,
    \delta(t+r_1-R_1),
    \\
    \partial_\mu\chi_{B_2}
    &=
    (1,-{\bf v})\,
    \delta(t+r_2-R_2),
\end{align}
where
\begin{align}
    R_1
    &=
    \sqrt{(x-r_1)^2+y^2+z^2},
    \nnn
    R_2
    &=
    \sqrt{(x+r_2)^2+y^2+z^2},
    \nnn
    {\bf u}
    &=
    \frac{(x-r_1,y,z)}{R_1},
    \nnn
    {\bf v}
    &=
    \frac{(x+r_2,y,z)}{R_2}.
\end{align}

Since
\begin{align}
    V^\prime(\phi(x_{--}))
    &=
    V^\prime(0)
    =
    0,
    \\
    V^\prime(\phi(x_{+-}))
    &=
    V^\prime(\phi(x_{-+}))
    =
    V^\prime(v_\phi)
    =
    0,
\end{align}
we obtain
\begin{align}
    \bar{V}_{1,2}^\prime(\phi)
    =
    V^\prime(\phi(x_{++}))
    =
    V^\prime(2v_\phi).
\end{align}
The Fourier-space field profile is therefore
\begin{align}
    \tilde{\phi}(\omega,{\bf k})
    &=
    -\frac{2V^\prime(2v_\phi)}
    {(\omega^2-{\bf k}^2)^2}
    \int \d t\,\d^3{\bf x}\,
    (1-{\bf u}\cdot{\bf v})
    \delta(t+r_1-R_1)
    \delta(t+r_2-R_2)
    e^{\rmi\omega t-\rmi{\bf k}\cdot{\bf x}}
    \nnn
    &=
    -\frac{2V^\prime(2v_\phi)}
    {(\omega^2-{\bf k}^2)^2}
    \int
    \frac{2\pi R\,\d R}
    {\sqrt{1+R^2/(r_1r_2)}}
    e^{\rmi f(R)
    [\omega(r_1+r_2)+k_x(r_1-r_2)]}
    {\rm J}_0(k_rR),
\end{align}
where
\begin{align}
    f(R)
    &=
    \frac{1}{2}
    \left(
        \sqrt{1+\frac{R^2}{r_1r_2}}-1
    \right),
    \\
    k_r
    &=
    \sqrt{k_y^2+k_z^2}.
\end{align}
Here $k_x$ denotes the momentum component along the collision axis, while
$k_r$ is the magnitude of the transverse momentum. The variable $R$ is
the radius of the intersection circle and satisfies
\begin{align}
    R^2
    =
    (t+r_1)^2-(x-r_1)^2
    =
    (t+r_2)^2-(x+r_2)^2.
\end{align}

\input{fig_bubble_coll}

The efficiency factor is defined by
\begin{align}
    F(\chi)
    =
    \int
    \frac{\d\omega\,\d^3{\bf k}}{(2\pi)^4}\,
    \left|\tilde{\phi}(\omega,{\bf k})\right|^2
    \delta\left(\chi-\omega^2+{\bf k}^2\right).
\end{align}
Because the delta function is Lorentz invariant, we may perform a boost
along the $x$ direction such that
\begin{align}
    \omega(r_1+r_2)+k_x(r_1-r_2)
    \longrightarrow
    2\sqrt{r_1r_2}\,\omega.
\end{align}
For $k_rR\gg1$, the product of Bessel functions can be approximated by
\begin{align}
    RR^\prime k_r
    {\rm J}_0(k_rR){\rm J}_0(k_rR^\prime)
    \simeq
    \frac{\sqrt{RR^\prime}}{\pi}
    \cos\left[k_r(R-R^\prime)\right].
\end{align}
The efficiency factor then becomes
\begin{align}
    F(\chi)
    =
    \frac{4\bigl[V^\prime(2v_\phi)\bigr]^2}{\chi^4}
    \int
    \frac{
        h(R,R^\prime)\sqrt{RR^\prime}\,
        \d R\,\d R^\prime
    }{
        \sqrt{1+R^2/(r_1r_2)}
        \sqrt{1+(R^\prime)^2/(r_1r_2)}
    },
\end{align}
where
\begin{align}
    h(R,R^\prime)
    &=
    \int
    \frac{\d\omega\,\d k_x\,\d k_r}{(2\pi)^2}\,
    e^{
        \rmi[g(R)-g(R^\prime)]\omega
        -\rmi(R-R^\prime)k_r
    }
    \delta\left(
        \chi-\omega^2+k_x^2+k_r^2
    \right),
    \\
    g(R)
    &=
    \sqrt{r_1r_2+R^2}-\sqrt{r_1r_2}.
\end{align}

This integral is highly oscillatory, and its dominant contribution arises
from the region $R\simeq R^\prime$. Denoting the upper limit of the
$R$ integration by $R_0$, we obtain
\begin{align}
    F(\chi)
    &=
    \frac{4\bigl[V^\prime(2v_\phi)\bigr]^2}{\chi^4}
    \int
    \frac{
        R\,\d R\,\d\omega\,\d k_x\,\d k_r
    }{
        2\pi[1+R^2/(r_1r_2)]
    }
    \delta\bigl(g^\prime(R)\omega-k_r\bigr)
    \delta\left(
        \chi-\omega^2+k_x^2+k_r^2
    \right)
    \nnn
    &=
    \frac{4\bigl[V^\prime(2v_\phi)\bigr]^2}{\chi^4}
    \int
    \frac{
        2\pi R\,\d R
    }{
        [1+R^2/(r_1r_2)]
        \sqrt{1-[g^\prime(R)]^2}
    }
    \int
    \frac{\d\omega\,\d k}{(2\pi)^2}
    \delta\left(\chi-\omega^2+k^2\right)
    \nnn
    &=
    2\pi r_1r_2
    \left(
        \sqrt{1+\frac{R_0^2}{r_1r_2}}-1
    \right)
    \frac{4\bigl[V^\prime(2v_\phi)\bigr]^2}{\chi^4}
    \int
    \frac{\d\omega\,\d k}{(2\pi)^2}
    \delta\left(\chi-\omega^2+k^2\right).
\end{align}
The efficiency factor per unit collision area,
$F(\chi)/(\pi R_0^2)$, approaches the parallel-wall result $f(\chi)$ in
the limit $R_0\to0$.

For a system containing multiple bubbles, the preceding calculation
applies to the collision of any pair $B_i$ and $B_j$, provided that both
walls are still propagating through the vacuum. In this case, $R_0$ is
the radius of the intersection circle when a third bubble $B_k$ first
reaches the colliding walls, namely when
\begin{align}
    B_i\cap B_j\cap B_k\neq\emptyset.
\end{align}

Typically, $r_1$ and $r_2$ are both of order $R_\ast$, the mean bubble
radius at the time of collision, and $R_0$ is likewise of order
$R_\ast$. Assuming that each bubble collides with approximately six
neighboring bubbles, the total efficiency factor is estimated as
\begin{align}
    F(\chi)
    &\sim
    \frac{
        6\times2(\sqrt{2}-1)\pi R_\ast^2
    }{
        4\pi R_\ast^2
    }
    F_0(\chi)
    \nnn
    &=
    3(\sqrt{2}-1)F_0(\chi).
\end{align}
The average volume assigned to each bubble is approximately
\begin{align}
    V_0=(2R_\ast)^3,
\end{align}
whereas the volume used in the parallel-wall estimate is
\begin{align}
    V_\ast=\frac{4\pi}{3}R_\ast^3.
\end{align}
Consequently, the efficiency factor per unit volume is approximately
\begin{align}
    \frac{F(\chi)}{V_0}
    \sim
    0.65\,
    \frac{F_0(\chi)}{V_\ast}.
\end{align}
Thus, accounting for the finite bubble radius introduces an order-one
suppression of the total particle number per unit volume relative to the
parallel-wall approximation.

%% file: fig_bubble_coll.tex
\begin{figure}[t]     
    \centering
    \begin{tikzpicture}

        \def\Lr{6.0}
        \def\Lh{4.0}
        \def\dl{0.2}

        \coordinate     (B1)    at  (0,0);
        \coordinate     (B2)    at  (\Lr,0);
        \coordinate     (B3)    at  (0,\Lh);
        \coordinate     (B4)    at  (\Lr,\Lh);
        \coordinate     (X)     at  (\Lr/2+0.5,\Lh/2+1.0);
        \coordinate     (O)     at  (\Lr/2,\Lh/2);
        \coordinate     (Bk)    at  (\Lr,\Lh/2);
        \coordinate     (Bj)    at  (0,\Lh/2);
        \coordinate     (Bi)    at  (\Lr/2,0);
        \coordinate     (Bc)    at  (\Lr/2,\Lh);
        \coordinate     (Xpp)     at  (\Lr/2,\Lh/2-\dl);
        \coordinate     (Xpm)     at  (\Lr/2-\dl,\Lh/2);
        \coordinate     (Xmp)     at  (\Lr/2+\dl,\Lh/2);
        \coordinate     (Xmm)     at  (\Lr/2,\Lh/2+\dl);

        \fill           (O)     circle (1.2pt);
        \fill           (Xpp)   circle (1.2pt);
        \fill           (Xpm)   circle (1.2pt);
        \fill           (Xmp)   circle (1.2pt);
        \fill           (Xmm)   circle (1.2pt);

        \draw[thick]    (B1) -- (B4);
        \draw[thick]    (B2) -- (B3);
        \draw[blue,->]       (X)  -- (O);

        \node[]             at (Bj) {$B_j \cap B_k^c$};
        \node[]             at (Bk) {$B_j^c \cap B_k$};
        \node[]             at (Bi) {$B_j \cap B_k$};
        \node[]             at (Bc) {$B_j^c \cap B_k^c$};

        \node[above,blue]   at (X) {$x$};
        \node[below]        at (Xpp) {$x_{++}$};
        \node[left ]        at (Xpm) {$x_{+-}$};
        \node[right]        at (Xmp) {$x_{-+}$};
        \node[above]        at (Xmm) {$x_{--}$};

    \end{tikzpicture}
    \caption{The intersection of two bubble walls. $x_{ij}$ are four points in the four domains $B_j \cap B_k, B_j \cap B_k^c, B_j^c \cap B_k, B_j^c \cap B_k^c$ respectively, and they are close to $x \in \partial B_j \cap \partial B_k$. The value of $\bar{V}_{j,k}^\prime(\phi)$ depends on the field configuration at these four points.}
    \label{fig:intersection}
\end{figure}

%% file: sec5_discussion.tex
\section{Discussion}
\label{sec:discussion}

In this paper, we investigated particle production during bubble collisions in a first-order phase transition. We found that, in the high-frequency regime (\textit{i.e.}, $\chi \equiv \omega^2-{\bf k}^2 \gg v_\phi^2$), the efficiency factor for both elastic and inelastic collisions behaves as
\begin{align}
    F(\chi) \propto
    \frac{\bigl[V^\prime(2v_\phi)\bigr]^2}{\chi^4}
    \theta\left(\frac{\gamma_w^2}{l_w^2}-\chi\right)
    \log\left[
        \frac{
            2\gamma_w^2/l_w^2-\chi
            +2(\gamma_w/l_w)\sqrt{\gamma_w^2/l_w^2-\chi}
        }{\chi}
    \right].
\end{align}
For inelastic collisions, this result is consistent with Ref.~\cite{Falkowski:2012fb}. For elastic collisions, however, the treatment in Ref.~\cite{Falkowski:2012fb} yields $F(\chi)\sim O(\chi^{-2})$, in contrast to our result. This discrepancy primarily stems from the assumption made in Ref.~\cite{Falkowski:2012fb} that the $\phi$ field instantaneously becomes spatially uniform and equal to the false-vacuum value $v_{\rm f}$ throughout the forward light cone, $t>0$ and $|x|<t$. In reality, the field relaxes from $2v_\phi$ to the vacuum over a finite timescale of order $O(v_\phi^{-1})$, as illustrated in Fig.~\ref{fig:profile}. Because the high-frequency behavior is particularly sensitive to abrupt changes in the field profile, the instantaneous-relaxation assumption leads to a different asymptotic behavior.

For particle production, the relevant scale $\chi\sim M_{\rm final}^2$ is much larger than the characteristic scale of the potential $V(\phi)$. Over the corresponding short timescale, the field $\phi$ can therefore be approximated as a free field, giving rise to a term proportional to $\delta(\omega^2-{\bf k}^2)$ in Fourier space. Furthermore, the singular part of the first derivative of $\phi$ is already encoded in the discontinuity of $\phi$, while its regular part remains continuous. The first derivative therefore does not contribute at leading order. Instead, the leading behavior is controlled by the discontinuity in the second derivative of $\phi$, which is proportional to $V^\prime(2v_\phi)$ and produces an asymptotic amplitude of order $O(\chi^{-2})$. Since the efficiency factor is proportional to the square of this contribution, its asymptotic behavior is
\begin{align}
    F(\chi)\sim O\left(\bigl[V^\prime(2v_\phi)\bigr]^2\chi^{-4}\right).
\end{align}

\begin{figure}[htbp]
    \centering
    \begin{subfigure}{0.48\textwidth}
        \centering
        \includegraphics[width=\textwidth]{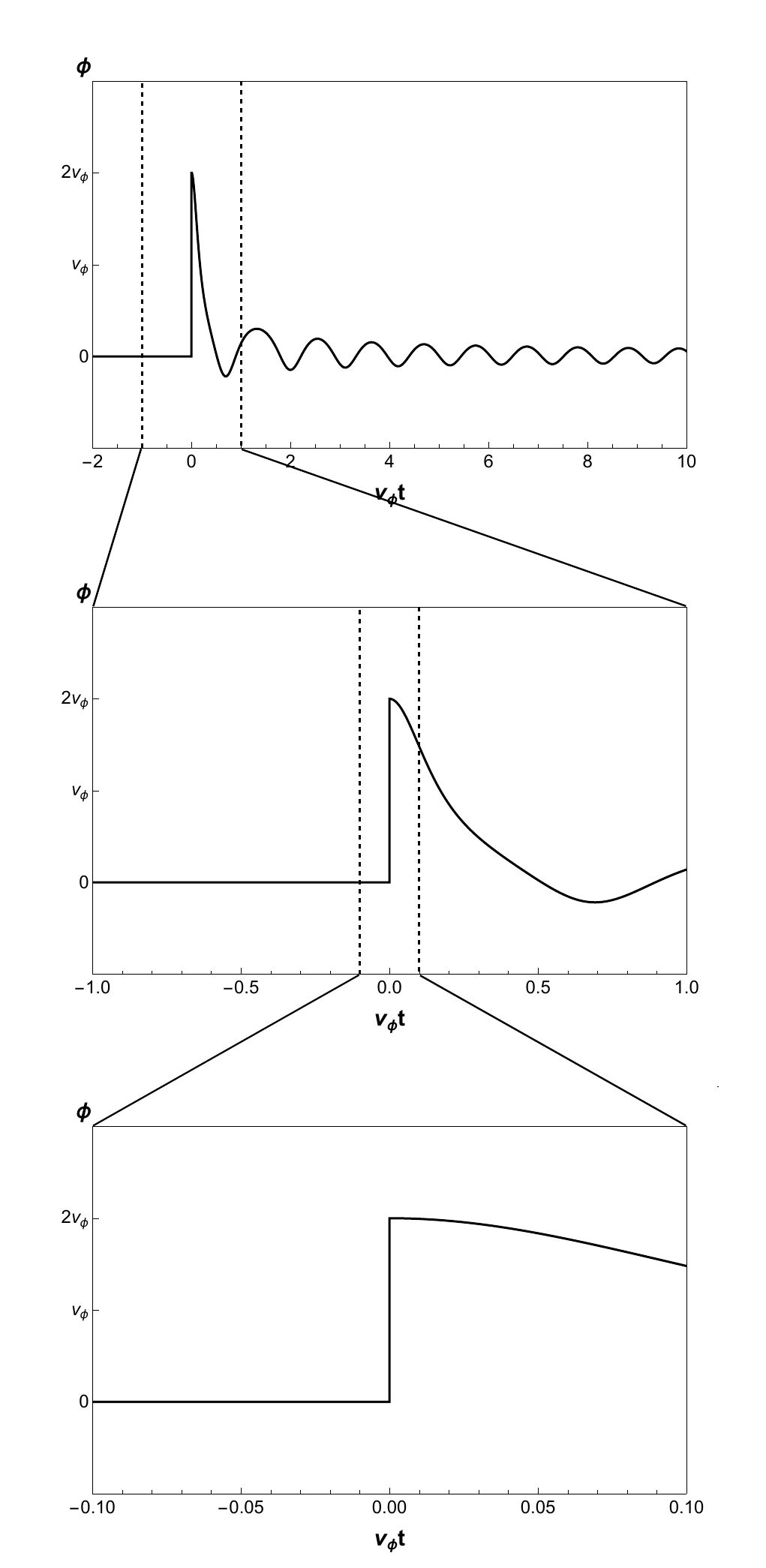}
        \caption{Elastic collision}
    \end{subfigure}
    \hfill
    \begin{subfigure}{0.48\textwidth}
        \centering
        \includegraphics[width=\textwidth]{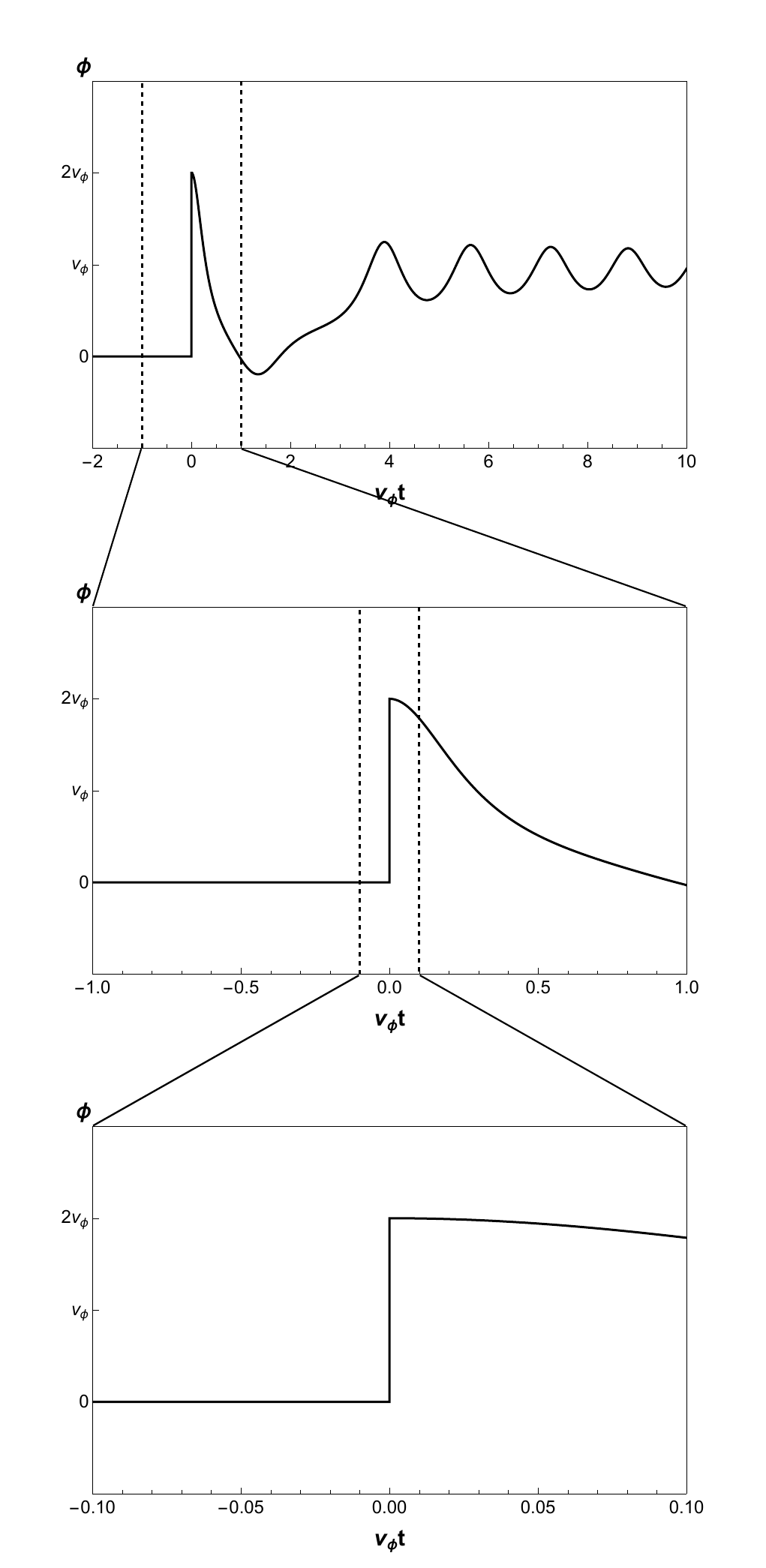}
        \caption{Inelastic collision}
    \end{subfigure}
    \caption{Field profiles $\phi(t)$ at $x=0$ for elastic (left) and inelastic (right) collisions. The solid curves show the field profiles, while the dashed curves denote auxiliary lines.}
    \label{fig:profile}
\end{figure}

The ultraviolet behavior observed in our simulations agrees with the analytical prediction. It differs, however, from that reported in Ref.~\cite{Mansour:2023fwj}. We attribute this discrepancy to our explicit treatment of spectral leakage arising from Fourier transforms performed over a finite interval, $[0,L]$. When extracting the high-frequency behavior, we introduce an $\rmi\epsilon$ prescription that provides exponential damping and suppresses spectral leakage. This procedure prevents the physical contribution, which scales as $O(\chi^{-2})$ at the amplitude level, from being contaminated in the ultraviolet by spurious $O(\chi^{-1})$ artifacts associated with the rectangular window function.

%% file: Acknowledgments.tex
\section*{Acknowledgments}

This work is supported in part by the National Key R\&D Program of China under Grants Nos. 2021YFC2203100 and 2017YFA0402204, the NSFC under Grant Nos. 12475107 and 12525506, and the Tsinghua University Dushi Program. 

%% file: appendixA.tex
\section{Large-\texorpdfstring{$s$}{s} Behavior of \texorpdfstring{$\h$}{h}}
\label{appendixA}

In this appendix, we show that $\h(s)$ decays as $s^{-1/2}$ in the limit $s\to\infty$.

We begin by treating the equation of motion for $\h(s)$ as a dynamical system with phase-space variables
\begin{align}
    (x,p)\coloneqq\bigl(\h(s),\h^\prime(s)\bigr).
\end{align}
Although the equation of motion contains a friction term proportional to $s^{-1}$, the fact that this coefficient vanishes as $s\to\infty$ does not by itself guarantee that $\h(s)$ converges. Nevertheless, the results of Refs.~\cite{Markus+2016+17+30,cabot2009long} imply that $\h(s)$ approaches zero. Rather than giving the full proof, we summarize the main argument.

Define the Lyapunov function
\begin{align}
    E(x,p)=\frac{1}{2}p^2+V(x).
\end{align}
Along a solution, it satisfies
\begin{align}
    \frac{\d E}{\d s}=-\frac{p^2}{s}\leq 0,
    \qquad s>0.
\end{align}
Thus the energy decreases monotonically, and
\begin{align}
    E(s)\leq E(s_0)
\end{align}
for every $s>s_0>0$. Under our assumptions on the potential, this implies that $\h(s)$ remains bounded for sufficiently large $s$.

The friction coefficient $s^{-1}$ decreases monotonically but is not integrable at infinity:
\begin{align}
    \int_{s_0}^{\infty}\frac{\d s}{s}=\infty.
\end{align}
If $p(s)$ did not approach zero, the system would lose an amount of energy proportional to $s^{-1}$ during each oscillation. Because the curvature of the potential at a local minimum is nonzero, the oscillation period remains finite. The cumulative energy loss would therefore diverge, contradicting the fact that the energy is bounded from below. Hence,
\begin{align}
    \h^\prime(s) = p(s)\longrightarrow 0
    \qquad\text{as}\qquad s\to\infty.
\end{align}
The limit set must consequently be contained in
\begin{align}
    \{(x,p)\mid p=0\}.
\end{align}
Because the limit set is invariant under the flow, it must in fact be contained in
\begin{align}
    \{(x,p)\mid p=0,\;V^\prime(x)=0\},
\end{align}
which is a finite set. It follows that $\h(s)$ approaches a constant as $s\to\infty$. From the definition and boundary conditions of $\h(s)$, this constant must be zero:
\begin{align}
    \h(s)\longrightarrow 0
    \qquad\text{as}\qquad s\to\infty.
\end{align}

We next determine the decay rate. Since the limits of $\h(s)$ and $\h^\prime(s)$ exist, we consider
\begin{align}
    s\bigl[E(s)-E(\infty)\bigr].
\end{align}
Using $\d E/\d s=-p^2/s$, we obtain
\begin{align}
    \frac{\d}{\d s}
    \left\{
        s\bigl[E(s)-E(\infty)\bigr]
    \right\}
    =
    -\frac{1}{2}p^2+V(x)-V\bigl(x(\infty)\bigr).
\end{align}

If the limiting configuration $x(\infty)$ is a local maximum of the potential, then
\begin{align}
    V(x)-V\bigl(x(\infty)\bigr)<0
\end{align}
for all sufficiently large $s$. Consequently,
\begin{align}
    \frac{\d}{\d s}
    \left\{
        s\bigl[E(s)-E(\infty)\bigr]
    \right\}<0.
\end{align}
This case is incompatible with convergence to the maximum unless the solution is fine-tuned to remain there. Therefore, for a nontrivial solution, the relevant asymptotic state is a local minimum.

Suppose now that $x(\infty)$ is a local minimum. After shifting the field and the potential, we may set
\begin{align}
    x(\infty)=0,
    \qquad
    E(\infty)=V(0)=0.
\end{align}
Using the equation of motion, the preceding identity can be rewritten as
\begin{align}
    \frac{\d}{\d s}\bigl[sE(s)\bigr]
    &=
    -\frac{1}{2}\frac{\d(xp)}{\d s}
    -\frac{1}{4}\frac{\d(x^2/s)}{\d s}
    -\frac{x^2}{4s^2}
    +V(x)-\frac{1}{2}xV^\prime(x).
\end{align}
Because $(x,p)$ is bounded, the integrated contributions of the first three terms on the right-hand side are bounded. Moreover, since the curvature of the potential at the minimum is positive, there exists a constant $C>0$ such that, sufficiently close to the minimum,
\begin{align}
    \left|
        V(x)-\frac{1}{2}xV^\prime(x)
    \right|
    \leq C E(s)^{3/2}.
\end{align}
As $(x,p)\to(0,0)$, this bound holds for all sufficiently large $s$. Hence there exist constants $s_0$ and $C^\prime>0$ such that
\begin{align}
    E(s)
    \leq
    \frac{(t+C^\prime)E(t)}{s}
    +\frac{s-t}{s}\,C E(t)^{3/2}
\end{align}
for all $s>t>s_0$. This inequality implies
\begin{align}
    E(s)=O(s^{-1})
    \qquad\text{as}\qquad s\to\infty.
\end{align}
Finally, since the energy is quadratic to leading order near the local minimum,
\begin{align}
    E(s)
    =
    \frac{1}{2}\bigl[\h^\prime(s)\bigr]^2
    +\frac{1}{2}V^{\prime\prime}(0)\h(s)^2
    +O\bigl(\h(s)^3\bigr),
\end{align}
we conclude that
\begin{align}
    \boxed{\h^\prime(s),\h(s)=O(s^{-1/2})}
    \qquad\text{as}\qquad s\to\infty.
\end{align}

%% file: appendixB.tex
\section{On the order of \texorpdfstring{$I_n(\omega,k)$}{}}\label{appendixB}

We consider the trapping equation
\begin{align}
    \h^{\prime\prime}(s) + \frac{1}{s} \h^{\prime}(s) + M_h^2 \h(s) = - \Delta(\h(s)) \ , 
\end{align}
where the function $\Delta(\h)$ is smooth in $\h$ and satisfies $\Delta(0) = \Delta^\prime(0) = 0$. In the previous section, we showed that $\h(s), \h^\prime(s) = O(s^{-1/2})$, and we will use this fact throughout this section without further proof. 

The operator $\LB$ is defined by
\begin{align}
    (\LB f)(s) = f^{\prime\prime}(s) + \frac{1}{s} f^{\prime}(s) \ . 
\end{align}

We first establish several properties of the solution $\h$. The trapping equation implies the following smoothness statement.
\begin{lemma} \label{lemma:smoothness}
    $\LB^n G\big(\h(s),(\h^\prime(s))^2\big)$ is smooth with respect to $s$ for any smooth $G$ and $n \in \mathbb{N}$. 
\end{lemma}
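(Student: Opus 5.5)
The plan is to show that $\h$ is a smooth function of $u=s^2$ on $[0,\infty)$. Once that holds, $G\big(\h,(\h^\prime)^2\big)$ is also a smooth function of $s^2$, and $\LB$ maps smooth functions of $s^2$ to smooth functions of $s^2$, so an induction on $n$ finishes the argument. Smoothness for $s>0$ is not the issue. The trapping equation has smooth coefficients there, and its solution is smooth by standard ODE theory. The only delicate point is $s=0$, where the coefficient $1/s$ in $\LB$ is singular.

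\textbf{Step 1 (main obstacle): $\h(s)=g(s^2)$ with $g\in C^\infty([0,\infty))$.} Set $c\equiv\h(0)=2-v_\infty/v_\phi$ and $N(g)\equiv M_h^2 g+\Delta(g)$. In the variable $u=s^2$ the equation becomes
\begin{equation}
    4\,(u\,g^\prime(u))^\prime=-N(g(u)) .
\end{equation}
Together with regularity at $u=0$, which corresponds to $\h^\prime(0)=0$, this is equivalent to the integral equation
\begin{equation}
    g(u)=c-\frac14\int_0^u \d w\int_0^1 N\big(g(wt)\big)\,\d t .
\end{equation}
A Picard iteration on a small interval $[0,u_0]$ gives a unique continuous solution, since $N$ is smooth and hence locally Lipschitz. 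Differentiating $s\mapsto g(s^2)$ shows that it solves the trapping equation with $\h(0)=c$ and $\h^\prime(0)=0$. By uniqueness of the regular solution of the original equation with these data, $g(s^2)=\h(s)$ near $0$.

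Smoothness follows by bootstrapping. The averaging operator $A[\varphi](w)=\int_0^1\varphi(wt)\,\d t$ maps $C^k([0,u_0])$ into itself, because derivatives can be taken under the integral: $\partial_w^j A[\varphi](w)=\int_0^1 t^j\varphi^{(j)}(wt)\,\d t$. So if $g\in C^k$, then $N(g)\in C^k$, hence $A[N(g)]\in C^k$, and the integral equation gives $g\in C^{k+1}$. Away from $u=0$ the function $g(u)=\h(\sqrt u)$ is already smooth, so $g\in C^\infty([0,\infty))$.

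\textbf{Step 2: $G$ is a smooth function of $s^2$.} We have $\h^\prime(s)=2s\,g^\prime(s^2)$, so $(\h^\prime)^2=4u\,[g^\prime(u)]^2$ is smooth in $u$. It follows that $G\big(\h,(\h^\prime)^2\big)=F(s^2)$ with $F(u)=G\big(g(u),4u[g^\prime(u)]^2\big)$ smooth on $[0,\infty)$. This step is the reason the statement is phrased with $(\h^\prime)^2$ rather than $\h^\prime$: $\h^\prime$ itself is odd in $s$, while its square is even.

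\textbf{Step 3: induction on $n$.} For $F$ smooth, a direct computation gives
\begin{equation}
    \LB\big[F(s^2)\big]=4F^\prime(s^2)+4s^2F^{\prime\prime}(s^2)=\big[4(uF^\prime)^\prime\big]_{u=s^2} .
\end{equation}
In this computation the $1/s$ in $\LB$ cancels exactly against the factor $2s$ produced by the chain rule. Hence $\LB$ maps the class $\{F(s^2):F\in C^\infty([0,\infty))\}$ into itself. Induction on $n$ then gives that $\LB^n G\big(\h,(\h^\prime)^2\big)$ is smooth in $s$, including at $s=0$, which is the claim.
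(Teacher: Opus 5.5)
Your proof is correct, but it takes a different route from the paper. The paper lifts $\h$ to the radial function $u(x,y)=\h(\sqrt{x^2+y^2})$ on $\mathbb{R}^2$. Since $\LB$ is the two-dimensional Laplacian on radial functions, $u$ solves the semilinear elliptic equation $(\partial_x^2+\partial_y^2)u+M_h^2u=-\Delta(u)$, and the paper invokes elliptic regularity to get $u\in C^\infty(\mathbb{R}^2)$. It then identifies $\LB^n G(\h,(\h')^2)$ with $(\partial_x^2+\partial_y^2)^n G(u,|\nabla u|^2)$ restricted to the positive $x$-axis. There, the appearance of $(\h')^2$ rather than $\h'$ comes from $|\nabla u|^2=(\h')^2$; in your argument it comes from parity.

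You stay in one dimension instead. The substitution $u=s^2$ turns the regular-singular ODE into an integral equation with a bounded averaging operator, and bootstrapping gives $\h=g(s^2)$ with $g$ smooth. The exact cancellation of $1/s$ against the chain-rule factor $2s$ then shows that $\LB$ preserves the class of smooth functions of $s^2$. Both proofs rest on the same fact: smooth radial functions on $\mathbb{R}^2$ are exactly smooth functions of $r^2$.

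Your version is self-contained and elementary. It needs only Picard iteration, uniqueness of the regular solution, and differentiation under the integral sign, and it treats the one delicate point, $s=0$, explicitly. The paper's version is shorter but leans on elliptic regularity. It also leaves implicit that the lift $u$ is actually a weak solution across the origin, which is where boundedness and $\h'(0)=0$ must enter. Your argument additionally proves the stronger statement that every $\LB^n G(\h,(\h')^2)$ is a smooth function of $s^2$. It carries over verbatim to $\partial_s^2+\frac{d-1}{s}\partial_s$ in other dimensions, since that operator acting on $F(s^2)$ gives $4uF''+2dF'$ evaluated at $u=s^2$.
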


\begin{proof}
    Notice that the function $\h$ is the solution of the differential equation 
    \begin{align}
        \h^{\prime\prime}(s) + \frac{1}{s} \h^{\prime}(s) + M_h^2 \h(s) = - \Delta(\h(s)) \ . 
    \end{align}
    Let $u(x,y) = \h\Big(\sqrt{x^2 + y^2}\Big)$. The equation above can be rewritten as 
    \begin{align}
        (\partial_x^2 + \partial_y^2)u(x,y) + M_h^2 u(x,y) = - \Delta(u(x,y)) \ ,
    \end{align}
    Since $u$ is a bounded solution to a semilinear elliptic equation, elliptic regularity implies that $u(x,y)$ is smooth in $(x,y)$. Hence, for any $n \in \mathbb{N}$,
    \begin{align}
        \big(\LB^n G\big(\h(s),(\h^\prime(s))^2\big)\big)\Big|_{s = \sqrt{x^2 + y^2}} = (\partial_x^2 + \partial_y^2)^n G(u(x,y),|\nabla u(x,y)|^2)
    \end{align}
    is smooth with respect to $(x,y)$. Setting $x \geq 0$ and $y=0$, we conclude that the expression $\big(\LB^n G\big(\h(s),(\h^\prime(s))^2\big)\big)\Big|_{s = x}$ is smooth in $x$, which is exactly the desired result after relabeling the variable. 
\end{proof} 

The lemma above establishes regularity at finite $s$. We now turn to the behavior as $s \to \infty$. Let $Q \coloneqq C^\infty(\mathbb{R}^2)[x]$ be the ring of polynomials in $x$ with coefficients in the space of smooth functions on $\mathbb{R}^2$, and define $A_n \coloneqq \left\{ \sum_{p=0}^n g_p(\h(s),\h^\prime(s);s^{-1}) \h(s)^p (\h^\prime(s))^{n-p} : g_p \in Q \right\}$. We then have the following lemma.
\begin{lemma} \label{lemma:bound}
    $\LB A_n \subset A_n$ for any $n \in \mathbb{N}$. 
\end{lemma}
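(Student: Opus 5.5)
The plan is to show first that the single derivative $\frac{\d}{\d s}$ maps $A_n$ into itself. Since $\LB=\frac{\d^2}{\d s^2}+s^{-1}\frac{\d}{\d s}$, the claim then follows from two facts: applying $\frac{\d}{\d s}$ twice stays in $A_n$, and $A_n$ is closed under multiplication by $s^{-1}$. The second fact is immediate, because multiplying a coefficient $g_p\in Q$ by $x=s^{-1}$ keeps it a polynomial in $x$ with smooth coefficients. More generally, $A_n$ is a module over $Q$, where elements of $Q$ act after evaluation at $(\h,\h^\prime;s^{-1})$. Throughout we work on $s>0$, where all these expressions are defined. Lemma~\ref{lemma:smoothness} guarantees that the functions involved are smooth, so differentiating them is legitimate.

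\textbf{Step 1: $\h^{\prime\prime}\in A_1$.} By the trapping equation,
\begin{align}
    \h^{\prime\prime}=-s^{-1}\h^\prime-M_h^2\h-\Delta(\h).
\end{align}
Since $\Delta$ is smooth with $\Delta(0)=\Delta^\prime(0)=0$, Hadamard's lemma gives $\Delta(\h)=\h\,\delta(\h)$, where $\delta(\h)=\int_0^1\Delta^\prime(t\h)\,\d t$ is smooth. Hence
\begin{align}
    \h^{\prime\prime}=\bigl[-M_h^2-\delta(\h)\bigr]\h+(-s^{-1})\h^\prime,
\end{align}
and both coefficients lie in $Q$. So $\h^{\prime\prime}\in A_1$.

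\textbf{Step 2: $\frac{\d}{\d s}A_n\subset A_n$.} Take a generator $g_p(\h,\h^\prime;s^{-1})\,\h^p(\h^\prime)^{n-p}$ and differentiate it by the product rule.
\begin{itemize}
\item \emph{Derivative of the coefficient.} By the chain rule,
\begin{align}
    \frac{\d g_p}{\d s}=\partial_1 g_p\,\h^\prime+\partial_2 g_p\,\h^{\prime\prime}-s^{-2}\,\partial_x g_p .
\end{align}
The first and last terms are in $Q$. For the last term, $\partial_x g_p$ is again a polynomial in $x$, and multiplying by $s^{-2}=x^2$ keeps it one. The middle term is in $Q$ as well, because Step 1 writes $\h^{\prime\prime}$ as a $Q$-combination of $\h$ and $\h^\prime$, which are themselves smooth functions of the first two arguments. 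Multiplying by $\h^p(\h^\prime)^{n-p}$, this part of the derivative lies in $A_n$.
\item \emph{Derivative of the monomial.} It equals
\begin{align}
    p\,\h^{p-1}(\h^\prime)^{n-p+1}+(n-p)\,\h^p(\h^\prime)^{n-p-1}\h^{\prime\prime}.
\end{align}
The first term is a degree-$n$ monomial, so it is in $A_n$. In the second term we substitute $\h^{\prime\prime}\in A_1$, which produces degree-$n$ monomials with $Q$ coefficients. Again the result is in $A_n$.
\end{itemize}

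\textbf{Step 3: conclusion.} Applying Step 2 twice, together with closure under $s^{-1}$, gives $\LB A_n\subset A_n$.

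The only delicate point is the bookkeeping in Step 1. The claim needs every term of the derivative to keep total degree at least $n$ in $(\h,\h^\prime)$; otherwise the decay $s^{-n/2}$ would be lost. The potentially dangerous term is the nonlinearity $\Delta(\h)$, which contains no explicit power of $\h$. It is the hypothesis $\Delta(0)=\Delta^\prime(0)=0$ that lets us factor out one power of $\h$ and so preserves the degree.
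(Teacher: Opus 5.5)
Your proof is correct and is essentially the paper's argument. Both use closure under $s^{-1}$, the product rule on each generator $g_p(\h,\h^\prime;s^{-1})\h^p(\h^\prime)^{n-p}$, and elimination of $\h^{\prime\prime}$ via the trapping equation, writing $\Delta(\h)=\h\cdot\bigl(\Delta(\h)/\h\bigr)$ to preserve the degree; packaging this as $\h^{\prime\prime}\in A_1$ is a tidier way of handling the paper's second and fifth terms, and only $\Delta(0)=0$, not $\Delta^\prime(0)=0$, is needed for that factorization.
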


\begin{proof}
    Clearly, $s^{-1} A_n \subset A_n$, so it remains to prove $\partial_s A_n \subset A_n$. 
    Indeed,
    \begin{align}
        &\quad \partial_s \left( \sum_{p=0}^n g_p(\h(s),\h^\prime(s);s^{-1}) \h(s)^p (\h^\prime(s))^{n-p} \right) \nnn
        &=\sum_{p=0}^n g_p^{(1,0,0)}(\h(s),\h^\prime(s);s^{-1}) \h^\prime(s) \h(s)^p (\h^\prime(s))^{n-p} \nnn 
        &\quad + \sum_{p=0}^n g_p^{(0,1,0)}(\h(s),\h^\prime(s);s^{-1}) \h^{\prime\prime}(s) \h(s)^p (\h^\prime(s))^{n-p} \nnn
        &\quad - \sum_{p=0}^n g_p^{(0,0,1)}(\h(s),\h^\prime(s);s^{-1}) s^{-2} \h(s)^p (\h^\prime(s))^{n-p} \nnn 
        &\quad + \sum_{p=0}^n p g_p(\h(s),\h^\prime(s);s^{-1}) \h(s)^{p-1} (\h^\prime(s))^{n-p+1} \nnn
        &\quad + \sum_{p=0}^n g_p(\h(s),\h^\prime(s);s^{-1}) (n-p) \h^{\prime\prime}(s) \h(s)^p (\h^\prime(s))^{n-p-1} \ . 
    \end{align}

    The first, third, and fourth terms in the expression above belong to $A_n$. It therefore suffices to analyze the second and last terms. 

    Using the fact that 
    \begin{align}
        \h^{\prime\prime}(s) = - s^{-1} \h^\prime(s) - M_h^2 \h(s) - \Delta(\h(s)) \ , 
    \end{align}
    the second term is immediately seen to lie in $A_n$. 

    The last term can be written as 
    \begin{align}
        &\quad \sum_{p=0}^n g_p(\h(s),\h^\prime(s);s^{-1}) (n-p) \h^{\prime\prime}(s) \h(s)^p (\h^\prime(s))^{n-p-1} \nnn
        &= - \sum_{p=0}^n g_p(\h(s),\h^\prime(s);s^{-1}) s^{-1} (n-p) \h(s)^p (\h^\prime(s))^{n-p} \nnn
        &\quad - \sum_{p=0}^n g_p(\h(s),\h^\prime(s);s^{-1}) \bigg(M_h^2 + \frac{\Delta(\h)}{\h}\bigg) (n-p) \h(s)^{p+1} (\h^\prime(s))^{n-p-1} \ , 
    \end{align}
    so this term also belongs to $A_n$. 

    Therefore $\partial_s A_n \subset A_n$, which implies $\LB A_n = (\partial_s^2 + s^{-1} \partial_s) A_n \subset A_n$. 
\end{proof} 

For simplicity, we denote 
\begin{align}
    H_{i j k} \coloneqq \h(s)^i (\h^{\prime}(s))^j s^{-k} \ , 
\end{align}
and let $B_n = \left\{f + \sum_{i+j+2 k \geq n} C_{i j k} H_{i j k} : f \in A_n, |\{(i,j,k) : C_{ijk} \neq 0\}| < \infty \right\}$. 
We then obtain the following theorem.
\begin{theorem} \label{theorem:bound}
    $\LB B_n \subset B_n$ for any $n \in \mathbb{N}$. 
\end{theorem}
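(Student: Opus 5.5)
Since $\LB$ is linear and $B_n = A_n + \mathrm{span}\{H_{ijk} : i+j+2k\ge n\}$, Lemma~\ref{lemma:bound} already gives $\LB A_n\subset A_n\subset B_n$. It therefore suffices to show $\LB H_{ijk}\in B_n$ for a single monomial with $i+j+2k\ge n$. The plan is to compute $\partial_s H_{ijk}$ explicitly and check that each term is either a monomial of weight at least $n$ or an element of $A_n$. The weight is $w(H_{ijk})=i+j+2k$. Applying $s^{-1}\partial_s$ and $\partial_s^2$ then follows the same bookkeeping.

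The first step is to differentiate once:
\begin{align}
    \partial_s H_{ijk} = i H_{i-1,j+1,k} + j\,\h^{\prime\prime} H_{i,j-1,k} - k H_{i,j,k+1} \, .
\end{align}
The first term keeps the weight and the third raises it by $2$. For the middle term, substitute the trapping equation $\h^{\prime\prime}=-s^{-1}\h^\prime-M_h^2\h-\Delta(\h)$. This gives $-jH_{i,j,k+1}$ (weight $+2$) and $-jM_h^2H_{i+1,j-1,k}$ (weight unchanged). It also gives $-j\,\Delta(\h)H_{i,j-1,k}$. Since $\Delta(0)=\Delta^\prime(0)=0$ and $\Delta$ is smooth, we may write $\Delta(\h)=\h^2\tilde\Delta(\h)$ with $\tilde\Delta$ smooth. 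The term is then $\tilde\Delta(\h)\,\h^{i+2}(\h^\prime)^{j-1}s^{-k}$, which has total degree $i+j+1$ in $(\h,\h^\prime)$ with a smooth coefficient.

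The key observation is that a term of the form $g(\h,\h^\prime;s^{-1})\,\h^a(\h^\prime)^b$ with $a+b\ge n$ lies in $A_n$. Choose any $p$ with $p\le a$ and $n-p\le b$, and absorb the surplus factor $\h^{a-p}(\h^\prime)^{b-n+p}$, together with the power $s^{-k}$, into the coefficient $g\in Q$; this is allowed because $Q$ consists of polynomials in $s^{-1}$. Hence every smooth-coefficient term of total $(\h,\h^\prime)$-degree at least $n$ lies in $A_n$.

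Monomials $H_{ijk}$ with $i+j\ge n$ therefore already lie in $A_n$, and the genuinely new elements are those with $i+j<n\le i+j+2k$. For these the $\Delta$-term has degree $i+j+1$, which may still be below $n$. Here I would Taylor-expand $\tilde\Delta(\h)$ to order $N$ with $N\ge n$: the polynomial part gives finitely many monomials $H_{i+2+m,\,j-1,\,k}$ of weight $w+1+m\ge n$, and the remainder is $\h^{N+1}r(\h)\cdot(\ldots)$, of degree at least $n$, hence in $A_n$. Consequently $\partial_s H_{ijk}\in B_n$. Multiplying by $s^{-1}$ raises the weight by $2$ and preserves $A_n$, so $s^{-1}\partial_s H_{ijk}\in B_n$. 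Finally, $\partial_s B_n\subset B_n$ follows from the computation above together with $\partial_s A_n\subset A_n$, which was shown in the proof of Lemma~\ref{lemma:bound}. This yields $\partial_s^2 H_{ijk}\in B_n$.

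The main obstacle is making sure the weight never drops under differentiation, which requires tracking the $\h^{\prime\prime}$ substitution carefully. The dangerous terms are the ones that preserve weight: $iH_{i-1,j+1,k}$ and $M_h^2H_{i+1,j-1,k}$. These are harmless only because the weight counts $\h$ and $\h^\prime$ equally. The $\Delta$ nonlinearity must be handled through the Taylor split so that only finitely many $C_{ijk}$ appear, as the definition of $B_n$ requires. I would also check the edge cases $i=0$ or $j=0$, where the corresponding terms simply vanish.
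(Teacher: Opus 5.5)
Your proposal is correct and follows the paper's proof essentially step for step. You reduce to $\partial_s$ of a single monomial $H_{ijk}$, with $s^{-1}B_n\subset B_n$ and Lemma~\ref{lemma:bound} covering the $A_n$ part. You then substitute the trapping equation for $\h^{\prime\prime}$ and split $\Delta$ into a finite Taylor polynomial, which contributes finitely many monomials of weight at least $n$, plus a remainder of total $(\h,\h^{\prime})$-degree at least $n$ that lies in $A_n$. Your explicit remark that any smooth-coefficient term of total $(\h,\h^{\prime})$-degree at least $n$, times powers of $s^{-1}$, belongs to $A_n$ is exactly the fact the paper uses implicitly when it places its remainder term in $A_n$.
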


\begin{proof}
    Clearly, $s^{-1} B_n \subset B_n$, so it remains to show $\partial_s B_n \subset B_n$. Since $A_n \subset B_n$, Lemma~\ref{lemma:bound} gives $\partial_s A_n \subset A_n \subset B_n$. Hence the derivative of $f$ lies in $B_n$. 

    Next we consider the $H_{ijk}$ terms: 
    \begin{align}
        &\quad \partial_s \left( \sum_{i+j+2 k \geq n} C_{i j k} H_{i j k} \right) \nnn
        &=\sum_{i+j+2 k \geq n} i C_{i j k} H_{i-1,j+1,k}
        + \sum_{i+j+2 k \geq n} j C_{i j k} \h^{\prime\prime}(s) H_{i,j-1,k} 
        - \sum_{i+j+2 k \geq n} k C_{i j k} H_{i,j,k+1} 
    \end{align}

    The first and last terms are in $B_n$, so it suffices to treat the second term. 

    Using the fact that 
    \begin{align}
        \h^{\prime\prime}(s) = - s^{-1} \h^\prime(s) - M_h^2 \h(s) - \Delta(\h(s)) \ , 
    \end{align}
    the second term can be written as 
    \begin{align}
        &\quad \sum_{i+j+2 k \geq n} j C_{i j k} \h^{\prime\prime}(s) H_{i,j-1,k} \nnn
        &= - \sum_{i+j+2 k \geq n} j C_{i j k} H_{i,j,k+1} - \sum_{i+j+2 k \geq n} j C_{i j k} M_h^2 H_{i+1,j-1,k} \nnn
        &\quad - \sum_{i+j+2 k \geq n} \sum_{q = 2}^{n} \frac{j C_{i j k} }{q!} \Delta^{q}(0) H_{i+q,j-1,k} \nnn
        &\quad - \sum_{i+j+2 k \geq n} j C_{i j k} \frac{\Delta(\h) - \sum_{q = 2}^{n} \Delta^{q}(0) \h^q / q!}{ \h^{n+1}} H_{i+n+1,j-1,k} \ .  
    \end{align}
    The first three terms belong to $B_n$, while the last one lies in $A_n \subset B_n$. Hence the whole expression belongs to $B_n$. 

    Therefore $\partial_s B_n \subset B_n$, and consequently $\LB B_n = (\partial_s^2 + s^{-1} \partial_s) B_n \subset B_n$. 
\end{proof} 

This theorem leads to the following corollary.
\begin{corollary} \label{corollary:bound}
    Let $f \in B_4$ be a function, then $\LB^n f = O(s^{-2})$ for any $n \in \mathbb{N}$.  
\end{corollary}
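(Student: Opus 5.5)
The plan is to combine Theorem~\ref{theorem:bound} with a direct size estimate for elements of $B_4$. Since $f \in B_4$, Theorem~\ref{theorem:bound} applied $n$ times gives $\LB^n f \in B_4$ for every $n \in \mathbb{N}$. It therefore suffices to show that \emph{every} element of $B_4$ is $O(s^{-2})$ as $s \to \infty$. The corollary then follows immediately, with no further dependence on $n$.

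To bound a general element $g = f_0 + \sum_{i+j+2k\ge 4} C_{ijk} H_{ijk}$ of $B_4$, with $f_0 \in A_4$, I treat the two pieces separately.

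\textbf{The $A_4$ part.} By definition,
\begin{align}
    f_0 = \sum_{p=0}^4 g_p\big(\h(s),\h^\prime(s);s^{-1}\big)\, \h(s)^p \big(\h^\prime(s)\big)^{4-p},
\end{align}
where each $g_p$ is a polynomial in $s^{-1}$ whose coefficients are smooth functions of $(\h,\h^\prime)$. By Appendix~\ref{appendixA}, $\h(s),\h^\prime(s) = O(s^{-1/2})$, so $(\h,\h^\prime)$ stays in a compact set for $s \ge 1$. On that set the smooth coefficients are bounded, and $s^{-1} \le 1$, so each $g_p$ is bounded for $s \ge 1$. Each monomial $\h^p (\h^\prime)^{4-p}$ is $O(s^{-2})$, hence $f_0 = O(s^{-2})$.

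\textbf{The $H_{ijk}$ part.} There are finitely many nonzero $C_{ijk}$, and each satisfies $i+j+2k \ge 4$. Then
\begin{align}
    |H_{ijk}| \le C\, s^{-i/2} s^{-j/2} s^{-k} = C\, s^{-(i+j+2k)/2} \le C\, s^{-2}
\end{align}
for $s \ge 1$. Summing finitely many such terms gives $O(s^{-2})$.

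The only point needing care is uniformity in the $A_4$ part: the smooth coefficients $g_p$ must be evaluated on a bounded range. This is exactly what the bound from Appendix~\ref{appendixA} provides, together with continuity of $\h$ on compact $s$-intervals. Lemma~\ref{lemma:smoothness} is not needed for the large-$s$ statement; it only guarantees that the $\LB^n f$ are well-defined smooth functions. No other obstacle is expected, because the invariance of $B_4$ under $\LB$ has already been established.
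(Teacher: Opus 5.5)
Your proposal is correct and follows the paper's proof essentially verbatim: you use the invariance $\LB^n B_4 \subset B_4$ from Theorem~\ref{theorem:bound}, together with the estimate that every element of $B_4$ is $O(s^{-2})$, since the coefficients $g_p(\h,\h^\prime;s^{-1})$ are bounded at large $s$ and $H_{ijk} = O(s^{-(i+j+2k)/2})$. Your remark about evaluating the smooth coefficients on a bounded range only makes explicit the boundedness the paper asserts in one line.
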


\begin{proof}
    Since $\h,\h^\prime = O(s^{-1/2})$, we have $H_{ijk} = O(s^{-(i+j+2k)/2})$. Thus for $i + j + 2 k \geq 4$, we have $H_{ijk} = O(s^{-2})$. By the smoothness of $g_p$, the function $g_p(\h(s),\h^\prime(s);s^{-1})$ is bounded for large $s$, so $g_p(\h(s),\h^\prime(s);s^{-1}) H_{p,4-p,0} = O(s^{-2})$. Therefore, for any $f \in B_4$, we have $f = O(s^{-2})$. By Theorem~\ref{theorem:bound}, we have $\LB^n B_4 \subset B_4$, hence $\LB^n f = O(s^{-2})$. 
\end{proof}

Corollary~\ref{corollary:bound} shows that every $f \in B_4$ is $O(s^{-2})$ as $s \to \infty$. Consequently, $B_4 \cap L^1_{\rm loc}(0,\infty) \subset L^1(0,\infty)$.

Now consider the transformation 
\begin{align}
    F[f](\alpha) = \int_0^{\infty} 2 s \d s \ {\rm K}_0(\alpha s) f(s) \ , 
\end{align}
where we restrict $\alpha$ to have non-negative real part. Unless otherwise specified, $F$ is defined on the class of all functions $f$ such that $(s_0+s)^{-N} s^{1/2} f(s)\in L^1(0,\infty)$ for some $N > 0$ and $s_0 > 0$; this is a sufficient working domain for this note, though not the maximal one. We denote this class by $\F$, so $s^{1/2} \F \coloneqq \{ s^{1/2} f : f \in \F \} \subset L^1_{\rm loc}(0,\infty)$. Then the integral $F[f](\alpha)$ is well defined and analytic in $\alpha$ on $D = \{ \alpha \in \mathbb{C} : {\rm Re}(\alpha) > 0 \}$. We now prove the following property of the transform $F$.
\begin{theorem}\label{theorem:continuity}
    For $f \in \F$, we have $F[f](\rmi \omega) = \lim_{\epsilon \to 0^+} F[f](\epsilon + \rmi \omega)$ as generalized functions. 
\end{theorem}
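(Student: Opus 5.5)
The plan is to interpret both sides as tempered distributions in $\omega$ and test them against an arbitrary Schwartz function $\psi(\omega)$. The statement then amounts to showing that
\begin{equation}
  \lim_{\epsilon\to0^+}\int \d\omega\,\psi(\omega)\,F[f](\epsilon+\rmi\omega)
  =\int_0^\infty 2s\,\d s\, f(s)\,\Psi_0(s),
\end{equation}
where the right-hand side is taken as the definition of the pairing of $F[f](\rmi\omega)$ with $\psi$. Here, for $\epsilon\ge 0$, we set
\begin{equation}
  \Psi_\epsilon(s)=\int \d\omega\,\psi(\omega)\,{\rm K}_0\big((\epsilon+\rmi\omega)s\big).
\end{equation}
When $f$ is such that $F[f](\rmi\omega)$ exists as an ordinary (locally integrable) function, an application of Fubini identifies the two meanings, so it suffices to prove the displayed limit.

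The key tool is the integral representation ${\rm K}_0(\alpha s)=\int_0^\infty \d t\, e^{-\alpha s\cosh t}$, valid for ${\rm Re}\,\alpha>0$ and, as an oscillatory limit, on the imaginary axis. Inserting it and exchanging the order of the $\omega$- and $t$-integrations gives
\begin{equation}
  \Psi_\epsilon(s)=\int_0^\infty \d t\, e^{-\epsilon s\cosh t}\,\hat\psi(s\cosh t),
  \qquad
  \hat\psi(u)=\int \d\omega\,\psi(\omega)e^{-\rmi\omega u}.
\end{equation}
Since $\hat\psi$ is Schwartz, $|\hat\psi(u)|\le C_N(1+u)^{-N}$ for every $N$. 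Together with $|e^{-\epsilon s\cosh t}|\le 1$, this yields a bound uniform in $\epsilon\ge0$:
\begin{equation}
  |\Psi_\epsilon(s)|\le C_N\int_0^\infty \frac{\d t}{(1+s\cosh t)^{N}}\le C_N'\,(1+s)^{-N}\big(1+|\log s|\big).
\end{equation}
The logarithm reflects the behaviour ${\rm K}_0(z)\sim-\log z$ near $z=0$: the $t$-integral has effective length about $\log(1/s)$ for small $s$. The same absolute bound also justifies exchanging $\int\d\omega$ with $\int\d s$ in $\int\d\omega\,\psi\,F[f](\epsilon+\rmi\omega)$ for $\epsilon>0$, giving $\int 2s\,\d s\,f(s)\Psi_\epsilon(s)$.

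The limit then follows by dominated convergence. Pointwise, $\Psi_\epsilon(s)\to\Psi_0(s)$ for each $s>0$, by dominated convergence in $t$ with the dominating function $|\hat\psi(s\cosh t)|$. For the $s$-integral, the dominating function is $2s|f(s)|\,C_N'(1+s)^{-N}(1+|\log s|)$, and it is integrable for $f\in\F$. At large $s$, the weight $(s_0+s)^{-N_f}s^{1/2}f\in L^1$ is absorbed once $N$ is chosen larger than $N_f+1$. At small $s$, we use $s(1+|\log s|)\lesssim s^{1/2}$ together with $s^{1/2}f\in L^1_{\rm loc}$.

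The step requiring the most care is the uniform-in-$\epsilon$ bound above. In particular, one must justify the $t$-representation and the interchange with the $\omega$-integral at $\epsilon=0$, where ${\rm K}_0(\rmi\omega s)$ is only conditionally defined and has a logarithmic singularity at $\omega=0$. I would handle this by working at $\epsilon>0$, where everything is absolutely convergent, and defining $\Psi_0$ as the pointwise limit. Then $\Psi_0$ agrees with the $t$-integral formula, so the limit is the same object whichever order of integration is used.
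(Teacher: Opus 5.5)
Your proof is correct and uses the same mechanism as the paper's: the representation $\mathrm{K}_0(z)=\int_0^\infty \d t\,e^{-z\cosh t}$, combined with trading $\omega$-smoothness of the test function for decay in $s$. The execution differs, though.

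The paper integrates by parts $n$ times in $\omega$ against the Bickley--Naylor functions $\mathrm{Ki}_n$. This gives the $\epsilon$-uniform bound $s^{-n}\|\mathrm{Ki}_n\|_\infty\|\partial^n\varphi\|_1$, which is useful only at large $s$. The paper therefore splits the $s$-integral at some $a$ and controls $[a,\infty)$ with that bound. It treats $[0,a]$ separately via $\mathrm{Ki}_1$ and dominated convergence on $(0,a)\times(0,\infty)$, using a majorant built from $(1-e^{-\epsilon s\cosh t})/\cosh t$, and closes with a $\delta/2$ argument.

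Your formula $\Psi_\epsilon(s)=\int_0^\infty\d t\,e^{-\epsilon s\cosh t}\hat\psi(s\cosh t)$ is the same computation in Fourier form. Inserting the $t$-representation into the paper's identity reproduces it exactly: the $(\cosh t)^{-n}$ in $\mathrm{Ki}_n$ absorbs the factor $(s\cosh t)^n$ coming from $\widehat{\partial^n\varphi}$. Your version, however, gives a single majorant $C_N'(1+s)^{-N}(1+|\log s|)$ on all of $(0,\infty)$, uniform in $\epsilon\ge0$. So one application of dominated convergence finishes the proof, and the small-$s$ logarithm is visible directly. The paper's version stays entirely on the $\omega$ side, at the price of the two-region bookkeeping.

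Two steps need a line more than you give them.

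First, the interchange of $\int\d\omega$ and $\int\d s$ at $\epsilon>0$ is not justified by your bound on $|\Psi_\epsilon|$, because that bound relies on cancellation in $\omega$. Use instead the absolute bound $|\mathrm{K}_0((\epsilon+\rmi\omega)s)|\le\mathrm{K}_0(\epsilon s)$, together with $\int_0^\infty 2s|f(s)|\mathrm{K}_0(\epsilon s)\,\d s<\infty$ for $f\in\F$.

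Second, you need to identify your pointwise limit $\Psi_0$ with $\int\d\omega\,\psi(\omega)\mathrm{K}_0(\rmi\omega s)$, which is what enters the paper's $T_0$. Apply dominated convergence in $\omega$ with majorant $C|\psi(\omega)|\bigl(1+|\log(|\omega|s)|\bigr)$. This majorant is valid because $|(\epsilon+\rmi\omega)s|\ge|\omega|s$, because $|\mathrm{K}_0(z)|\lesssim 1+|\log|z||$ on $\mathrm{Re}\,z\ge0$ with $|z|\le1$, and because $\mathrm{K}_0$ is bounded there for $|z|\ge1$.
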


\begin{proof}
    Let $\varphi(\omega)$ be a test function, and define $T_\epsilon(\omega) = F[f](\epsilon + \rmi \omega)$ as the corresponding generalized function. Then
    \begin{align}
        T_\epsilon(\varphi) = \int_0^{\infty} 2 s \d s \ f(s) \int_{-\infty}^{\infty} \d \omega \ {\rm K}_0((\epsilon + \rmi \omega) s) \varphi(\omega) \ . 
    \end{align}
    Exchanging the order of integration is harmless here, since the right-hand side is precisely the definition of the generalized function $T_\epsilon$. 

    Let $ {\rm Ki}_n(z)$ be defined by 
    \begin{align}
         {\rm Ki}_n(z) = \int_0^{\infty} \d t \ \frac{e^{- z \cosh t}}{\cosh^n t} \ ,
    \end{align} 
    which is the so-called Bickley--Naylor function. If $n \geq 1$, then $ {\rm Ki}_n$ is bounded and satisfies $\partial^n  {\rm Ki}_n(z) = (-1)^n {\rm K}_0(z)$. 

    For $\epsilon \geq 0$, one has
    \begin{align}
        &~\int_{-\infty}^{\infty} \d \omega \  {\rm Ki}_n((\epsilon + \rmi \omega) s) (\partial^n \varphi)(\omega) 
        = (-\rmi s)^{n} \int_{-\infty}^{\infty} \d \omega \ (\partial^n  {\rm Ki}_n)((\epsilon + \rmi \omega) s) \varphi(\omega) \nnn
        =& (\rmi s)^{n} \int_{-\infty}^{\infty} \d \omega \ {\rm K}_0((\epsilon + \rmi \omega) s) \varphi(\omega)  \ .
    \end{align}
    Therefore,
    \begin{align}
        \bigg| \int_{-\infty}^{\infty} \d \omega \ {\rm K}_0((\epsilon + \rmi \omega) s) \varphi(\omega) \bigg| \leq s^{-n} \|  {\rm Ki}_n \|_\infty \| \partial^n \varphi \|_1 \ .
    \end{align}

    Hence
    \begin{align}
        &~ |(T_\epsilon - T_0)(\varphi)| = \bigg| \int_0^{\infty} 2 s \d s \ f(s) \int_{-\infty}^{\infty} \d \omega \ ({\rm K}_0((\epsilon + \rmi \omega) s) - {\rm K}_0(\rmi \omega s)) \varphi(\omega) \bigg| \nnn
        \leq &~ 2 \int_0^{a} \d s \ |f(s)| \bigg| s \int_{-\infty}^{\infty} \d \omega \ ({\rm K}_0((\epsilon + \rmi \omega) s) - {\rm K}_0(\rmi \omega s)) \varphi(\omega) \bigg| \nnn
        &+ \int_a^{\infty} 2 s \d s \ |f(s)| \bigg| \int_{-\infty}^{\infty} \d \omega \ {\rm K}_0((\epsilon + \rmi \omega) s) \varphi(\omega) \bigg| 
        + \int_a^{\infty} 2 s \d s \ |f(s)| \bigg| \int_{-\infty}^{\infty} \d \omega \ {\rm K}_0(\rmi \omega s) \varphi(\omega) \bigg| \nnn
        \leq &~ 2 \int_0^{a} \d s \ |f(s)| \int_{-\infty}^{\infty} \d \omega \ |  {\rm Ki}_1((\epsilon + \rmi \omega) s) -  {\rm Ki}_1(\rmi \omega s) | \cdot |\partial \varphi(\omega) | \nnn
        &+ 4 \|  {\rm Ki}_n \|_\infty \| \partial^n \varphi \|_1 \int_a^{\infty} \d s \ |f(s)| s^{-n+1} \nnn
        \leq &~ 2 \|\partial \varphi \|_1 \int_0^{a} \d s \ \int_0^{\infty} \d t \ |f(s)| \frac{1 - e^{-\epsilon s \cosh{t}}}{\cosh{t}} + 4 \|  {\rm Ki}_n \|_\infty \| \partial^n \varphi \|_1 \int_a^{\infty} \d s \ |f(s)| s^{-n+1} \nnn
    \end{align}
    Choose $n$ large enough so that $s^{-n+1} f(s) \in L^1(a,\infty)$. For any $\delta > 0$, we can then take $a$ large enough that the second term is below $\delta/2$. Since $\lim_{\epsilon \to 0^+} |f(s)| (1 - e^{-\epsilon s \cosh{t}})/\cosh{t} = 0$ pointwise in $(s,t)$, and for $\epsilon < 1$ we have 
    \begin{align}
        0 \leq s^{1/2} |f(s)| \frac{1 - e^{-\epsilon s \cosh{t}}}{s^{1/2} \cosh{t}} \leq \frac{(s_0 + a)^N s^{1/2} |f(s)|}{(s_0 + s)^N \sqrt{\cosh{t}}} \in L^1((0,a) \times (0,\infty)) \ . 
    \end{align}
    By dominated convergence, we may choose $\epsilon$ small enough that the first term is below $\delta/2$. Hence $|(T_\epsilon - T_0)(\varphi)| < \delta$ for sufficiently small $\epsilon$, which implies $T_\epsilon \to T_0$ in the sense of generalized functions. 
\end{proof}

Now our goal is to compute 
\begin{align}
    &~ I_n = F[\LB^n \Delta] = \int_0^{\infty} 2 s \d s \  {\rm K}_0(\alpha s) \LB^n \Delta(\h(s)) \ . 
\end{align}

We first prove the following lemma.
\begin{lemma} \label{lemma:L1function}
    Let $f \in \F$ be a function on $[0,\infty)$ with $s^{1/2} f \in L^1(0,\infty)$. Then the function $F[f](\alpha)$ is continuous in $\alpha$ on $\bar{D} \setminus \{0\}$ and $F[f](\alpha) = O(|\alpha|^{-1/2})$ as $\alpha \to \infty$. 
\end{lemma}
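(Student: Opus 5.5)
The plan is to rest everything on a single uniform pointwise bound for the Bessel kernel on the closed right half-plane minus the origin:
\begin{align}
    |{\rm K}_0(z)| \le C\,\bigl(1+|\log|z||\bigr)\,\min\bigl(1,|z|^{-1/2}\bigr), \qquad {\rm Re}\, z \ge 0,\ z\neq 0 .
\end{align}
I would derive it from the integral representation ${\rm K}_0(z)=\int_0^\infty e^{-z\cosh t}\,\d t$, understood as an improper oscillatory integral on the imaginary axis, or equivalently from the Hankel asymptotics. The Hankel form ${\rm K}_0(z)\sim\sqrt{\pi/(2z)}\,e^{-z}$ holds uniformly in $|\arg z|\le\pi/2$, and $|e^{-z}|\le 1$ there, which gives the $|z|^{-1/2}$ decay. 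Near the origin, ${\rm K}_0(z)=-\log(z/2)-\gamma+O(z^2\log z)$ gives the logarithmic bound.

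Continuity on $\bar D\setminus\{0\}$ will follow by dominated convergence. Fix $\alpha_0\in\bar D\setminus\{0\}$ and restrict to a neighborhood with $|\alpha|\in[r,R]$, $0<r<R$. The integrand $2s\,{\rm K}_0(\alpha s)f(s)$ is continuous in $\alpha$ for each $s>0$, since ${\rm K}_0$ is continuous on $\bar D\setminus\{0\}$.

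For a dominating function I would split the $s$-range. For $s\ge 1/r$ the bound gives $2s|{\rm K}_0(\alpha s)f(s)|\le C' s^{1/2}|f(s)|$, which is integrable by hypothesis. For $s\le 1/r$ I would use $2s|{\rm K}_0(\alpha s)|\le C s(1+|\log(rs)|+\log R)$. This is bounded by a constant times $s^{1/2}$ on $(0,1/r]$, because $s^{1/2}|\log s|$ is bounded there. So $C'' s^{1/2}|f(s)|$ dominates as well, and this also shows that $F[f]$ is defined as an absolutely convergent integral on $\bar D\setminus\{0\}$.

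For the decay, the uniform bound $|{\rm K}_0(z)|\le C|z|^{-1/2}(1+|\log|z||)$ is not clean enough, since the logarithm near $z=0$ spoils a direct estimate. I would instead split at $s=|\alpha|^{-1}$.
\begin{itemize}
    \item On $s\ge|\alpha|^{-1}$, use $|{\rm K}_0(\alpha s)|\le C|\alpha s|^{-1/2}$. This contribution is at most $2C|\alpha|^{-1/2}\int s^{1/2}|f|$, which is $O(|\alpha|^{-1/2})$.
    \item On $s<|\alpha|^{-1}$, write $u=|\alpha|s<1$. Then $s|{\rm K}_0(\alpha s)|\le C s(1+|\log u|)=C|\alpha|^{-1/2}s^{1/2}\,u^{1/2}(1+|\log u|)$. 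Since $u^{1/2}(1+|\log u|)$ is bounded on $(0,1)$, this piece is also bounded by a constant times $|\alpha|^{-1/2}\int s^{1/2}|f|$.
\end{itemize}
Together these give $F[f](\alpha)=O(|\alpha|^{-1/2})$.

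The main obstacle is the uniform kernel bound on the imaginary axis, where ${\rm K}_0(\rmi y)$ is a combination of Hankel functions and the integral representation is only conditionally convergent. I would handle it by citing the standard Hankel asymptotic with its uniform remainder on $|\arg z|\le\pi/2$, together with continuity of ${\rm K}_0$ on compact subsets of $\bar D\setminus\{0\}$. Alternatively, $\sqrt{z}\,e^{z}{\rm K}_0(z)$ is bounded on the closed half-plane away from $0$, which follows from the representation ${\rm K}_0(z)=\sqrt{\pi/(2z)}\,e^{-z}\,\Gamma(1/2)^{-1}\int_0^\infty e^{-t}t^{-1/2}(1+t/(2z))^{-1/2}\,\d t$. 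In that representation $|1+t/(2z)|\ge 1$ for ${\rm Re}\,z\ge0$, which gives the bound immediately.
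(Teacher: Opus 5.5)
Your proof is correct and essentially matches the paper's: both rest on the uniform bound $s^{1/2}|{\rm K}_0(\alpha s)|\le C|\alpha|^{-1/2}$ combined with $\|s^{1/2}f\|_1$, and your dominated-convergence proof of continuity does the same job as the paper's truncation at $s=L$ plus uniform continuity of the kernel. The representation in your last paragraph already gives $|{\rm K}_0(z)|\le\sqrt{\pi/(2|z|)}$ on all of $\bar D\setminus\{0\}$, including near $z=0$, so it yields the paper's one-line estimate $|F[f](\alpha)|\le(2\pi)^{1/2}|\alpha|^{-1/2}\|s^{1/2}f\|_1$ directly and makes both the split at $s=|\alpha|^{-1}$ and the logarithmic small-argument bound unnecessary.
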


\begin{proof}
    This follows directly from the definition of $F$:
    \begin{align}
        |F[f](\alpha)| \leq \int_0^{\infty} 2 s \d s \ |{\rm K}_0(\alpha s) f(s)| \leq 2 \| s^{1/2} {\rm K}_0(\alpha s) \|_\infty \| s^{1/2} f \|_1 \leq (2 \pi)^{1/2} |\alpha|^{-1/2} \| s^{1/2} f \|_1 \ . 
    \end{align}
    Thus $F[f](\alpha) = O(|\alpha|^{-1/2})$ as $\alpha \to \infty$. 

    The continuity of $F[f](\alpha)$ in $\alpha$ on $\bar{D} \setminus \{0\}$ follows from 
    \begin{align}
        &~|F[f](\alpha) - F[f](\beta)| \nnn
        \leq&~ 2 \| s^{1/2} ({\rm K}_0(\alpha s) - {\rm K}_0(\beta s)) \chi_{[0,L)} \|_\infty \| s^{1/2} \chi_{[0,L)} f \|_1 \nnn 
        &+ 2 \| s^{1/2} ({\rm K}_0(\alpha s) - {\rm K}_0(\beta s)) \chi_{[L,\infty)} \|_\infty \| s^{1/2} \chi_{[L,\infty)} f \|_1 \nnn
        \leq&~ 2 \| s^{1/2} ({\rm K}_0(\alpha s) - {\rm K}_0(\beta s)) \chi_{[0,L)} \|_\infty \| s^{1/2} f \|_1 \nnn
        &+ (2 \pi)^{1/2} (|\alpha|^{-1/2} + |\beta|^{-1/2}) \| s^{1/2} \chi_{[L,\infty)} f \|_1 \ . 
    \end{align}
    For any $\epsilon > 0$, choose $L$ large enough so that the second term is less than $\epsilon/2$. Then choose $\delta$ small enough so that the first term is less than $\epsilon/2$ whenever $|\alpha - \beta| < \delta$. Hence $|F[f](\alpha) - F[f](\beta)| < \epsilon$ for $|\alpha - \beta| < \delta$, which implies that $F[f](\alpha)$ is continuous in $\alpha$ on $\bar{D} \setminus \{0\}$. 
\end{proof}

For continuous functions $f$ satisfying suitable assumptions, the following relation between $F[\LB f]$ and $F[f]$ holds.
\begin{lemma} \label{lemma:IBP}
    Let $f \in \F$ be a continuous function on $[0,\infty)$ with $\LB f \in \F$. Then we have 
    \begin{align}
        F[\LB f](\alpha) = \alpha^2 F[f](\alpha) - 2 f(0) \ . 
    \end{align}
    for any $\alpha \in D$. 
\end{lemma}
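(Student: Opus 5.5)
The plan is to prove the identity by integrating by parts twice on a truncated interval $[\delta,L]$ and then sending $\delta\to0^+$ and $L\to\infty$. The key fact is that for $\alpha\in D$ the function $g(s)={\rm K}_0(\alpha s)$ satisfies $\LB g = \alpha^2 g$. Since $2s\,\LB f = 2(s f')'$, we have on $[\delta,L]$
\begin{align}
\int_\delta^L 2s\,{\rm K}_0(\alpha s)\,\LB f\,\d s
= \Big[2s\big(f'(s)g(s) - f(s)g'(s)\big)\Big]_\delta^L + \alpha^2\int_\delta^L 2s\, g(s) f(s)\,\d s .
\end{align}
This uses Green's identity for the self-adjoint form $(s f')'$. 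The first step is to justify it: $f$ is continuous and $\LB f\in\F$ is locally integrable, so $s f'$ is absolutely continuous on compact subintervals of $(0,\infty)$. In the intended application $f=\LB^n\Delta(\h)$ is smooth by Lemma~\ref{lemma:smoothness}, which makes this step automatic. In general, $(sf')' = s\LB f$ in the distributional sense, and that suffices.

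Next, the upper boundary term is handled using $\mathrm{Re}\,\alpha>0$. Both ${\rm K}_0(\alpha s)$ and ${\rm K}_0'(\alpha s)$ decay like $s^{-1/2}e^{-\mathrm{Re}(\alpha) s}$. Meanwhile $f$ and $\LB f$ grow at most polynomially in an averaged sense, since they lie in $\F$. The delicate point is that we have no pointwise control of $f'$, only of $f$ and $\LB f$. To avoid needing it, I would not take $L\to\infty$ along all $L$. Instead I would average the identity over $L\in[L_0,2L_0]$, or choose a sequence $L_j\to\infty$ along which the boundary term tends to zero. The quantity $s f'(s) = c+\int^s t\,\LB f(t)\,\d t$ grows at most polynomially, because $(s_0+s)^{-N}s^{1/2}\LB f\in L^1$. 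Multiplying by the exponential decay of $g$ then kills this term. The term $s f g'$ is handled similarly: integrate $|f|$ against the decaying weight and pick good $L_j$.

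The lower boundary term is evaluated using the small-argument asymptotics ${\rm K}_0(z)\sim-\log z$ and $s\,\partial_s{\rm K}_0(\alpha s)\to -1$. Together with continuity of $f$ at $0$, these give $-2s f(s) g'(s)\to 2f(0)$. This term enters with a minus sign from the lower limit, which yields the $-2f(0)$. The other piece, $2s f'(s)g(s)\sim -2 s\log s\, f'(s)$, must vanish. This needs $s f'(s)\to0$ as $s\to0$, or at least $s\log s\, f'(s)\to 0$. That follows from $s f'(s)=\int_0^s t\,\LB f\,\d t$, provided the constant of integration vanishes. It does: a nonzero constant would give $f\sim c\log s$, contradicting continuity of $f$ at $0$. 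The local integrability of $t\,\LB f$ near $0$ follows from $s^{1/2}\LB f\in L^1_{\rm loc}$.

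Finally, both bulk integrals converge absolutely to $F[\LB f](\alpha)$ and $F[f](\alpha)$, because $f,\LB f\in\F$ and $|{\rm K}_0(\alpha s)|$ is integrable against $s$ near zero and exponentially decaying at infinity. The main obstacle is the boundary analysis: controlling $f'$ without a pointwise hypothesis, both at $s\to0$ (needing $sf'\to 0$) and at infinity (needing a subsequence choice). The expression for $sf'$ as an integral of $s\LB f$ is what I expect to settle both ends.
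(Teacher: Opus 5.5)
Your proposal is correct and follows essentially the paper's route. You integrate by parts twice against ${\rm K}_0(\alpha s)$, use $sf'(s)=C+\int_0^s t\,\LB f(t)\,\d t$ with $C=0$ forced by continuity of $f$ at $0$, and control both ends through $|sf'(s)|\le s^{1/2}\int_0^s t^{1/2}|\LB f(t)|\,\d t$.

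Note that it is this $s^{1/2}$-weighted bound, and not mere integrability of $t\,\LB f$, that beats the $\log s$ in $sf'(s){\rm K}_0(\alpha s)$ as $s\to0$. The same bound gives pointwise polynomial growth of $f'$ and $f$ at infinity, so your subsequence/averaging device is unnecessary there.

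The only other difference is cosmetic. The paper first subtracts $f(0)e^{-s^2}$ to reduce to $f(0)=0$, so all boundary terms vanish. You instead read off $-2f(0)$ directly from the lower boundary term $2sf(s)\partial_s{\rm K}_0(\alpha s)\to-2f(0)$.
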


\begin{proof}
    Decompose $f$ as $f_1 + f_2$, where $f_1(0) = 0$ and $f_2(s) = f(0) e^{-s^2}$. Then $\LB f_2 \in \F$ and $F[\LB f_2](\alpha) = \alpha^2 F[f_2](\alpha) - 2 f(0)$, so it is enough to prove the lemma for functions satisfying $f(0)=0$. 

    Let $f$ be a continuous function on $[0,\infty)$ with $f(0) = 0$ and $\LB f \in \F$. Let $g(s) = \LB f$. Then $s f^\prime(s) = C + \int_0^s \d t \, t g(t)$. Since $g \in \F$, we have $s^{1/2} g(s) \in L^1_{\rm loc}(0,\infty)$, and thus 
    \begin{align}
        |(f(s) - C \log{(s/s_0)})^\prime| \leq s^{-1} \int_0^s t |g(t)| \d t \leq s^{-1/2} \int_0^s t^{1/2} |g(t)| \d t \in L^1_{\rm loc}(0,\infty) \ . 
    \end{align}
    Thus $f(s) - C \log{(s/s_0)}$ is continuous on $[0,\infty)$. Since $f$ itself is continuous on $[0,\infty)$, we must have $C = 0$. Using the estimate
    \begin{align}
        s^{1/2} |f^\prime(s)| \leq \int_0^s \d t \ t^{1/2} |g(t)| \ , 
    \end{align}
    we find that $s^{1/2} f^\prime(s)$ is locally absolutely continuous on $[0,\infty)$ with $\lim_{s \to 0} s^{1/2} f^\prime(s) = 0$. Therefore $\lim_{s \to 0} {\rm K}_0(\alpha s) s f^\prime(s) = 0$. 

    On the other hand, 
    \begin{align}
        &~(s + s_0)^{-N} s^{1/2} |f^\prime(s)| \leq \int_0^s \d t \ t^{1/2} (s + s_0)^{-N} |g(t)| \nnn
        \leq&~ \int_0^\infty \d t \ t^{1/2} (t + s_0)^{-N} |g(t)| = \|(s + s_0)^{-N} s^{1/2} g(s)\|_1 < \infty 
    \end{align}
    shows that $f \in \F$ and $f^\prime(s) = O((s + s_0)^N)$. Then $\lim_{s \to \infty} {\rm K}_0(\alpha s) s f^\prime(s) = 0$ for ${\rm Re}(\alpha) > 0$. 

    By the same argument, $f$ is locally absolutely continuous on $[0,\infty)$ with $\lim_{s \to 0} f(s) = 0$. Therefore both boundary limits of $\alpha s {\rm K}_0(\alpha s) f(s)$, as $s\to0$ and $s\to\infty$, vanish.

    Integration by parts is therefore valid, with vanishing boundary terms, and yields
    \begin{align}
        &~F[\LB f](\alpha) = \lim_{a \to 0^+} \int_a^\infty 2 s \d s \ {\rm K}_0(\alpha s) \LB f(s) \nnn
        =&~ \lim_{a \to 0^+} \int_a^\infty 2 \d s \ {\rm K}_0(\alpha s) (s f^{\prime}(s))^\prime 
        = \lim_{a \to 0^+} \int_a^\infty 2 \alpha s \d s \ {\rm K}_1(\alpha s) f^{\prime}(s) \nnn
        =&~ \lim_{a \to 0^+} \int_a^\infty 2 \alpha^2 s \d s \ {\rm K}_0(\alpha s) f(s) 
        = \alpha^2 F[f](\alpha) \ . 
    \end{align}
    This completes the proof. 
\end{proof}

The lemma above directly leads to the following theorem.
\begin{theorem} \label{theorem:MultipleFunctions}
    Let $\{ f_i : f_i \in \F \}_{i=1}^N$ be a finite collection of continuous functions on $[0,\infty)$. Suppose that there is an $N \times N$ constant matrix $A$ such that 
    \begin{align}
        s^{1/2} h_i \coloneqq s^{1/2} \left(\LB f_i + \sum_{j=1}^N A_{ij} f_j \right) \in L^1(0,\infty)
    \end{align}
    for each $i$. Then $F[f_i](\alpha) = O(|\alpha|^{-2})$ as $\alpha \to \infty$ on $\bar{D} \setminus \{0\}$ for each $i$. 
\end{theorem}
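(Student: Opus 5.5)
The plan is to apply Lemma~\ref{lemma:IBP} to each $f_i$ and solve the resulting linear system for the vector $F[f_i](\alpha)$.

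First we check that Lemma~\ref{lemma:IBP} applies. Each $f_i$ is continuous and in $\F$. We also need $\LB f_i\in\F$. Since $\LB f_i = h_i - \sum_j A_{ij} f_j$, and $s^{1/2}h_i\in L^1(0,\infty)$ implies $h_i\in\F$ (take any $N>0$), $\F$ is a linear space, so $\LB f_i\in\F$. Lemma~\ref{lemma:IBP} then gives, for $\alpha\in D$,
\begin{align}
    F[h_i](\alpha) = \alpha^2 F[f_i](\alpha) - 2 f_i(0) + \sum_{j} A_{ij} F[f_j](\alpha) \ .
\end{align}
In vector notation, with $\mathbf{F}(\alpha)=(F[f_i](\alpha))_i$, $\mathbf{H}(\alpha)=(F[h_i](\alpha))_i$ and $\mathbf{c}=(f_i(0))_i$, this reads
\begin{align}
    (\alpha^2\mathbb{1}+A)\,\mathbf{F}(\alpha) = \mathbf{H}(\alpha) + 2\mathbf{c} \ .
\end{align}

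Next we invert. For $|\alpha|^2 > \|A\|$ (operator norm), the matrix $\alpha^2\mathbb{1}+A$ is invertible. Its inverse is bounded by $(|\alpha|^2-\|A\|)^{-1}=O(|\alpha|^{-2})$, via the Neumann series. Lemma~\ref{lemma:L1function} applied to each $h_i$ gives $\mathbf{H}(\alpha)=O(|\alpha|^{-1/2})$ on $\bar D\setminus\{0\}$, and $\mathbf{c}$ is constant. Therefore
\begin{align}
    \mathbf{F}(\alpha) = (\alpha^2\mathbb{1}+A)^{-1}\bigl(\mathbf{H}(\alpha)+2\mathbf{c}\bigr) = O(|\alpha|^{-2}) \ ,
\end{align}
uniformly for $\alpha\in D$ with $|\alpha|$ large. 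This argument even yields the refined form $2(\alpha^2+A)^{-1}\mathbf{c}+O(|\alpha|^{-5/2})$.

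The main obstacle is extending the estimate from the open half-plane $D$, where Lemma~\ref{lemma:IBP} holds, to the boundary $\mathrm{Re}\,\alpha=0$, where $F[f_i]$ is a priori only defined via Theorem~\ref{theorem:continuity} as a distributional limit. The plan is to define $\mathbf{F}$ on the imaginary axis by the identity $\mathbf{F}(\alpha)=(\alpha^2+A)^{-1}(\mathbf{H}(\alpha)+2\mathbf{c})$. The right-hand side is continuous on $\bar D\setminus\{0\}$ away from the finitely many zeros of $\det(\alpha^2+A)$: $\mathbf{H}$ is continuous by Lemma~\ref{lemma:L1function}, and the matrix inverse depends continuously on $\alpha$. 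Hence the pointwise limit $\epsilon\to0^+$ of $F[f_i](\epsilon+\rmi\omega)$ exists for $|\omega|$ large. The convergence is also locally uniform, hence it holds in the distributional sense, so by Theorem~\ref{theorem:continuity} it coincides with $F[f_i](\rmi\omega)$. The $O(|\alpha|^{-2})$ bound, being uniform in $\epsilon$, passes to the limit. A residual care point is that this identifies the boundary value only for large $|\omega|$, but that suffices for an asymptotic statement.
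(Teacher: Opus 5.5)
Your proposal is correct and follows the paper's proof essentially step for step. You apply Lemma~\ref{lemma:IBP} to obtain $(\alpha^2 I + A)\mathbf{F}(\alpha) = \mathbf{H}(\alpha) + 2\mathbf{c}$ on $D$, invert for large $|\alpha|$ using the $O(|\alpha|^{-1/2})$ bound from Lemma~\ref{lemma:L1function}, and reach the imaginary axis by combining continuity of the right-hand side on $\bar D\setminus\{0\}$ with the distributional boundary value from Theorem~\ref{theorem:continuity}, exactly as the paper does. Your invertibility threshold $|\alpha|^2>\|A\|$ is slightly more careful than the paper's $|\alpha|>R>\|A\|$, and your remark that locally uniform convergence implies distributional convergence makes explicit a step the paper leaves implicit.
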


\begin{proof}
    By assumption, $s^{1/2} h_i \in L^1(0,\infty)$ for each $i$, so linear combinations of $h_i$ and $f_i$ belong to $\F$. Thus $\LB f_i \in \F$. According to Lemma~\ref{lemma:L1function}, $F[\LB f_i](\alpha) + \sum_{j=1}^N A_{ij} F[f_j](\alpha)$ is continuous in $\alpha$ on $\bar{D} \setminus \{0\}$, and
    \begin{align}
        F[h_i](\alpha) = F[\LB f_i](\alpha) + \sum_{j=1}^N A_{ij} F[f_j](\alpha) = O(|\alpha|^{-1/2})
    \end{align}
    as $\alpha \to \infty$. 

    On the other hand, for $\alpha \in D$, using Lemma~\ref{lemma:IBP}, we have
    \begin{align}
        F[\LB f_i](\alpha) = \alpha^2 F[f_i](\alpha) - 2 f_i(0) \ .
    \end{align}
    Combining these two equations gives
    \begin{align}
        F[h_i](\alpha) + 2 f_i(0) = \sum_j (\alpha^2 \delta_{ij} + A_{ij}) F[f_j](\alpha)  \ .
    \end{align}
    Let $R > \| A \|$ be a constant. Then for $|\alpha| > R$, the matrix $\alpha^2 I + A$ is invertible, and we have 
    \begin{align}
        F[f_i](\alpha) = \sum_j (\alpha^2 I + A)^{-1}_{ij} (F[h_j](\alpha) + 2 f_j(0)) \ .
    \end{align}

    Since $F[h_j](\alpha) = O(|\alpha|^{-1/2})$ as $\alpha \to \infty$, we have $F[f_i](\alpha) = O(|\alpha|^{-2})$ as $\alpha \to \infty$. Moreover, by Theorem~\ref{theorem:continuity},
    \begin{align}
        F[f_i](\rmi \omega) = \lim_{\epsilon \to 0^+} F[f_i](\epsilon + \rmi \omega)
    \end{align}
    as generalized functions. Since $F[h_j](\alpha)$ is continuous in $\alpha$ on $\bar{D} \setminus \{0\}$ and $(\alpha^2 I + A)^{-1}$ is continuous in $\alpha$ on $\{z \in \mathbb{C} : |z| \geq R\}$, $F[f_i](\alpha)$ is continuous in $\alpha$ on $\bar{D} \cap \{z \in \mathbb{C} : |z| \geq R\}$. Therefore, $F[f_i](\alpha) = O(|\alpha|^{-2})$ as $\alpha \to \infty$ on $\bar{D} \setminus \{0\}$ for each $i$.  
\end{proof}

We now consider the integral $F[\h^i (\h^{\prime})^j s^{-k}] \eqqcolon F(i,j,k)$. 

We first treat the cases with $1 \leq i + j + 2 k < 4$. Under the assumption $j \geq k$, the possibilities are:
\begin{enumerate}
    \item[{\bf 1.}] $i = 0, j = 1, k = 1$, which corresponds to $F(0,1,1)$. 
    \item[{\bf 2.}] $i + j = 3, k = 0$, which corresponds to $F(3,0,0), F(2,1,0), F(1,2,0)$ and $F(0,3,0)$. 
    \item[{\bf 3.}] $i + j = 2, k = 0$, which corresponds to $F(2,0,0), F(1,1,0)$ and $F(0,2,0)$. 
    \item[{\bf 4.}] $i + j = 1, k = 0$, which corresponds to $F(1,0,0)$ and $F(0,1,0)$. 
\end{enumerate}

A useful observation is that $\Delta(\h) = O(s^{-1})$ and $\Delta^\prime(\h) = O(s^{-1/2})$ as $s \to \infty$, since $\h, \h^\prime = O(s^{-1/2})$. We will prove the following theorem: 
\begin{theorem} \label{theorem:leq3}
    For $1 \leq i + j + 2 k < 4$ and $j \geq k$, the integral $F(i,j,k) = O(|\alpha|^{-2})$ as $\alpha \to \infty$.
\end{theorem}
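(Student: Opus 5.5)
The plan is to apply Theorem~\ref{theorem:MultipleFunctions} to a suitable finite family of monomials $f = H_{ijk}$, closed under $\LB$ up to remainders in $B_4$. Corollary~\ref{corollary:bound} controls these remainders, since they are $O(s^{-2})$.

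The precise requirement is a finite set of functions $f_1,\dots,f_N$ containing the listed $F(i,j,k)$ and a constant matrix $A$ with $s^{1/2}(\LB f_i + \sum_j A_{ij} f_j)\in L^1(0,\infty)$. Each $H_{ijk}$ with $j\ge k$ is continuous on $[0,\infty)$ and lies in $\F$. Continuity holds because $\h'(s)=O(s)$ near $0$ (smoothness, Lemma~\ref{lemma:smoothness}), so $(\h')^j s^{-k}$ is finite there. Membership in $\F$ follows from boundedness at infinity. A remainder $r=O(s^{-2})$ at infinity that is bounded near $0$ satisfies $s^{1/2}r\in L^1$. The remainder that fails this is exactly one of order $s^{-3/2}$ or slower, which is why the threshold $i+j+2k\ge 4$ matters.

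I would compute $\LB H_{ijk}$ explicitly. I would use $\h''=-s^{-1}\h'-M_h^2\h-\Delta(\h)$ and expand $\Delta(\h)=\tfrac12\Delta''(0)\h^2+\tfrac16\Delta'''(0)\h^3+O(\h^4)$, keeping only terms of total weight $i+j+2k<4$. One derivative maps $H_{ijk}$ to
\[
iH_{i-1,j+1,k}-j\,H_{i,j,k+1}-jM_h^2H_{i+1,j-1,k}-kH_{i,j,k+1}
\]
plus terms of higher weight, so weight never decreases. The second derivative together with the $s^{-1}\partial_s$ term behaves the same way. At weight $1$, $\LB\h=-M_h^2\h-\Delta(\h)$, and $\LB\h'=-M_h^2\h'+s^{-2}\h'-\Delta'(\h)\h'$. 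The weight-$2$ terms generated here, $\h^2$ and $\h\h'$, must stay in the family. Weight-$2$ objects ($\h^2,\h\h',(\h')^2$) mix among themselves under $\LB$ with $M_h^2$ coefficients. They generate weight-$3$ objects ($\h'/s$ and cubic monomials), and $\h'/s$ in turn produces $\h/s$-type terms. Those with $j<k$, such as $\h s^{-1}$, have to be rewritten or kept in the family only where continuity at $0$ holds, which is the reason for the restriction $j\ge k$.

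The natural structure is therefore a triangular system. I would order the family by weight and take $A$ to include every linear term of weight $<4$ produced by $\LB$, including the cross-weight terms (e.g.\ the $\tfrac12\Delta''(0)\h^2$ produced by $\LB\h$). This makes the remainder $\LB f_i+\sum_j A_{ij}f_j$ lie in $B_4$ plus terms already absorbed. Corollary~\ref{corollary:bound} then gives the required integrability, and Theorem~\ref{theorem:MultipleFunctions} yields $O(|\alpha|^{-2})$ for every member.

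The main obstacle is the terms with $j<k$ that appear, notably $\h s^{-1}$ from $\partial_s(\h'/s)$ and $\h's^{-2}$. The term $\h s^{-1}$ is not continuous-compatible at $0$ and has weight $3$, so it cannot be discarded into $B_4$. I would first try to combine it as $s^{-1}\h'$-type expressions, using $\LB$-identities like $\partial_s(\h^2)=2\h\h'$ to trade $\h/s$ for admissible members. Failing that, I would enlarge the family with $(\h - \h(0)e^{-s^2})/s$-type regularized versions. Finiteness of the closed family is automatic, since there are only finitely many monomials with weight $<4$. The only real work is verifying that the cross terms in each $\LB H_{ijk}$ with $j<k$ pair up into continuous combinations.
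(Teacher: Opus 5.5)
Your framework is the paper's. You take the ten monomials $H_{ijk}$ with $1\le i+j+2k<4$ and $j\ge k$, write each $\LB H_{ijk}$ as a constant combination of these monomials plus an element of $B_4$, and then invoke Theorem~\ref{theorem:MultipleFunctions} together with Corollary~\ref{corollary:bound}. The gap is your closing paragraph: the ``main obstacle'' you identify does not exist, and you miss the two facts that actually finish the argument. Rewriting $\h/s$, or enlarging the family by regularized functions such as $(\h-\h(0)e^{-s^2})/s$, is unnecessary.

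First, you tracked a single $\partial_s$ rather than $\LB=\partial_s^2+s^{-1}\partial_s$. The weight-3 term $-M_h^2\h/s$ in $\partial_s(\h'/s)$ is differentiated once more and combined with the $s^{-1}\partial_s$ piece, and no weight-$<4$ term with $j<k$ survives:
\[
\LB\Big(\frac{\h'}{s}\Big)=-M_h^2\frac{\h'}{s}+\frac{4\h'}{s^3}+\frac{2M_h^2\h}{s^2}+\frac{2\Delta(\h)}{s^2}-\frac{\Delta'(\h)\h'}{s}.
\]
Hence $(\LB+M_h^2)H_{011}\in B_4$, and $H_{011}$ decouples from the other monomials. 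Likewise $\LB\h'=-M_h^2\h'+s^{-2}\h'-\Delta'(\h)\h'$ and $\LB\h^2=2(\h')^2-2M_h^2\h^2-2\h\Delta(\h)$. The $s^{-1}$-terms either cancel or appear only at weight $\ge 4$ (for example $H_{021}$ and $H_{112}$ in $\LB(\h\h')$). So $H_{001}$ and $H_{101}$ are never produced. Contrary to your sketch, the quadratic monomials generate only cubic monomials at weight 3, not $\h'/s$.

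Second, terms such as $\h's^{-2}$, $\h s^{-2}$ and $\h's^{-3}$ do appear, but they have weight $\ge4$ and simply belong to the $B_4$ remainder. It does not matter that they are singular at $s=0$. Nor does the remainder need to be bounded there: for $f=\h'$ it behaves like $\h''(0)/s$.

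What Theorem~\ref{theorem:MultipleFunctions} requires is $s^{1/2}h_i\in L^1(0,\infty)$ for the single function $h_i=f_i''+s^{-1}f_i'+\sum_jA_{ij}f_j$.
\begin{itemize}
\item Near $s=0$ this holds automatically, because each $f_i=H_{ijk}$ with $j\ge k$ is smooth on $[0,\infty)$; this is where $j\ge k$ enters.
\item The $B_4$ representation of $h_i$ is used only at large $s$, where Corollary~\ref{corollary:bound} gives $h_i=O(s^{-2})$.
\end{itemize}
This is exactly how the paper concludes after listing its ten explicit relations.
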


\begin{proof}
    Since $j \geq k$, the function $H_{ijk}$ is smooth on $[0,\infty)$. Hence $s^{1/2} H_{ijk} \in L^1_{\rm loc}(0,\infty)$ and $s^{1/2} \LB H_{ijk} = s^{1/2} H_{ijk}^{\prime\prime} + s^{-1/2} H_{ijk}^{\prime} \in L^1_{\rm loc}(0,\infty)$. Moreover, $\h, \h^{\prime} = O(s^{-1/2})$ implies $H_{ijk}(s) = O(s^{-(i+j+2k)/2})$ as $s \to \infty$. We now proceed case by case. 

    By direct computation, we have the following equations: 
    \begin{align}
        &~(\LB + M_h^2) H_{011} \in B_4 \ , \\ 
        &~(\LB + 3 M_h^2) H_{300} - 6 H_{120} \in B_4 \ , \\ 
        &~(\LB + 7 M_h^2) H_{120} - 2 M_h^4 H_{300} \in B_4 \ , \\ 
        &~(\LB + 7 M_h^2) H_{210} - 2 H_{030} \in B_4 \ , \\
        &~(\LB + 3 M_h^2) H_{030} - 6 M_h^4 H_{210} \in B_4 \ , \\ 
        &~(\LB + 2 M_h^2) H_{200} - 2 H_{020} + \Delta^{\prime\prime}(0) H_{300} \in B_4 \ , \\ 
        &~(\LB + 4 M_h^2) H_{110} + \frac{5}{2} \Delta^{\prime\prime}(0) H_{210} \in B_4 \ , \\ 
        &~(\LB + 2 M_h^2) H_{020} - 2 M_h^4 H_{200} + 2 M_h^2 \Delta^{\prime\prime}(0) H_{300} + 2 \Delta^{\prime\prime}(0) H_{120} \in B_4 \ , \\ 
        &~(\LB + M_h^2) H_{100} + \frac{1}{2} \Delta^{\prime\prime}(0) H_{200} + \frac{1}{6} \Delta^{(3)}(0) H_{300} \in B_4 \ , \\ 
        &~(\LB + M_h^2) H_{010} + \Delta^{\prime\prime}(0) H_{110} + \frac{1}{2} \Delta^{(3)}(0) H_{210} \in B_4 \ . 
    \end{align}

    Since all functions above lie in $L^1_{\rm loc}(0,\infty)$, Corollary~\ref{corollary:bound} implies that they actually belong to $L^1(0,\infty)$. Thus $H_{ijk}$ satisfies the conditions of Theorem~\ref{theorem:MultipleFunctions}. Therefore $F(i,j,k) = O(|\alpha|^{-2})$ as $\alpha \to \infty$ for all $(i,j,k)$ with $1 \leq i + j + 2 k < 4$ and $j \geq k$. 
\end{proof}

We can now prove the main theorem.
\begin{theorem} \label{theorem:main}
    The integral $F[\LB^n \Delta] = O(|\alpha|^{-2})$ as $\alpha \to \infty$.
\end{theorem}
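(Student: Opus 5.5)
The plan is to show that $\LB^n \Delta(\h(s))$ splits into a finite linear combination of the monomials $H_{ijk}$ with $1\le i+j+2k<4$ and $j\ge k$, plus a remainder in $B_4$. The low-order pieces are controlled by Theorem~\ref{theorem:leq3}. The remainder is controlled by Lemma~\ref{lemma:L1function} together with Theorem~\ref{theorem:MultipleFunctions}.

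First I Taylor expand $\Delta$ around $0$. Since $\Delta(0)=\Delta'(0)=0$,
\begin{align}
    \Delta(\h) = \frac{\Delta''(0)}{2}H_{200} + \frac{\Delta^{(3)}(0)}{6}H_{300} + \frac{\Delta(\h)-\Delta''(0)\h^2/2-\Delta^{(3)}(0)\h^3/6}{\h^4}\,\h^4 .
\end{align}
The last term lies in $A_4$, because the quotient is a smooth function of $\h$. Hence $\Delta\in \mathrm{span}\{H_{200},H_{300}\}+B_4$.

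Next I show by induction that the class
\[
\mathcal{S}\coloneqq \mathrm{span}\{H_{ijk}: 1\le i+j+2k<4,\ j\ge k\}+B_4
\]
is invariant under $\LB$. For the $B_4$ part this is Theorem~\ref{theorem:bound}. For each low-order monomial, $\LB H_{ijk}$ is computed exactly as in the explicit relations listed in the proof of Theorem~\ref{theorem:leq3}. In each relation, $\LB H_{ijk}$ equals a combination of low-order $H$'s plus an element of $B_4$. I must check that only monomials with $j\ge k$ appear, so that no $s^{-1}$ singularities arise at $s=0$; for example, $\LB H_{010}$ produces $H_{011}$ terms, which stay admissible. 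Consequently
\[
\LB^n\Delta=\sum c_{ijk}H_{ijk}+r_n, \qquad r_n\in B_4 .
\]

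The next task is to control $F[r_n]$. By Lemma~\ref{lemma:smoothness}, $\LB^n\Delta$ is smooth on $[0,\infty)$, and so are the finitely many admissible $H_{ijk}$. Hence $r_n$ is locally integrable near $s=0$ (in particular $s^{1/2}r_n\in L^1_{\rm loc}$). Corollary~\ref{corollary:bound} gives $r_n=O(s^{-2})$, so $s^{1/2}r_n\in L^1$ at infinity as well. This alone yields only $F[r_n]=O(|\alpha|^{-1/2})$ by Lemma~\ref{lemma:L1function}, which is too weak.

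To upgrade the bound, I use $\LB r_n=\LB^{n+1}\Delta-\sum c_{ijk}\LB H_{ijk}$, which again lies in $B_4$ and is locally integrable. So both $r_n$ and $\LB r_n$ satisfy the hypotheses of Theorem~\ref{theorem:MultipleFunctions} with $N=1$ and $A=0$. That theorem gives $F[r_n]=O(|\alpha|^{-2})$. Adding the contributions $F(i,j,k)=O(|\alpha|^{-2})$ from Theorem~\ref{theorem:leq3} finishes the proof.

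The main obstacle is the closure step: verifying that $\LB$ applied to the admissible low-order monomials never generates monomials with $j<k$, which would be singular at $s=0$. A related risk is that the remainders need not be individually smooth at the origin. If closure fails, I would instead work directly with the smooth functions $\LB^m\Delta$ via Lemma~\ref{lemma:smoothness}, treating the collection $\{\LB^m\Delta\}_{m\le n}$ together with the $H_{ijk}$ as the family in Theorem~\ref{theorem:MultipleFunctions}.
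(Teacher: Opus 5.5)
Your proposal is correct, but it is organized differently from the paper's proof.

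\textbf{The paper's route.} The paper never decomposes $\LB^n\Delta$. It applies Theorem~\ref{theorem:MultipleFunctions} once, to the five-function family $\{\LB^n\Delta,\LB^{n+1}H_{200},\LB^{n+1}H_{020},\LB^{n+1}H_{300},\LB^{n+1}H_{120}\}$. The coupled relations for this family come from applying $\LB^{n+1}$ to the level-zero relations in the proof of Theorem~\ref{theorem:leq3} and using $\LB B_4\subset B_4$. Smoothness of every member comes from Lemma~\ref{lemma:smoothness}. The conclusion of Theorem~\ref{theorem:leq3} is never used.

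\textbf{Your route.} You show by induction that $\LB^n\Delta$ equals a fixed finite combination of low-order monomials plus a remainder $r_n\in B_4$. You then use linearity of $F$ and Theorem~\ref{theorem:leq3} for the monomials. The remainder is handled by the case $N=1$, $A=0$ of Theorem~\ref{theorem:MultipleFunctions}, which is essentially Lemma~\ref{lemma:IBP} combined with Lemma~\ref{lemma:L1function}.

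\textbf{Comparison.} Your route is more modular. It makes explicit that the only nontrivially decaying contributions are the fixed transforms $F(i,j,k)$. The paper's route avoids tracking coefficients and avoids any closure induction.

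\textbf{The closure step you flagged.} It is in fact immediate. Four of the listed relations show that $\mathrm{span}\{H_{200},H_{020},H_{300},H_{120}\}+B_4$ is already $\LB$-invariant, and $\Delta$ starts in this class. So only these four monomials ever appear in the expansion of $\LB^n\Delta$. They are even in $s$, hence smooth with smooth $\LB$-images, so $r_n$ and $\LB r_n$ are smooth on $[0,\infty)$. This is exactly why the paper's family has those members.

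\textbf{A small correction.} A direct computation gives $\LB H_{010}=H_{012}-M_h^2H_{010}-\Delta'(\h)\h'$. So $\LB H_{010}$ produces $H_{012}=\h'/s^2$, not $H_{011}$. This term has $j<k$, is absorbed into $B_4$, and behaves like $\h''(0)/s$ near the origin. It is harmless for two reasons. First, $H_{010}$ never enters the expansion of $\LB^n\Delta$. Second, Theorem~\ref{theorem:MultipleFunctions} only needs $s^{1/2}\LB r_n\in L^1(0,\infty)$, which tolerates a $1/s$ singularity at $s=0$. Your phrase ``locally integrable'' should be read in this $s^{1/2}$-weighted sense.
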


\begin{proof}
    Since $\Delta$ is smooth in $\h$ and satisfies $\Delta(0) = \Delta^\prime(0) = 0$, we have 
    \begin{align}
        \Delta_4(\h) = \Delta(\h(s)) - \frac{1}{2} \Delta^{\prime\prime}(0) H_{200} - \frac{1}{6} \Delta^{(3)}(0) H_{300} \in A_4 \subset B_4
    \end{align}
    as $\h \to 0$. Then the function $U(\h) = \Delta_4(\h) / \h^4$ is smooth with respect to $\h$, and we have $\Delta_4(\h) = U(\h) \h^4 \in A_4$. 

    On the one hand, since $\LB^{n+1} \Delta(\h), \LB^{n+1} H_{200}, \LB^{n+1} H_{020}, \LB^{n+1} H_{300}, \LB^{n+1} H_{120}$ are smooth with respect to $s$ by Lemma~\ref{lemma:smoothness}, they are locally $L^1$-integrable. 

    On the other hand, $B_4$ is invariant under $\LB$, so we have 
    \begin{align}\label{eq:it}
        &~(\LB + 3 M_h^2) \LB^{n+1} H_{300} - 6 \LB^{n+1} H_{120} \in B_4 \ , \\ 
        &~(\LB + 7 M_h^2) \LB^{n+1} H_{120} - 2 M_h^4 \LB^{n+1} H_{300} \in B_4 \ , \\ 
        &~(\LB + 2 M_h^2) \LB^{n+1} H_{200} - 2 \LB^{n+1} H_{020} + \Delta^{\prime\prime}(0) \LB^{n+1} H_{300} \in B_4 \ , \\ 
        &~(\LB + 2 M_h^2) \LB^{n+1} H_{020} - 2 M_h^4 \LB^{n+1} H_{200} + 2 M_h^2 \Delta^{\prime\prime}(0) \LB^{n+1} H_{300} + 2 \Delta^{\prime\prime}(0) \LB^{n+1} H_{120} \in B_4 \ , \\ 
        &~\LB \LB^n \Delta(\h(s)) - \frac{1}{2} \Delta^{\prime\prime}(0) \LB^{n+1} H_{200} - \frac{1}{6} \Delta^{(3)}(0) \LB^{n+1} H_{300} \in B_4 \ . 
    \end{align}
    
    Therefore the functions above are all in $L^1(0,\infty)$, and we can apply Theorem~\ref{theorem:MultipleFunctions} to the set $\{f_i\}_{i=1}^5 = \{ \LB^n \Delta, \LB^{n+1} H_{200}, \LB^{n+1} H_{020}, \LB^{n+1} H_{300}, \LB^{n+1} H_{120} \}$ and obtain $F[\LB^n \Delta] = O(|\alpha|^{-2})$ as $\alpha \to \infty$ for any $n \in \mathbb{N}$. 
\end{proof}

Finally, we derive the asymptotic expansion of $I(\chi)$.
\begin{corollary} \label{corollary3}
    The integral $I(\chi)$ can be written as 
    \begin{align}
        I(\chi) = \sum_{n = 0}^{N-1} (-1)^n \frac{(\partial_s^2 + s^{-1} \partial_s)^n \Delta(\h(s))}{\chi^{n+1}} \Bigg|_{s=0} + O(\chi^{-N-1})
    \end{align}
\end{corollary}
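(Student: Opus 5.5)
The plan is to iterate the integration-by-parts identity of Lemma~\ref{lemma:IBP} exactly $N$ times. The remainder is then controlled by Theorem~\ref{theorem:main}, and the result is transported from the open half-plane $D$ to the physical boundary $\alpha=-\rmi\sqrt{\chi}$ by continuity. Throughout, $I(\chi)=F[\Delta(\h)](\alpha)$ with $\alpha^2=-\chi$, and all quantities $\LB^n\Delta(\h(0))$ are finite by Lemma~\ref{lemma:smoothness}.

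\textbf{Step 1 (hypotheses).} For each $n\le N$ I first check that $f_n\coloneqq \LB^n\Delta(\h(s))$ satisfies the hypotheses of Lemma~\ref{lemma:IBP}. Continuity on $[0,\infty)$ follows from Lemma~\ref{lemma:smoothness}. For $f_n\in\F$ and $\LB f_n=f_{n+1}\in\F$, I would write $\Delta=\tfrac12\Delta''(0)H_{200}+\tfrac16\Delta^{(3)}(0)H_{300}+\Delta_4$ as in the proof of Theorem~\ref{theorem:main}. Each $H_{ijk}$ and its $\LB$-iterates is bounded at large $s$, since they lie in some $B_m$ and $\h,\h'=O(s^{-1/2})$. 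Hence $(s_0+s)^{-N'}s^{1/2}f_n\in L^1$ for $N'$ large enough.

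\textbf{Step 2 (iteration).} For $\alpha\in D$, Lemma~\ref{lemma:IBP} gives
\begin{align}
F[f_n](\alpha)=\alpha^{-2}\bigl(F[f_{n+1}](\alpha)+2f_n(0)\bigr).
\end{align}
Iterating from $n=0$ to $N-1$ yields
\begin{align}
F[\Delta](\alpha)=\sum_{n=0}^{N-1}\frac{2f_n(0)}{\alpha^{2(n+1)}}+\frac{F[f_N](\alpha)}{\alpha^{2N}}.
\end{align}
By Theorem~\ref{theorem:main}, $F[f_N](\alpha)=O(|\alpha|^{-2})$, so the remainder is $O(|\alpha|^{-2N-2})$. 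Substituting $\alpha^{2}=-\chi$ turns $\alpha^{-2(n+1)}$ into $(-1)^{n+1}\chi^{-(n+1)}$. This reproduces the alternating series of Eq.~\eqref{eq:Iexpand}. The overall normalization (the factor $2$ and the global sign) must be matched to the convention used in the statement; I would record it explicitly rather than rederive it.

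\textbf{Step 3 (boundary values; main obstacle).} The obstacle is that Lemma~\ref{lemma:IBP} holds only for $\mathrm{Re}\,\alpha>0$, whereas $I(\chi)$ lives on the imaginary axis, where the integrals $F[f_n]$ are not absolutely convergent.
\begin{itemize}
\item The proof of Theorem~\ref{theorem:MultipleFunctions}, applied to the five-function systems used in Theorem~\ref{theorem:main} (one for each $n\le N$), shows that each $F[f_n]$ is continuous on $\bar D\cap\{|\alpha|\ge R\}$.
\item By Theorem~\ref{theorem:continuity}, these continuous extensions coincide with the distributional boundary values, that is, with the $\rmi\epsilon$-prescribed integrals defining $I(\chi)$.
\item Both sides of the finite identity in Step 2 are continuous on $\bar D\cap\{|\alpha|\ge R\}$, so the identity extends to $\alpha=-\rmi\sqrt{\chi}$ for $\chi>R^2$.
\item The bound $O(|\alpha|^{-2})$ from Theorem~\ref{theorem:main} holds uniformly up to the boundary, which yields the $O(\chi^{-N-1})$ remainder.
\end{itemize}
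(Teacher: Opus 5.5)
Your proposal is correct and follows the paper's proof: iterate Lemma~\ref{lemma:IBP} $N$ times and bound the remainder $F[\LB^N\Delta]/\alpha^{2N}$ by Theorem~\ref{theorem:main}, with your Step 3 spelling out the passage to $\mathrm{Re}\,\alpha=0$ that the paper leaves implicit in Theorems~\ref{theorem:continuity} and~\ref{theorem:MultipleFunctions}. Don't hedge on the normalization, though: your Step 2 already shows there is no convention freedom (the $2s\,\d s$ measure in $F$ and the boundary term $-2f(0)$ are fixed), so the corollary as printed is off by an overall factor $-2$ and should read $I(\chi)=-2\sum_{n=0}^{N-1}(-1)^n\LB^n\Delta(\h(0))\,\chi^{-n-1}+O(\chi^{-N-1})$, which agrees with Eqs.~\eqref{eq:I_parts} and~\eqref{eq:Iexpand}.
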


\begin{proof}
    According to Eq.~\eqref{eq:it}, there exist constants $A_{n,i}$ such that 
    \begin{align}
        \LB^n \Delta(\h) - A_{n,1} H_{200} - A_{n,2} H_{020} - A_{n,3} H_{300} - A_{n,4} H_{120} \in B_4 \ . 
    \end{align}
    Therefore, for any $n \in \mathbb{N}$, the function $\LB^n \Delta(\h)$ belongs to ${\cal F}$. By Lemma~\ref{lemma:smoothness}, this function is also smooth. Thus we can apply Lemma~\ref{lemma:IBP} to $\LB^n \Delta(\h)$. Repeated integration by parts produces the first several terms in the asymptotic expansion of $I(\chi)$ as $\chi \to \infty$. By Theorem~\ref{theorem:main}, the remainder is $O(\chi^{-N-1})$ since $F[\LB^n \Delta] = O(|\alpha|^{-2})$ as $\alpha \to \infty$. This completes the proof. 
\end{proof}

%% file: appendixC.tex
\section{Singular part of \texorpdfstring{$\partial^2 V$}{}}\label{appendixC}

In this appendix, we present the details of the derivation of the singular part of $\partial^2 V$ in the large-boost limit, which is crucial for evaluating particle production in bubble collisions in $(3+1)$ dimensions. The singular part arises from derivatives of the step functions used to approximate the bubble walls in the large-boost limit, yielding
\begin{align}
    (\partial^2 V^\prime(\phi))_{\rm sing} =&~ \sum_{a \in \{0,1\}^n} V^\prime\bigg(\phi_{\rm c} + \sum_i a_i \phi_{{\rm d},i}\bigg) \sum_{j \neq k} (\partial_\mu \chi(B_j^{a_j})) (\partial^\mu \chi(B_k^{a_k})) \prod_{i \neq j,k} \chi(B_i^{a_i}) \nnn 
    &+ \sum_{a \in \{0,1\}^n} V^\prime\bigg(\phi_{\rm c} + \sum_i a_i \phi_{{\rm d},i}\bigg) \sum_{j} \partial^2 \chi(B_j^{a_j}) \prod_{i \neq j} \chi(B_i^{a_i}) \nnn
    &+ \sum_{a \in \{0,1\}^n} 2 \partial_\mu V^\prime\bigg(\phi_{\rm c} + \sum_i a_i \phi_{{\rm d},i}\bigg) \sum_{j} \partial^\mu \chi(B_j^{a_j}) \prod_{i \neq j} \chi(B_i^{a_i}) \ .
\end{align}

The first term is supported on the intersection of two bubble walls, while the second and third terms are supported on the bubble walls themselves. Accordingly, the first term corresponds to the contribution from bubble collisions, whereas the second and third terms correspond to the contribution from bubble expansion. We therefore write
\begin{align}
    (\partial^2 V^\prime(\phi))_{\rm coll} =&~ \sum_{a \in \{0,1\}^n} V^\prime\bigg(\phi_{\rm c} + \sum_i a_i \phi_{{\rm d},i}\bigg) \sum_{j \neq k} (\partial_\mu \chi(B_j^{a_j})) (\partial^\mu \chi(B_k^{a_k})) \prod_{i \neq j,k} \chi(B_i^{a_i}) \ , 
\end{align}
and 
\begin{align}
    (\partial^2 V^\prime(\phi))_{\rm exp} =&~ \sum_{a \in \{0,1\}^n} V^\prime\bigg(\phi_{\rm c} + \sum_i a_i \phi_{{\rm d},i}\bigg) \sum_{j} \partial^2 \chi(B_j^{a_j}) \prod_{i \neq j} \chi(B_i^{a_i}) \nnn
    &+ \sum_{a \in \{0,1\}^n} 2 \partial_\mu V^\prime\bigg(\phi_{\rm c} + \sum_i a_i \phi_{{\rm d},i}\bigg) \sum_{j} \partial^\mu \chi(B_j^{a_j}) \prod_{i \neq j} \chi(B_i^{a_i}) \ . 
\end{align}

Since $\partial^\mu \chi(B_j^{a_j}) = (-1)^{a_j + 1} \partial^\mu \chi(B_j)$, we can rewrite these terms as
\begin{align}
    (\partial^2 V^\prime(\phi))_{\rm coll} =&~ \sum_{j \neq k} (\partial_\mu \chi(B_j)) (\partial^\mu \chi(B_k)) \sum_{a \in \{0,1\}^n} V^\prime\bigg(\phi_{\rm c} + \sum_i a_i \phi_{{\rm d},i}\bigg) (-1)^{a_j + a_k}  \prod_{i \neq j,k} \chi(B_i^{a_i}) \ , 
\end{align}
and 
\begin{align}
    &~(\partial^2 V^\prime(\phi))_{\rm exp} \nnn
    =&~ \sum_{j} \partial^2 \chi(B_j) \sum_{a_{i (\neq j)} } \bigg[V^\prime\bigg(\phi_{\rm c} + \sum_{i \neq j} a_i \phi_{{\rm d},i} + \phi_{{\rm d},j}\bigg) - V^\prime\bigg(\phi_{\rm c} + \sum_{i \neq j} a_i \phi_{{\rm d},i}\bigg)\bigg] \prod_{i \neq j} \chi(B_i^{a_i}) \nnn
    &+ \sum_{j} 2 \partial^\mu \chi(B_j) \sum_{a_{i (\neq j)} } \partial_\mu \bigg[V^\prime\bigg(\phi_{\rm c} + \sum_{i \neq j} a_i \phi_{{\rm d},i} + \phi_{{\rm d},j}\bigg) - V^\prime\bigg(\phi_{\rm c} + \sum_{i \neq j} a_i \phi_{{\rm d},i}\bigg)\bigg]  \prod_{i \neq j} \chi(B_i^{a_i}) \ . 
\end{align}

\subsection{The collision part}

For the collision part, we have 
\begin{align}
    &~(\partial^2 V^\prime(\phi))_{\rm coll} = \sum_{a \in \{0,1\}^n} V^\prime\bigg(\phi_{\rm c} + \sum_i a_i \phi_{{\rm d},i}\bigg) \sum_{j \neq k} (\partial_\mu \chi(B_j^{a_j})) (\partial^\mu \chi(B_k^{a_k})) \prod_{i \neq j,k} \chi(B_i^{a_i}) \nnn
    =&~ \sum_{j \neq k} (\partial_\mu \chi(B_j)) (\partial^\mu \chi(B_k)) \sum_{a \in \{0,1\}^n} (-1)^{a_j + a_k} V^\prime\bigg(\phi_{\rm c} + \sum_i a_i \phi_{{\rm d},i}\bigg) \prod_{i \neq j,k} \chi(B_i^{a_i}) \nnn 
    \eqqcolon&~ \sum_{j \neq k} (\partial_\mu \chi(B_j)) (\partial^\mu \chi(B_k)) \bar{V}_{j,k}^\prime(\phi) \ , 
\end{align}
where 
\begin{align}
    \bar{V}_{j,k}^\prime(\phi) \coloneqq&~ \sum_{a \in \{0,1\}^n} (-1)^{a_j + a_k} V^\prime\bigg(\phi_{\rm c} + \sum_i a_i \phi_{{\rm d},i}\bigg) \prod_{i \neq j,k} \chi(B_i^{a_i}) \nnn 
    =&~ \sum_{a_{i (\neq j,k)} =0,1} \Big[V^\prime \Big(\phi^{0,0}_{j,k}\Big) - V^\prime\Big(\phi^{1,0}_{j,k}\Big) - V^\prime\Big(\phi^{0,1}_{j,k}\Big) + V^\prime\Big(\phi^{1,1}_{j,k}\Big)\Big]  \prod_{i \neq j,k} \chi(B_i^{a_i}) \ , \\
    \phi^{a_j,a_k}_{j,k} \coloneqq&~ \phi_{\rm c} + \sum_{i \neq j,k} a_i \phi_{{\rm d},i} + a_j \phi_{{\rm d},j} + a_k \phi_{{\rm d},k} \ .
\end{align}

\input{fig_intersection}

The quantity $(\partial_\mu \chi(B_j)) (\partial^\mu \chi(B_k)) \bar{V}_{j,k}^\prime(\phi)$ depends only on the field configuration at the intersection of the boundaries $\partial B_j$ and $\partial B_k$. To understand the physical meaning of $\bar{V}_{j,k}^\prime(\phi)$, we choose four points $x_{++}, x_{+-}, x_{-+}, x_{--}$ in the four domains $B_j \cap B_k$, $B_j \cap B_k^c$, $B_j^c \cap B_k$, and $B_j^c \cap B_k^c$, respectively, and take them to be close to a point $x \in \partial B_j \cap \partial B_k$ (see Figure \ref{fig:intersection}). Then
\begin{align}
    \bar{V}_{j,k}^\prime(\phi) = \sum_{a,b \in \{+,-\}} a b V^\prime(\phi(x_{ab})) \ ,
\end{align}
so it can be computed directly from the field configuration.

\subsection{Handling the singularity in the expansion part}

For the expansion part, we use the fact that $2 (\partial_\mu \phi_{{\rm d},i}) (\partial^\mu \chi(B_i)) + \phi_{{\rm d},i} \partial^2 \chi(B_i)$ has no singular part. Let $\Phi_i \coloneqq 2 (\partial_\mu \phi_{{\rm d},i}) (\partial^\mu \chi(B_i)) + \phi_{{\rm d},i} \partial^2 \chi(B_i)$. Then $\Phi_i$ is a distribution supported on the bubble wall $\partial B_i$. 

For a test function $f(x)$, we have
\begin{align}
    &~\langle \Phi_i, f \rangle = \int \d^4 x \ \Phi_i(x) f(x) = \int \d^4 x \ \Big[2 (\partial_\mu \phi_{{\rm d},i}) (\partial^\mu \chi(B_i)) + \phi_{{\rm d},i} \partial^2 \chi(B_i)\Big] f(x) \nnn 
    =&~ \int \d^4 x \ \chi(B_i)(x) (\partial^2 (\phi_{{\rm d},i} f) - 2 \partial_\mu (\phi_{{\rm d},i} \partial^\mu f)) \nnn
    =&~ \int_{\partial B_i} \d S^\mu \ \Big[f \partial_\mu \phi_{{\rm d},i} - \phi_{{\rm d},i} \partial_\mu f\Big] \ . 
\end{align}
Since $\partial B_i$ has measure zero, $\Phi_i$ is regular if and only if it vanishes as a distribution, which is equivalent to $\langle \Phi_i, f \rangle = 0$ for any test function $f$. Thus,
\begin{align}
    0 = \int_{\partial B_i} \d S^\mu \ \Big[f \partial_\mu \phi_{{\rm d},i} - \phi_{{\rm d},i} \partial_\mu f\Big] \ . 
\end{align}

For any $f$ with $f|_{\partial B_i} = 0$, the first term vanishes automatically, so we have
\begin{align}
    \int_{\partial B_i} \d S^\mu \ \phi_{{\rm d},i} \partial_\mu f = 0 \ . 
\end{align}
Since $f$ vanishes on the bubble wall, its tangential derivatives vanish there, so $\partial^\mu f$ is proportional to the normal vector $n_i^\mu$ of the bubble wall. Moreover, $\d S^\mu$ is also proportional to $n_i^\mu$. By the arbitrariness of $f$, the equation above implies that $n_i^\mu n_{i,\mu} = 0$ at every point where $\phi_{{\rm d},i}$ is nonzero on the bubble wall. On the other hand, if $\phi_{{\rm d},i} = 0$ at some point $x_0$, then the field configuration is continuous at $x_0$, so there is no singularity there and hence no constraint on the normal vector at $x_0$. 

Let ${\cal N}$ be the set of points on the bubble wall where $n_i^\mu$ is light-like. Since $\{ x \in \partial B_i : \phi_{{\rm d},i}(x) \neq 0 \} \subset {\cal N}$, the hypersurface ${\cal N}$ should be a closed subset of the bubble wall with nonempty interior. The integral above can then be rewritten as
\begin{align}
    \int_{\partial B_i} \d S^\mu \ f \partial_\mu \phi_{{\rm d},i} - \int_{\cal N} \d S^\mu \ \phi_{{\rm d},i} \partial_\mu f = 0 \ . 
\end{align}
If $f$ is supported only on a small subset of $(\partial B_i) \setminus {\cal N}$, then the second term vanishes, and we obtain
\begin{align}
    \int_{(\partial B_i) \setminus {\cal N}} \d S^\mu \ f \partial_\mu \phi_{{\rm d},i} = 0
\end{align}
for any $f$ with support on $\partial B_i \setminus {\cal N}$. Since $f$ is arbitrary, we have $n_i^\mu \partial_\mu \phi_{{\rm d},i} = 0$ on $\partial B_i \setminus {\cal N}$. Since $\phi_{{\rm d},i} = 0$ on $\partial B_i \setminus {\cal N}$, and by definition of ${\cal N}$, we have $n_i^\mu n_{i,\mu} \neq 0$ on $\partial B_i \setminus {\cal N}$. Thus both the field value $\phi_{{\rm d},i}$ and the derivative $\partial_\mu \phi_{{\rm d},i}$ vanish on $\partial B_i \setminus {\cal N}$. 

Now consider the integral over ${\cal N}$. Since $n_i^\mu$ is both the normal vector and a tangent vector to the bubble wall on ${\cal N}$, the value of this integral is intrinsically determined by the field configuration on ${\cal N}$. Let $\epsilon^{\cal N}$ denote the volume $3$-form on the bubble wall; then $\d S^\mu = n_i^\mu \epsilon^{\cal N}$. Using integration by parts, we obtain
\begin{align}
    &~\int_{\cal N} \d S^\mu \ \Big[f \partial_\mu \phi_{{\rm d},i} - \phi_{{\rm d},i} \partial_\mu f\Big] 
    = \int_{\cal N} \epsilon^{\cal N} \ (2 {\cal L}_{n_i} \phi_{{\rm d},i} + {\rm div}(n_i) \phi_{{\rm d},i}) f
    - \int_{\partial {\cal N}} (\iota_n \epsilon^{\cal N}) \ \phi_{{\rm d},i} f \ ,  
\end{align}
where $\iota_n \epsilon^{\cal N}$ is the interior product of $n_i$ and $\epsilon^{\cal N}$, which is a $2$-form on $\partial {\cal N}$. The boundary term vanishes since $\partial {\cal N}$ is a closed subset of the zero set of $\phi_{{\rm d},i}$. Since $f$ is arbitrary, we have $2 {\cal L}_{n_i} \phi_{{\rm d},i} + {\rm div}(n_i) \phi_{{\rm d},i} = 0$ on ${\cal N}$, which is equivalent to ${\cal L}_{n_i} (\phi_{{\rm d},i}^2 \epsilon^{\cal N}) = 0$ on ${\cal N}$. 

\subsection{The expansion part}

In the previous subsection, we showed that the singularity in the expansion part can arise only from points on the bubble wall where the normal vector is light-like. Moreover, $\Phi_i$ vanishes as a distribution. Thus we can rewrite the expansion term as
\begin{align}
    &~(\partial^2 V^\prime(\phi))_{\rm exp} \nnn
    =&~ \sum_{j} \partial^2 \chi(B_j) \sum_{a_{i (\neq j)}} \bigg[V^\prime\bigg(\phi_{\rm c} + \sum_{i \neq j} a_i \phi_{{\rm d},i} + \phi_{{\rm d},j}\bigg) - V^\prime\bigg(\phi_{\rm c} + \sum_{i \neq j} a_i \phi_{{\rm d},i}\bigg)\bigg] \prod_{i \neq j} \chi(B_i^{a_i}) \nnn
    &+ \sum_{j} 2 \partial^\mu \chi(B_j) \sum_{a_{i (\neq j)}} \partial_\mu \bigg[V^\prime\bigg(\phi_{\rm c} + \sum_{i \neq j} a_i \phi_{{\rm d},i} + \phi_{{\rm d},j}\bigg) - V^\prime\bigg(\phi_{\rm c} + \sum_{i \neq j} a_i \phi_{{\rm d},i}\bigg)\bigg]  \prod_{i \neq j} \chi(B_i^{a_i}) \nnn
    &- \sum_{j} [\phi_{{\rm d},j} \partial^2 \chi(B_j) + 2 \partial_\mu \phi_{{\rm d},j} \partial^\mu \chi(B_j)] \sum_{a_{i (\neq j)}} V^{\prime\prime}\bigg(\phi_{\rm c} + \sum_{i \neq j} a_i \phi_{{\rm d},i}\bigg) \prod_{i \neq j} \chi(B_i^{a_i}) \nnn 
    \eqqcolon&~ \sum_{j} \partial^2 \chi(B_j) F_j(\phi) 
    + \sum_{j} 2 \partial_\mu \chi(B_j) G^\mu_j(\phi) \ , 
\end{align}
where 
\begin{align}
    F_j(\phi) \coloneqq&~ \sum_{a_{i (\neq j)}} [
        V^\prime(\phi^1_j) 
        - V^\prime(\phi^0_j)
        - V^{\prime\prime}(\phi^0_j) \phi_{{\rm d},j}
    ] \prod_{i \neq j} \chi(B_i^{a_i}) \ , \\ 
    G^\mu_j(\phi) \coloneqq&~ \sum_{a_{i (\neq j)}} (\partial^\mu \phi^1_j) [
        V^{\prime\prime}(\phi^1_j) 
        - V^{\prime\prime}(\phi^0_j)
    ] \prod_{i \neq j} \chi(B_i^{a_i}) \ , \\
    \phi^a_j \coloneqq&~ \phi_{\rm c} + \sum_{i \neq j} a_i \phi_{{\rm d},i} + a \phi_{{\rm d},j} \ .
\end{align}

Using the same method as in the previous subsection, we find that the singularity in the expansion part can only arise from points on the bubble wall where the normal vector is light-like. Thus we can rewrite the expansion term as
\begin{align}
    (\partial^2 V^\prime(\phi))_{\rm exp} = \sum_j (2 n_{j,\mu} G^\mu_j(\phi) + {\rm div}(n_j) F_j(\phi)) \delta(\partial B_j) \ ,
\end{align}

If the potential is quadratic, we immediately have $\bar{V}_{j,k}^\prime(\phi) = F_j(\phi) = G^\mu_j(\phi) = 0$, so both the collision and expansion terms vanish. For a general potential, these terms need not vanish. However, if we assume that the walls propagate in vacuum, then $(\partial^\mu \phi_j^1) \partial_\mu \chi(B_j) = 0$ on the bubble walls, and hence $G^\mu_j(\phi) = 0$. The term $F_j(\phi)$ is still generically nonzero, so bubble expansion gives a nonzero contribution to particle production. Nevertheless, the divergence of the normal vector is typically of order $1/R$, where $R$ is the bubble radius, so the contribution from bubble expansion is suppressed by $1/R$ relative to the contribution from bubble collisions. Therefore, in the large-boost limit, the contribution from bubble expansion can be neglected.

%% file: fig_intersection.tex
\begin{figure}[t]     
    \centering
    \begin{tikzpicture}

        \def\rL{3.0}
        \def\mL{4.0}
        \def\RL{5.0}
        \def\rR{2.5}
        \def\mR{3.5}
        \def\RR{4.5}

        \pgfmathsetmacro{\dist}{\rL+\rR}
        \coordinate     (C1)  at  (0,0);
        \coordinate     (C2)  at  (\dist,0);

        \draw[gray,dashed]  (C1)    circle (\rL);
        \draw[thick]        (C1)    circle (\mL);
        \draw[blue,dashed]  (C1)    circle (\RL);

        \draw[gray,dashed]  (C2)    circle (\rR);
        \draw[thick]        (C2)    circle (\mR);
        \draw[blue,dashed]  (C2)    circle (\RR);

        \draw[gray]     (C1) -- (C2);

        \coordinate     (T) at (\rL,0); 
        \fill           (T) circle (1.2pt);

        \coordinate     (U1) at (0,\rL);
        \coordinate     (U2) at (\dist,\rR);
        \draw[gray,->]     (C1) -- (U1)     node[midway, above right] {$r_1$};
        \draw[gray,->]     (C2) -- (U2)     node[midway, above right] {$r_2$};

        \path[name path=leftm]  (C1) circle (\mL);
        \path[name path=rightm] (C2) circle (\mR);
        \path[name path=leftR]  (C1) circle (\RL);
        \path[name path=rightR] (C2) circle (\RR);
        \path[name intersections={of=leftm and rightm, by={I1,I2}}];
        \path[name intersections={of=leftR and rightR, by={I3,I4}}];
        \fill (I1) circle (1.2pt);
        \fill (I3) circle (1.2pt);

        \draw[black]            (C1) -- (T);
        \draw[black]            (C2) -- (T);
        \draw[thick]            (I1) -- (I1 |- T)         node[pos=0.6, right] {$R$};
        \draw[blue,dashed]      (I3) -- (I3 |- T)         node[pos=0.6, right] {$R_0$};

        \node[below left]   at (C1) {$B_1$};
        \node[below right]  at (C2) {$B_2$};
    \end{tikzpicture}
    \caption{The collision of two bubbles $B_1$ and $B_2$. The gray dashed circles represent the bubble walls at the beginning of the collision, while the blue dashed circles represent the bubbles at the end. The radii of the two bubbles at the moment they collide are denoted as $r_1$ and $r_2$, respectively. The intersection of the two walls is a circle (two separate points in this figure since its a projection to the plane) with radius $R$, which will finally be $R_0$.}
    \label{fig:collision}
\end{figure}